\pdfoutput=1
\documentclass[11pt]{article}
\usepackage[T1]{fontenc}
\usepackage{geometry} \geometry{margin=1in}
\usepackage[english]{babel}
\usepackage{cite}
\usepackage{amsmath}
\usepackage{amsthm}
\usepackage{amsfonts}
\usepackage{amssymb}
\usepackage{xspace} 
\usepackage{mathtools} 
\usepackage{braket} 
\usepackage[linesnumbered,vlined,ruled]{algorithm2e} 
\usepackage{tikz} 
\usepackage{thm-restate} 
\usepackage{multirow} 
\usepackage{hyperref} 
\usepackage[capitalise]{cleveref} 

\graphicspath{ {./images/} }

\usetikzlibrary{positioning} 

\newtheorem{theorem}{Theorem}[section]
\newtheorem{claim}[theorem]{Claim}
\newtheorem{lemma}[theorem]{Lemma}

\newtheorem{definition}[theorem]{Definition}

\newenvironment{claimproof}[1]{\begin{proof}}{\end{proof}}

\crefalias{AlgoLine}{line}
\makeatletter
\let\cref@old@stepcounter\stepcounter
\def\stepcounter#1{%
  \cref@old@stepcounter{#1}%
  \cref@constructprefix{#1}{\cref@result}%
  \@ifundefined{cref@#1@alias}%
    {\def\@tempa{#1}}%
    {\def\@tempa{\csname cref@#1@alias\endcsname}}%
  \protected@edef\cref@currentlabel{%
    [\@tempa][\arabic{#1}][\cref@result]%
    \csname p@#1\endcsname\csname the#1\endcsname}}
\makeatother

\newcommand{\Z}{\mathbb{Z}}
\newcommand{\N}{\mathbb{N}}

\DeclarePairedDelimiter{\floor}{\lfloor}{\rfloor}
\DeclarePairedDelimiter{\ceil}{\lceil}{\rceil}

\newcommand{\hybrid}{\ensuremath{\mathsf{Hybrid}}\xspace}
\newcommand{\ncc}{\ensuremath{\mathsf{Node\text{-}Capacitated\ Clique }}\xspace}
\newcommand{\congest}{\ensuremath{\mathsf{CONGEST}}\xspace}
\newcommand{\local}{\ensuremath{\mathsf{LOCAL}}\xspace}
\newcommand{\clique}{\ensuremath{\mathsf{Congested\ Clique}}\xspace}
\newcommand{\bcc}{\ensuremath{\mathsf{Broadcast\ Congested\ Clique}}\xspace}

\newcommand{\oracle}{\ensuremath{\mathsf{Oracle}}\xspace}
\newcommand{\tiered}{\ensuremath{\mathsf{Tiered\ Oracles}}\xspace}

\newcommand{\bigO}[1]{\ensuremath{{O}\left(#1\right)}}
\newcommand{\bigOmega}[1]{\ensuremath{{\Omega}\left(#1\right)}}

\newcommand{\tildeBigO}[1]{\ensuremath{{\tilde{O}}(#1)}}
\newcommand{\tildeBigOmega}[1]{\ensuremath{{\tilde{\Omega}}(#1)}}
\newcommand{\tildeTheta}[1]{\ensuremath{{\tilde{\Theta}}(#1)}}

\newcommand{\interval}[1]{\ensuremath{\left[#1\right]}}
\newcommand{\size}[1]{\ensuremath{\left|#1\right|}}

\makeatletter
\newcommand*{\whp}{%
    \@ifnextchar{.}%
        {w.h.p}%
        {w.h.p.\@\xspace}%
}
\makeatother

\newcommand*{\fullrefsingle}[1]{\hyperref[{#1}]{\cref*{#1} (\emph{\nameref*{#1}})}} 
\makeatletter
\newcommand{\fullref}[1]{%
  \@tempswafalse
  \@for\next:=#1\do
    {\if@tempswa\ and \else\@tempswatrue\fi\fullrefsingle{\next}}%
}
\makeatother

\newenvironment{proofof}[1]{\begin{proof}[Proof of #1]}{\end{proof}}

\SetKwFunction{TokenDissemination}{Token-Dissemination}
\SetKwFunction{ReassignSkeletons}{Reassign-Skeletons}

\tikzset{main node/.style={circle,fill=black!20,draw,minimum size=1cm,inner sep=0pt}}
\tikzset{block node/.style={rectangle,fill=white!20,minimum height=1.0cm,text width=2.8cm,align=center,inner sep=0pt}}
\tikzset{diedge/.style={->}}

\bibliographystyle{plainurl}

\title{Distance Computations in the Hybrid Network Model via Oracle Simulations}

\author{Keren Censor-Hillel
    \and Dean Leitersdorf
    \and Volodymyr Polosukhin
}

\date{Technion\footnote{\{ckeren, leitersdorf, po\}@cs.technion.ac.il}}

\begin{document}

\begin{titlepage}
\maketitle
\thispagestyle{empty}

\begin{abstract}	
The \hybrid network model was introduced in [Augustine et al., SODA '20] for laying down a theoretical foundation for networks which combine two possible modes of communication: One mode allows high-bandwidth communication with neighboring nodes, and the other allows low-bandwidth communication over few long-range connections at a time. This fundamentally abstracts networks such as hybrid data centers, and class-based software-defined networks.

Our technical contribution is a \emph{density-aware} approach that allows us to simulate a set of \emph{oracles} for an overlay skeleton graph over a \hybrid network. 

As applications of our oracle simulations, with additional machinery that we provide, we derive fast algorithms for fundamental distance-related tasks.  One of our core contributions is an algorithm in the \hybrid model for computing \emph{exact} weighted 
shortest paths from $\tilde O(n^{1/3})$ sources which completes in $\tilde O(n^{1/3})$ rounds w.h.p. 
This improves, in both the runtime and the number of sources, upon the algorithm of [Kuhn and Schneider, PODC ’20], which computes shortest paths from a single source in $\tilde O(n^{2/5})$ rounds w.h.p. 

We additionally show a 2-approximation for weighted diameter and a $(1+\epsilon)$-approximation for unweighted diameter, both in $\tilde O(n^{1/3})$ rounds w.h.p., which is comparable to the $\tilde \Omega(n^{1/3})$ lower bound of [Kuhn and Schneider, PODC ’20] for a $(2-\epsilon)$-approximation for  weighted diameter and an exact unweighted diameter. 
We also provide fast 
distance \emph{approximations} from multiple sources and fast approximations for eccentricities. 
\end{abstract}
\end{titlepage}

\tableofcontents
\thispagestyle{empty} 
\newpage

\setcounter{page}{1}
\section{Introduction}\label{sec:introduction}

The \hybrid model of computation was recently introduced by Augustine et al.~\cite{AHKSS20}, for abstracting networks which can utilize both high-bandwidth local communication links, as well as very few low-bandwidth global communication channels. This model abstracts fundamental systems, such as  
a combination of device-to-device communication with cellular networks (e.g. 5G) \cite{Asadi2016AnSDR},
wired data centers with wireless links  (hybrid DCNs) \cite{Wang2010CThrough,Cui2011ChannelAI,Huang2013TheAA,Han2015RUSHRA},
and Class-Based Hybrid software-defined networks (SDNs) \cite{Vissicchio2017Opportunities}.

The pioneering works of \cite{AHKSS20, kuhn2020computing, feldmann2020fast} provide fast algorithms for various distance-related tasks in the \hybrid model. At the heart of many of these algorithms lies a framework for using skeleton overlay graphs for computation and approximation of distances, as well as for fast communication. 

In this paper, we  define and show how to efficiently simulate \emph{oracles} over skeleton graphs in the \hybrid model. Using additional machinery that we provide, the implications of our simulations are faster algorithms for distance computations in the \hybrid model. Our oracle models could also be of independent interest, presenting a generic approach which can potentially be applied elsewhere.

\subsection{Our Contributions}
The \hybrid model, which we consider in this paper, abstracts a synchronous network of nodes, where in each round, every node can send and receive 
arbitrarily many messages of $O(\log n)$ bits to/from each of its neighbors (over \emph{local edges}) and an additional $O(\log n)$ messages, in total, to/from any other nodes in the network (over \emph{global edges}). The high bandwidth permissible over the local edges is aligned with previous research in the \hybrid model as well as with the extensively studied \local distributed model.

The main idea which our
results
hinge upon is exploiting an inherent \emph{asymmetry} in the \hybrid model which we observe. This asymmetry allows nodes with \emph{dense} neighborhoods to effectively receive significantly more information. 
To see this, note that every node can use the global edges of the \hybrid model to send and receive a limited number of messages every round. However, since in the next round a node can communicate with its neighborhood in the graph using local edges and share with them the information which it received, 
this implies that a node is able to learn much more information from the entire graph if it is in a more dense neighborhood. Thus, \emph{density-aware} algorithms are inherently useful in the \hybrid model.

We use this asymmetry to simulate \oracle and \tiered models. To capture what an oracle can do, we introduce the \oracle and \tiered models. Roughly speaking, in the \oracle model there is a node $\ell$, the \emph{oracle}, which can receive $\deg(v)$ messages from each node $v$, within \emph{a single round}. In particular, this implies that the oracle can learn the entire communication graph.

We cannot afford to directly simulate the \oracle model as it requires too much communication in the \hybrid model. Instead, we simulate the \oracle model over a \emph{skeleton graph}. Roughly speaking, given an input graph $G$, a skeleton graph is a subset of the nodes of $G$, connected by virtual edges which represent paths in $G$. Skeleton graphs are a common tool for distance computations in various models \cite{lenzen2019distributedDC,AHKSS20,kuhn2020computing,Ullman1990HighprobabilityPT,Roditty2011OnDynamicSPP}, and it has been shown in \cite{AHKSS20,kuhn2020computing} that some distances on the skeleton graph can be efficiently extended to distances on the entire graph in the \hybrid model. 

As a warm-up, we show that a single round of the \oracle model over certain skeleton graphs can be simulated in $\tilde O(n^{1/3})$ rounds, \whp\footnote{As common, \whp indicates a probability that is at least $1-n^{-c}$, for some constant $c>1$.}, in the \hybrid model. Combining this with a simple, constant-round algorithm for exact weighted \emph{single source shortest paths (SSSP)} which we show in the \oracle model, gives the following theorem.

\begin{restatable}[Exact SSSP]{theorem}{exactSSSP}
\label{theorem:exactSSSP}
Given a weighted graph $G=(V, E)$, there is an algorithm in the \hybrid model that computes an exact weighted SSSP in $\tildeBigO{n^{1/3}}$ rounds \whp
\end{restatable}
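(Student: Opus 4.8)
The plan is to establish Theorem~\ref{theorem:exactSSSP} by chaining together three components: (1) the \oracle simulation result (that a single round of the \oracle model over a suitable skeleton graph can be simulated in $\tildeBigO{n^{1/3}}$ rounds \whp in the \hybrid model), (2) a constant-round exact weighted SSSP algorithm in the \oracle model, and (3) the standard machinery for extending skeleton-graph distances to the full graph $G$. First, I would fix the skeleton graph $S$ by sampling each node of $G$ independently with probability $\tilde\Theta(n^{-1/3})$, so that $\size{S} = \tildeBigO{n^{2/3}}$ \whp. The key structural property I would invoke is that, \whp, every shortest path in $G$ with more than roughly $n^{1/3}$ hops contains a sampled skeleton node; this is the classical hitting-set argument used in prior skeleton-based works \cite{AHKSS20,kuhn2020computing}. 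This ensures that exact distances between skeleton nodes in the weighted skeleton graph (where an edge between two skeleton nodes carries the weight of the corresponding bounded-hop shortest path in $G$) coincide with the true distances in $G$.

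The core of the argument is the \oracle-model SSSP subroutine. In the \oracle model, the oracle $\ell$ can, in a single round, receive $\deg(v)$ messages from every node $v$, and hence learn the entire communication (skeleton) graph, including all edge weights. Once the oracle holds the complete topology, it locally computes exact shortest paths from the source to every node using any standard centralized algorithm (e.g.\ Dijkstra), and then disseminates the resulting distance label to each node in $O(1)$ further rounds. Thus the entire SSSP computation on the skeleton graph costs only a constant number of \oracle rounds. Invoking the warm-up simulation result, each such \oracle round is realized in $\tildeBigO{n^{1/3}}$ \hybrid rounds \whp, so the skeleton-level computation completes in $\tildeBigO{n^{1/3}}$ rounds overall.

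It remains to convert exact skeleton distances into exact distances from the source $s$ to \emph{every} node of $G$. I would first ensure the source $s$ itself is treated as a skeleton node (artificially adding it to $S$), so that the oracle computes $\mathrm{dist}_G(s,u)$ exactly for every skeleton node $u$. For a non-skeleton node $w$, the true distance is $\min_{u \in S} \big(\mathrm{dist}_G(s,u) + \mathrm{dist}^{\le h}_G(u,w)\big)$ taken together with any short $(\le h)$-hop path directly from $s$, where $h = \tildeBigO{n^{1/3}}$ is the hop bound guaranteed by the hitting-set property. Each node $w$ can compute this minimum by running a bounded-depth (depth $h$) weighted relaxation using only local edges, seeded with the distance labels $\mathrm{dist}_G(s,u)$ broadcast from the skeleton nodes; this is exactly the distance-extension primitive established in \cite{AHKSS20,kuhn2020computing} and runs in $\tildeBigO{n^{1/3}}$ rounds. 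Summing the three phases gives the claimed $\tildeBigO{n^{1/3}}$ bound, all holding \whp.

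I expect the main obstacle to be the simulation step rather than the centralized Dijkstra or the final extension, both of which are routine given the cited machinery. Specifically, the delicate part is arguing that a single \oracle round—where the oracle conceptually absorbs $\deg(v)$ messages per node—can genuinely be compressed into $\tildeBigO{n^{1/3}}$ \hybrid rounds. This is where the \emph{density-aware} asymmetry is essential: the oracle cannot pull all information through its own $O(\log n)$ global edges, so the simulation must route the skeleton topology through dense neighborhoods, letting high-degree nodes absorb proportionally more data via local edges before forwarding a balanced share over global edges. Establishing that the load on global edges stays within $\tildeBigO{n^{1/3}}$ \whp (invoking the sampling bound $\size{S} = \tildeBigO{n^{2/3}}$ together with the asymmetry observation) is the crux, but since it is precisely the warm-up simulation statement we are permitted to assume, the proof of the theorem reduces to assembling these pieces and verifying the hop-bound bookkeeping.
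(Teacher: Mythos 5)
Your proposal is correct and follows essentially the same route as the paper's own proof: sample a skeleton with probability $n^{-1/3}$ (adding the source $s$ to it), run the constant-round \oracle-model SSSP (oracle collects the skeleton topology, computes distances locally, and sends them back), simulate this via the warm-up \oracle simulation in $\tildeBigO{n^{1/3}}$ rounds, and finish with the standard distance-extension step over $h$-hop neighborhoods. The only differences are presentational (you inline the oracle-model SSSP subroutine and the extension argument, which the paper factors out as \Cref{lemma:oracle:exactSSSP} and \Cref{thr:computeMSSPFromMSSPOnSkeleton}), so no gap remains.
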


This result should be compared with the previous state-of-the-art algorithms for exact weighted SSSP in $\tilde O(n^{2/5})$ rounds~\cite{kuhn2020computing}, and in $\tilde O(\sqrt{SPD})$ rounds~\cite{AHKSS20}, where $SPD$ is the length of the shortest path diameter. Further, it improves upon the $\tilde O(n^{1/3}/\epsilon^6)$ round algorithm for a ($1+\epsilon$)-approximation of weighted SSSP~\cite{AHKSS20}, in both the runtime and in being exact. We stress that this is a warm up, and later on we extend this result to shortest path distances from $O(n^{1/3})$ sources, instead of a single source, in the same round complexity of $\tilde O(n^{1/3})$.

It is well known that one can approximate the diameter using a solution to SSSP, and so as a byproduct we get the following result. 

\begin{restatable}[$2$-Approx. Weighted Diameter]{corollary}{weightedDiameter}\label{thr:weightedDiameter}
    There is an algorithm in the \hybrid model that computes a $2$-approximation of weighted diameter in $\tildeBigO{n^{1/3}}$ rounds \whp
\end{restatable}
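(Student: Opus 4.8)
The plan is to derive this corollary directly from the exact SSSP algorithm of \cref{theorem:exactSSSP} using the textbook eccentricity-based $2$-approximation of the diameter, so that essentially all the work has already been done. First I would fix a single source node $s$ (for concreteness, the node of minimum identifier, which all nodes can agree on, though any fixed node works) and invoke the algorithm of \cref{theorem:exactSSSP} with source $s$. By that theorem, this completes in \tildeBigO{n^{1/3}} rounds \whp and leaves every node $v$ holding its exact weighted distance $d(s,v)$.

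Next I would set $\hat{D} = \max_{v \in V} d(s,v)$, the eccentricity of $s$, and compute it via a standard global aggregation of the maximum over the $n$ locally-held distance values. In the \hybrid model such an aggregation of a single value completes in polylogarithmic rounds \whp using the global edges (building a low-depth aggregation structure), and this cost is subsumed by the \tildeBigO{n^{1/3}} budget of the SSSP computation. Correctness then follows from the triangle inequality: since $s$ is itself a vertex, $\hat{D} = \max_v d(s,v) \le \max_{u,v} d(u,v) = D$, where $D$ is the weighted diameter; and for any pair $u,v$ we have $d(u,v) \le d(u,s) + d(s,v) \le 2\hat{D}$, using symmetry of distances in the undirected weighted graph, whence $D \le 2\hat{D}$. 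Combining the two bounds gives $\hat{D} \le D \le 2\hat{D}$, so $\hat{D}$ (or the upper estimate $2\hat{D}$) is a $2$-approximation of the weighted diameter.

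The total round complexity is dominated by the single SSSP call, i.e. \tildeBigO{n^{1/3}} \whp, which proves the claim. I do not expect any genuine obstacle here: the entire quantitative content is carried by \cref{theorem:exactSSSP}, and the reduction is elementary. The only step requiring a moment's care is the global aggregation of the maximum eccentricity value, but this is a routine primitive in the \hybrid model whose polylogarithmic cost is dominated by the SSSP cost, and the triangle-inequality argument is standard.
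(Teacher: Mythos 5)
Your proposal is correct and follows essentially the same route as the paper: the paper also runs exact SSSP from an arbitrary node via \cref{theorem:exactSSSP}, aggregates the maximum distance (the source's eccentricity) in $\tildeBigO{1}$ rounds using \cref{thr:aggregateAndBroadcast}, and applies the identical triangle-inequality argument, merely packaging the reduction as a standalone claim (\cref{thr:ssspToDiameter}, ``Diameter From SSSP'') instantiated with $\alpha=1$. The only cosmetic difference is that under the paper's convention ($D/\alpha\leq\widetilde{D}\leq D$) one should output $\hat{D}$ itself rather than $2\hat{D}$, which your bound $\hat{D}\leq D\leq 2\hat{D}$ already establishes.
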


Notably, $\tilde \Omega(n^{1/3})$ rounds are necessary for a $(2-\epsilon)$-approximation for weighted diameter~\cite{kuhn2020computing}. Our algorithm in \Cref{thr:weightedDiameter} thus raises the interesting open question of whether one can go below this complexity for a 2-approximation.

While efficiently simulating an oracle is powerful, it still does not exploit the full capacity of the \hybrid model. This observation brings us to enhance the \oracle model and introduce the \tiered model, which consists of multiple oracles with varying abilities. In a nutshell, in the \tiered model, in each round every node $v$ can send (the same) $\deg(v)$ messages to all nodes $u$ with $\deg(u)\geq\deg(v)/2$. This basically means that nodes are  bucketed according to degrees and each node is an oracle for all nodes in buckets below it. One can notice that the node with the highest degree in the graph is equivalent to the oracle in the \oracle model, but here, the other nodes in the graph also have some \emph{partial} oracle capabilities.

We show how to simulate the \tiered model over skeleton graphs in the \hybrid model within $\tilde O(n^{1/3})$ rounds. Subsequently, we present an algorithm which solves all pairs shortest paths (APSP) using one round of the \tiered model and $O(\log n)$ rounds of the \clique model\footnote{The \clique is a synchronous distributed model where every two nodes in the graph can exchange messages of $O(\log n)$ bits in every round.}. We then utilize our \tiered model simulation, along with a previously known simulation of the \clique model from \cite{kuhn2020computing}, to simulate the APSP algorithm over skeleton graphs in the \hybrid model. Our efficient computation of APSP over a skeleton graph in the \hybrid model then leads to computing multi-source shortest paths from \emph{random} sources in the \hybrid model. 

Shortest paths from random sources is a crucial stepping stone for our later results. We show that computing shortest path distances from random sources to the entire graph, allows us to subsequently obtain fast algorithms for other distance problems. 
We call the problem of computing distances from sources sampled with probability $n^{x-1}$ i.i.d $n^{x}$-RSSP.

\begin{restatable}[$n^{x}$-RSSP]{theorem}{exactRMSSP}\label{thr:exactRMSSP}
Given a graph $G=(V, E)$, $0 < x < 1$, and a set of nodes $M$ sampled independently with probability $n^{x-1}$, there is an algorithm in the \hybrid model that ensures that every $v \in V$ knows the exact, weighted distance from itself to every node in $M$ within \tildeBigO{n^{1/3}+n^{2x-1}} rounds \whp
\end{restatable}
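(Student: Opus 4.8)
The plan is to reduce $n^{x}$-RSSP to a single APSP computation over a carefully sized skeleton graph and then to extend the resulting skeleton distances to all of $V$ using only hop-bounded local relaxations. First I would fix the skeleton $S$ by sampling each node independently with probability $p = \tilde\Theta(\max(n^{x-1}, n^{-1/3}))$ and \emph{including all the given sources}, so that $M \subseteq S$ and $|S| = \tilde\Theta(s)$ where $s := \max(n^{2/3}, n^{x})$ \whp; the associated hop bound is $h := \tilde\Theta(1/p) = \tilde\Theta(\min(n^{1/3}, n^{1-x}))$. The purpose of this choice is the standard \emph{hitting property}: \whp every shortest path on at least $h$ edges contains a node of $S$, which yields, for every node $v$ and every source $m$,
\[
  d_G(v,m) \;=\; \min_{s \in S}\big(d_{\le h}(v,s) + d_G(s,m)\big),
\]
where $d_{\le h}$ denotes the $h$-hop-bounded distance. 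The value $s = n^{2/3}$ exactly balances the local hop bound $h = \tilde\Theta(n/s)$ against the global APSP cost $s^2/n$, which is the origin of the $n^{1/3}$ term, while forcing $s \ge n^{x}$ in the regime $x > 2/3$ is what produces the $n^{2x-1}$ term.

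The algorithm then runs in three phases. In phase (i) every node executes $h$ rounds of a multi-source bounded Bellman--Ford seeded at all skeleton nodes over the local edges; since local bandwidth is unbounded this costs $\tilde O(h)$ rounds and lets each skeleton node learn $d_{\le h}(s,s')$ to every nearby skeleton node, which are precisely the weights of the skeleton graph $G_S$. In phase (ii) I would invoke the established simulation of APSP over a skeleton graph in the \hybrid model (the \tiered-model simulation composed with the \clique simulation), after which every $s \in S$ knows $d_{G_S}(s,s') = d_G(s,s')$ for all $s' \in S$; in particular, since $M \subseteq S$, each skeleton node knows $d_G(s,m)$ for every source $m$. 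In phase (iii), to extend to all of $V$, every node runs a second $h$-round bounded Bellman--Ford, but now each skeleton node $s$ is seeded with the \emph{vector} $(d_G(s,m))_{m \in M}$ and every node relaxes all $|M|$ coordinates simultaneously; after $h$ rounds each $v$ holds $\min_{s}(d_{\le h}(v,s) + d_G(s,m))$ for every $m$, which by the displayed identity equals $d_G(v,m)$.

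For the round accounting, phases (i) and (iii) each cost $\tilde O(h) \le \tilde O(n^{1/3})$; the local messages in phase (iii) carry $|M| = \tilde O(n^{x})$ coordinates, but these are free over the unbounded-bandwidth local edges and only the round count $h$ matters. Phase (ii) costs $\tilde O(n^{1/3} + s^2/n)$, the first term being the oracle/tiered simulation overhead and the second the cost of moving the $\Theta(s^2)$ skeleton distances through the global network whose aggregate bandwidth is $\tilde\Theta(n)$ per round. Substituting $s = \max(n^{2/3}, n^{x})$ gives $s^2/n = \max(n^{1/3}, n^{2x-1})$, so the whole computation runs in $\tilde O(n^{1/3} + n^{2x-1})$ rounds, and exactness holds \whp by the hitting property together with correctness of the bounded Bellman--Ford relaxations.

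The main obstacle I anticipate is phase (ii) in the regime $x > 2/3$: the APSP-over-skeleton result is naturally phrased for the balanced skeleton of size $\tilde\Theta(n^{2/3})$, and I would need to verify that the \tiered-model and \clique simulations degrade gracefully for a denser skeleton of size $s = \tilde\Theta(n^{x})$, yielding the claimed $\tilde O(n^{1/3} + s^2/n)$ dependence rather than something larger; this amounts to re-examining the density-aware argument when the skeleton is dense. A secondary point to pin down is the soundness of seeding phase (iii) from the APSP output, namely that $M \subseteq S$ genuinely makes every required label $d_G(s,m)$ available at $s$ once all pairs are computed, and that the hitting property is invoked with the matching $h$ both in the construction of $G_S$ and in the final extension.
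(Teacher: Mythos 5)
Your proposal is correct and follows essentially the same route as the paper: pad the sampled sources up to a skeleton of size $\tilde\Theta(\max(n^{2/3},n^{x}))$ containing $M$, run exact APSP on that skeleton via the \tiered/\clique simulations (the paper's \fullref{thr:exactRAPSP}, which is already stated for every constant $0<x<1$, so your anticipated obstacle for $x>2/3$ does not arise), and extend to all of $V$ by $h$-hop local computation (the paper's \fullref{thr:computeMSSPFromMSSPOnSkeleton}, which is exactly your seeded bounded Bellman--Ford). Your round accounting $\tilde O(n^{1/3}+s^2/n)$ with $s=\max(n^{2/3},n^{x})$ matches the paper's $\tilde O(n^{1-x'}+n^{2x'-1})$ bound, so no gap remains.
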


We complement \Cref{thr:exactRMSSP} with a lower bound, following the lines of \cite{AHKSS20,kuhn2020computing}, for approximating distances from many random sources, to any reasonable approximation factor, which tightly matches the upper bound when $x=2/3$. 

\begin{restatable}[Lower Bound Exact Shortest Paths, Sources Sampled i.i.d.]{theorem}{lowerBound}\label{thr:lowerBound}
    Let $p=\bigOmega{\log n/n}$ and $\alpha< \sqrt{n/p}\cdot\log(n)$/2. Any $\alpha$-approximate unweighted algorithm from random sources sampled independently with probability $p$ in the \hybrid network model takes $\bigOmega{\sqrt{p\cdot n}/\log n}$ rounds \whp 
\end{restatable}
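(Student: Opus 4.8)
The plan is to prove this by an information-theoretic capacity argument against a random hard instance, extending the exact-distance lower bounds of \cite{AHKSS20,kuhn2020computing}. The target $\bigOmega{\sqrt{pn}/\log n}$ is the natural interpolation of the known exact-APSP bound: at $p=1$ every vertex is a source and we should recover $\tildeBigOmega{\sqrt n}$, whereas at $p=\bigTheta{\log n/n}$ only $\bigO{\log n}$ sources exist and the task is easy. First I would fix a distribution over inputs in which both the long-range structure of $G$ and the source set $M$ are random: each vertex $v$ is associated with a ``far block'' $B_v$ of $\bigTheta{\sqrt{n/p}}$ vertices that sits at local-graph distance strictly more than $2T$ from $v$, where $T=\smallO{\sqrt{pn}/\log n}$ is the round bound we aim to exclude, and the identity of $B_v$ is part of the random, initially-unknown topology. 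Since sources are sampled i.i.d.\ with probability $p$, block $B_v$ contains $\bigTheta{\sqrt{pn}}$ sources \whp.

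The design goals of the gadget are that, in order to output a correct $\alpha$-approximate distance from $v$ to each source, vertex $v$ must determine which vertices of $B_v$ are sources, an independent block of $\tildeBigOmega{\sqrt{pn}}$ messages. Two distances that must be separated to read one such bit should differ by a multiplicative factor exceeding $2\alpha$; this is achievable because the distance scales in the gadget span a range of $\bigTheta{\sqrt{n/p}\cdot\log n}$, and it is exactly here that the hypothesis $\alpha<\tfrac{1}{2}\sqrt{n/p}\,\log n$ is consumed. Crucially, because the blocks $B_v$ and their source patterns are random and private to each $v$, the $n$ demands are \emph{mutually distinct} and cannot be reconstructed from a single small shared object such as the global source list; and because each $B_v$ lies beyond $v$'s $T$-hop local ball, none of this information reaches $v$ over local edges within $T$ rounds.

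Given this, the counting is clean. The total information the vertices must collectively acquire from outside their local balls is $\tildeBigOmega{n\cdot\sqrt{pn}}$ messages, and---this is the key point---local edges, however high their bandwidth, cannot manufacture it: any information originating outside a vertex's $T$-hop ball must be carried over a global edge at least once before it can reach that vertex. Since each of the $n$ vertices receives only $\bigO{\log n}$ messages per round over global edges, at most $\bigO{n\,T\log n}$ messages cross globally in $T$ rounds. Comparing demand with capacity gives $n\,T\log n=\bigOmega{n\sqrt{pn}}$, i.e.\ $T=\bigOmega{\sqrt{pn}/\log n}$. A Chernoff/union bound shows the intended instance (block sizes, source counts, distance separations) occurs \whp, and a standard Fano/counting step converts ``a constant fraction of the required information has not been received'' into ``some vertex errs with high probability,'' yielding the \whp lower bound.

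The hard part will be the two hypotheses that make the counting legitimate: that each vertex's demand is genuinely \emph{irreducibly global} and \emph{non-redundant}. The danger is that the unbounded local bandwidth, combined with broadcasting, circumvents the global bottleneck---for instance, if the topology were known, every vertex could compute its distances from the single shared source list, which a connected local graph could disseminate with only $\tildeBigOmega{p}$ global load per vertex. Ruling this out is precisely why the long-range structure must be random and private, why the blocks must lie outside the $T$-hop balls, and why the source patterns must be independent across vertices; making all three hold simultaneously, for the full range of $p$, while keeping the gadget a legal unweighted graph with the $\bigTheta{\sqrt{n/p}\log n}$ distance separation required above, is the delicate construction at the heart of the proof and is what pins down the threshold on $\alpha$.
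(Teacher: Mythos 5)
Your high-level framing (a fooling gap of $\bigTheta{\sqrt{n/p}\cdot\log n}$ consuming the hypothesis on $\alpha$, plus an information-versus-capacity count giving $T=\bigOmega{\sqrt{pn}/\log n}$) is in the right spirit, but the construction your count relies on cannot exist, and you yourself defer exactly this point as ``the delicate construction at the heart of the proof.'' Your comparison needs $n$ \emph{mutually independent} per-vertex demands of $\tildeBigOmega{\sqrt{pn}}$ messages each, i.e.\ a joint demand of $\tildeBigOmega{n\sqrt{pn}}$ messages that must cross global edges. Two obstructions. First, your $n$ blocks $B_v$ of size $\bigTheta{\sqrt{n/p}}$ have total size $n\sqrt{n/p}\gg n$, so they must overlap massively in an $n$-vertex graph, while the i.i.d.\ source sampling has total entropy only $\bigO{pn\log n}$ bits; hence the blocks' source patterns cannot supply $n\cdot\sqrt{pn}$ independent messages --- they fall short by a factor of $\sqrt{n/p}$. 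Second, shifting the entropy into the random topology does not rescue the count at your own parameters: since the near/far gap must exceed $\alpha$, which you allow to be as large as $\frac{1}{2}\sqrt{n/p}\log n$, the only information an $\alpha$-approximation is forced to reveal about a far source is a coarse near/far classification, and the triangle inequality makes this classification identical for any two demanders whose mutual distance is below the gap. A set of vertices pairwise farther apart than the gap has size $\bigO{n/(\sqrt{n/p}\log n)}=\bigO{\sqrt{pn}/\log n}$, so at most that many vertices can hold genuinely independent demands, not $n$; even granting each of them the maximal $\tildeBigO{pn}$ bits, your total-capacity comparison yields only $T=\tildeBigOmega{p\sqrt{pn}}$, which is far weaker than the claimed bound for small $p$.

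The paper's proof avoids aggregate counting entirely and uses a \emph{single} demander with a \emph{small local ball}, which is the idea your proposal is missing. A node $a$ sits at the end of a path of length $L=\floor*{\sqrt{pn}/\log n}$ ending at $b$; fooling sets $S_b,S_c$ of size $\floor*{n/4}$ hang off $b$ and off the far endpoint $c$; a reserved set of identifiers is assigned to $S_b$ or $S_c$ uniformly at random, so that after the i.i.d.\ sampling, \whp $\bigOmega{pn}$ sampled identifiers each contribute one unknown bit ($hop(a,u)=L+1$ versus $hop(a,u)=\bigTheta{n}$, a ratio exceeding $\alpha$). The round bound then comes not from the network's total global bandwidth but from the fact that within $T<L$ rounds only the $\leq L$ nodes on $a$'s path can relay information to $a$ over local edges, so all $\bigOmega{pn}$ bits must funnel through $\bigO{L\cdot T\log^2 n}$ bits of global capacity; this is precisely \cref{thr:lowerBoundToLearnRandomVariable}, giving $T=\bigOmega{\min\set{L,\ pn/(L\log^2 n)}}=\bigOmega{\sqrt{pn}/\log n}$. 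Note that the capacity term is $L\cdot T\log^2 n$, not $n\cdot T\log^2 n$: the strength of the bound comes from how \emph{few} nodes can serve the demander, which is why one demander at the end of a long path suffices and no multi-demander independence is needed.
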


We leverage our \emph{near-optimal}  (tight up to polylogarithmic factors) algorithm for shortest paths from a set $M$ of $\tildeBigO{n^{2/3}}$ random sources in order to obtain exact weighted shortest paths from any \emph{given} set $U$ of $O(n^{1/3})$ sources. We achieve this by adapting the behavior of the given fixed source nodes to the density of their neighborhoods, as follows. A source node $s \in U$ in a sparse neighborhood broadcasts the distances to all the random source nodes from $M$ it sees in its neighborhood. A source node $s \in U$ in a dense neighborhood takes control of one of the random sources in $M$ in its neighborhood and uses it as a proxy in order to communicate enough information to all the other nodes in the graph so that they could determine their distances from $s$. We remark that this proxy approach is a \emph{key insight} which we later encapsulate as a general tool in the \hybrid model and may potentially be of independent interest.
Our approach gives the following theorem.

\begin{restatable}[Exact $n^{1/3}$ Sources Shortest Paths]{theorem}{exactMSSP}\label{thr:exactMSSP}
    Given a weighted graph $G=(V, E)$, and a set of sources $U$, such that $|U| = O(n^{1/3})$, there exists an algorithm, at the end of which each $v \in V$ knows its distance from every $s\in U$, which runs in \tildeBigO{n^{1/3}} rounds \whp
\end{restatable}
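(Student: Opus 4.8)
The plan is to reduce exact shortest paths from the fixed set $U$ to the random-source problem of \Cref{thr:exactRMSSP}, and then to deliver the resulting skeleton information to all nodes in a \emph{density-aware} manner. First I would instantiate $n^{x}$-RSSP with $x=2/3$: sample a landmark set $M\subseteq V$ i.i.d.\ with probability $p=n^{-1/3}$, so that $|M|=\tildeBigO{n^{2/3}}$ \whp, and run the algorithm of \Cref{thr:exactRMSSP}, which terminates in $\tildeBigO{n^{1/3}+n^{2\cdot(2/3)-1}}=\tildeBigO{n^{1/3}}$ rounds. Afterwards every $v\in V$ knows $\mathrm{dist}_G(v,m)$ for all $m\in M$; in particular each source $s\in U$ knows its full landmark vector $D_s=(\mathrm{dist}_G(s,m))_{m\in M}$. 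With $h=\Theta(p^{-1}\log n)=\tildeBigO{n^{1/3}}$, the standard hitting-set property guarantees \whp that every shortest path with at least $h$ hops contains a landmark, so for such ``long'' pairs $\mathrm{dist}_G(s,v)=\min_{m\in M}\big(\mathrm{dist}_G(s,m)+\mathrm{dist}_G(m,v)\big)$, while for ``short'' pairs the distance is realized within $h$ hops. Since $v$ already knows every $\mathrm{dist}_G(m,v)$, and since no term $\mathrm{dist}_G(s,m)+\mathrm{dist}_G(m,v)$ can underestimate $\mathrm{dist}_G(s,v)$ by the triangle inequality, it suffices to compute both quantities and take their minimum.

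The short pairs are handled directly over local edges. Because local bandwidth is unbounded, I would run a hop-bounded Bellman--Ford from all sources of $U$ simultaneously for $O(h)=\tildeBigO{n^{1/3}}$ rounds, so that each $v$ learns the $h$-hop-bounded distance to every $s\in U$ with no congestion. The remaining, and central, task is to let every $v$ evaluate the landmark minimum $\min_{m}(\mathrm{dist}_G(s,m)+\mathrm{dist}_G(m,v))$ for each $s$, i.e.\ to disseminate enough of each vector $D_s$. It is enough that $v$ learns $\mathrm{dist}_G(s,m)$ for those $m$ that can be the \emph{first} landmark on some shortest $s$--$v$ path; all of these lie within $h$ hops of $s$, and since every term $\mathrm{dist}_G(s,m)+\mathrm{dist}_G(m,v)$ is at least $\mathrm{dist}_G(s,v)$ while the genuine first landmark of a long path is among them, the minimum over this restricted set is exact.

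Here I would split the sources according to the density of their $h$-neighborhoods. A source $s$ whose $h$-ball is \emph{sparse} sees few landmarks, so it can simply \textbf{broadcast} the pairs $(m,\mathrm{dist}_G(s,m))$ for the landmarks in its ball via token dissemination, and the sparsity bounds the number of such tokens. A source $s$ whose $h$-ball is \emph{dense} may see too many landmarks to broadcast, but density guarantees \whp a landmark very close to $s$, which $s$ appoints as a \emph{proxy}: $s$ offloads its information onto this landmark and thereby attaches itself to the skeleton, inheriting the all-to-all communication power of the simulated \tiered and \clique machinery already used for \Cref{thr:exactRMSSP}. Because there are only $|U|=\bigO{n^{1/3}}$ sources but $\tildeBigO{n^{2/3}}$ landmarks, distinct proxies can be assigned and the per-proxy load kept low, so the skeleton machinery disseminates all dense sources' vectors within the $\tildeBigO{n^{1/3}}$ budget. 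Each $v$ then combines the short-path value from the local phase with the landmark minimum to obtain $\mathrm{dist}_G(s,v)$ exactly, for all $s\in U$.

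I expect the dense case and the global accounting to be the main obstacle. A uniform broadcast of all vectors $D_s$ would cost $\tildeBigO{n}$ tokens --- far above what $\tildeBigO{n^{1/3}}$ rounds allow (roughly $\tildeBigO{n^{2/3}}$ tokens via $\tildeBigO{\sqrt{k}}$-token dissemination) --- so the whole argument rests on the density split keeping the total communication at $\tildeBigO{n^{2/3}}$ tokens: sparse sources contribute few tokens each, and dense sources are rerouted through proxies so their (large) vectors travel over the efficient skeleton overlay rather than being broadcast. The delicate points are (i) formalizing the density threshold so that ``sparse'' genuinely bounds the token count while ``dense'' genuinely guarantees a nearby proxy, (ii) proving the proxy substitution recovers \emph{exact} distances rather than mere upper bounds, and (iii) bounding the proxy load and the dissemination congestion so that everything holds \whp within $\tildeBigO{n^{1/3}}$ rounds.
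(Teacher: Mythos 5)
Your overall strategy matches the paper's: run $n^{2/3}$-RSSP (\Cref{thr:exactRMSSP}) on a random landmark set $M$, split the given sources by how many skeleton nodes lie in their $\tilde{\Theta}(n^{1/3})$-hop balls, let sparse sources broadcast their landmark distances via token dissemination, and let dense sources push their distance vectors through proxy landmarks over the simulated \clique on the skeleton (the paper formalizes the proxy assignment and its load balancing as \Cref{thr:reassignSkeletons}). Your sparse case is fine as stated: token dissemination makes the pairs $(m, d(s,m))$ globally known, and every $v$ can take the minimum itself because it already knows $d(m,v)$ for all $m\in M$ from the RSSP phase.

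There is, however, a genuine gap in how you finish the dense case. Your exactness criterion is that $v$ must learn $d(s,m)$ for the landmarks $m$ within $h$ hops of $s$ (the candidate ``first'' landmarks on a shortest $s$--$v$ path). But the proxy mechanism you describe does not deliver those values to $v$: the \clique simulation sends each value $d(s,m)$ to the skeleton node $m$ itself (one message per pair), so after this step the dense sources' distances are known only \emph{on the skeleton}, and under your stated criterion an arbitrary node $v$ still cannot evaluate the landmark minimum. The paper closes exactly this hole with a final application of \Cref{thr:computeMSSPFromMSSPOnSkeleton}: once every skeleton node knows its distance to every source, each node $v$ explores its own $h$-hop ball and computes
$\min\bigl\{d^h(v,s),\ \min_{m\in M\cap N^h_G(v)}\bigl(d^h(v,m)+d(m,s)\bigr)\bigr\}$,
whose exactness uses Property~\ref{itm:skeleton:sp} applied from \emph{$v$'s side} of a long shortest path (a landmark among the $h$ hops nearest $v$), rather than from $s$'s side as in your argument. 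This extension step costs only $\tilde{O}(n^{1/3})$ rounds and is already implicit in the machinery you invoke, but it --- together with switching the hitting-set argument to $v$'s side --- is what actually completes the dense case.
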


\Cref{thr:exactMSSP} raises an interesting open question of whether the complexity of SSSP in the \hybrid model is below that of computing shortest paths from \tildeBigO{n^{1/3}} sources.

We also exploit our aforementioned solution for computing APSP on the skeleton graph to obtain approximate distances from a larger set of given sources ($n^x$-SSP), as follows.

\begin{restatable}[Approximate Multiple Source Shortest Paths]{theorem}{MSSP}\label{thr:MSSP}
Given a graph $G=(V, E)$, a set of sources $U$, where $|U| = \tilde \Theta(n^y)$ for some constant $0 < y < 1$, and a value $0 < \epsilon$, there is an algorithm in the \hybrid model which ensures that every node $v \in V$ knows an approximation to its distance from every $s \in U$, where the approximation factor is $(1+\epsilon)$ if $G$ is unweighted and $3$ if $G$ is weighted. The  complexity of the algorithm is $\tildeBigO{n^{1/3}/\epsilon+n^{y/2}}$ rounds, \whp
\end{restatable}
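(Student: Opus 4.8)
The plan is to reduce approximate shortest paths from the given set $U$ to APSP on a skeleton graph, and then use our distance-propagation machinery together with the triangle inequality to extend these distances to all of $V$. First I would fix a skeleton graph $S$ by sampling each node independently with probability $\tildeTheta{n^{-1/3}}$, so that $|S|=\tildeTheta{n^{2/3}}$ and, \whp, every shortest path using at least $h=\tildeTheta{n^{1/3}}$ hops contains a skeleton node. Running our APSP algorithm on the skeleton graph (via the \tiered model simulation) in \tildeBigO{n^{1/3}} rounds makes each skeleton node learn its distance to every other skeleton node, while invoking \Cref{thr:exactRMSSP} with $x=2/3$, within the same round budget, makes every $v\in V$ (in particular every source $s\in U$) learn its exact distance to every node of $S$. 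The point is that $U$ is arbitrary, so $v$ does \emph{not} directly obtain $d(v,s)$; instead each $v$ ends up holding the vector $(d(v,w))_{w\in S}$, and the remaining task is to supply enough of the values $d(w,s)$ that $v$ can evaluate the estimate $\min_{w\in S}(d(v,w)+d(w,s))$ for every $s\in U$.

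Next I would propagate the source information. Since $|U|=\tildeTheta{n^y}$, I would treat each source together with its (already computed) distance profile to $S$ as a token and disseminate these tokens using the routing capacity of the skeleton; a token-dissemination of $\tildeTheta{n^y}$ items over the skeleton costs \tildeBigO{n^{y/2}} rounds, which is the source of the second term. After dissemination, every node can locally compute $\widehat d(v,s)=\min_{w\in S}(d(v,w)+d(w,s))$ for all $s\in U$ simultaneously, and since each summand respects the triangle inequality we always have $\widehat d(v,s)\ge d(v,s)$.

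It remains to bound $\widehat d$ from above. If the shortest $v$--$s$ path uses at least $h$ hops, the hop-bound property places a skeleton node $w$ on it \whp, whence $\widehat d(v,s)=d(v,w)+d(w,s)=d(v,s)$ exactly. In the weighted case a short-hop path may carry large weight with no skeleton node on it, so I would instead invoke $v$'s nearest skeleton node $w_v$: the sampling density guarantees $d(v,w_v)\le d(v,s)$ \whp, and then $\widehat d(v,s)\le d(v,w_v)+d(w_v,s)\le 2\,d(v,w_v)+d(v,s)\le 3\,d(v,s)$, giving the factor $3$. In the unweighted case short-hop means small distance, so the additive error $2\,d(v,w_v)\le 2h$ of the nearest-skeleton estimate is already within an $\epsilon$ factor once $d(v,s)\ge 2h/\epsilon$; for the remaining pairs I would compute exact distances by running \tildeBigO{n^{1/3}/\epsilon} rounds of bounded Bellman--Ford over the local edges up to depth $\tildeTheta{n^{1/3}/\epsilon}$, which resolves all short pairs exactly and yields the $(1+\epsilon)$ guarantee together with the $n^{1/3}/\epsilon$ term.

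The hard part will be the weighted short-hop case and the accounting that keeps dissemination at \tildeBigO{n^{y/2}}. One must verify that the nearest-skeleton bound $d(v,w_v)\le d(v,s)$ really holds \whp for every relevant pair under the $\tildeTheta{n^{-1/3}}$ sampling rate (being careful about nodes whose local balls are too small, which fortunately correspond to small distances resolved by the exact pass), and that disseminating per-source skeleton profiles rather than bare identifiers does not inflate the token count beyond $\tildeTheta{n^y}$. Balancing these two demands---enough sampling density for the $3$-approximation, yet few enough tokens for the $n^{y/2}$ bound---is exactly where the density-aware viewpoint of the paper carries the argument.
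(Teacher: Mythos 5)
Your overall architecture (skeleton with $x=2/3$, exact APSP on the skeleton via the \tiered simulation, \Cref{thr:exactRMSSP} so that every node knows its exact distance to every skeleton node, and local exploration for short-hop pairs) matches the paper's ingredients, but the step where you propagate the sources' information to the rest of the graph has a genuine gap, and it is precisely the step the paper's proof is engineered around. You propose to disseminate, for each source $s\in U$, its entire distance profile $(d(s,w))_{w\in S}$ as ``a token.'' A profile has $\tildeTheta{n^{2/3}}$ entries, so the total number of $O(\log n)$-bit tokens is $\tildeTheta{n^{y+2/3}}$, not $\tildeTheta{n^y}$, and each source initially holds $\ell=\tildeTheta{n^{2/3}}$ of them; token dissemination (\Cref{thr:tokenDissemination}, cost $\tildeBigO{\sqrt{k}+\ell}$) then takes $\tildeBigO{n^{y/2+1/3}+n^{2/3}}$ rounds, which exceeds the claimed $\tildeBigO{n^{1/3}/\epsilon+n^{y/2}}$ for every $0<y<1$. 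You flag this as something ``to verify,'' but it is not an accounting subtlety: profile dissemination simply does not fit the budget. The paper's resolution (\Cref{thr:MSSPToRAPSP}, extracted from \cite{kuhn2020computing}) is a representative/proxy trick: each source $s$ disseminates a \emph{single} token $\langle r_s, d^{h}(r_s,s)\rangle$, where $r_s$ is the closest skeleton node in its $h$-hop neighborhood, keeping $k=\tildeTheta{n^y}$ and $\ell=1$. Every node $v$ then uses the one proxy $r_s$ rather than a minimum over all skeleton nodes: in your setup $d(v,r_s)$ is already known exactly from RSSP, and whenever the shortest $v$--$s$ path has at least $h$ hops, a skeleton node lies on it within $h$ hops of $s$ \whp, so $d^{h}(s,r_s)\le d(v,s)$ and hence $d(v,r_s)+d^{h}(r_s,s)\le d(v,s)+2d^{h}(s,r_s)\le 3\,d(v,s)$.

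Separately, your weighted-case bound is incorrect as stated: the inequality $d(v,w_v)\le d(v,s)$ for $v$'s nearest skeleton node $w_v$ does not follow from sampling density when the shortest $v$--$s$ path has fewer than $h$ hops --- in a weighted graph such a path can be heavy while every skeleton node is far from $v$ --- and your exact pass (bounded Bellman--Ford) is proposed only for the unweighted case. The standard fix, which the paper uses, is to include the hop-limited term $d^{\eta h}(v,s)$ in the minimum: local exploration to $\eta h$ hops (bandwidth-free, since local edges are unbounded) resolves \emph{exactly}, in weighted graphs too, every pair whose shortest path has at most $\eta h$ hops, and the proxy argument above covers all remaining pairs. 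With these two repairs --- single-token representatives instead of profiles, and hop-limited exploration in both the weighted and unweighted cases --- your outline becomes essentially the paper's proof, namely \Cref{thr:MSSPToRAPSP} instantiated with $\alpha=1$, $\beta=0$, $\eta=2/\epsilon$, and $T=\tildeBigO{n^{1/3}}$ from \Cref{thr:exactRAPSP}.
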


This result improves both in round complexity and approximation factors upon the previous results in \cite{kuhn2020computing}. The reason for this is that we compute APSP over skeleton graphs using the efficient, exact algorithm from the \tiered oracle model, while \cite{kuhn2020computing} simulate the slower, approximate algorithms of \cite{CensorHillel2016AlgebraicMI,CensorHillel2020FastAS} in the \clique model. Particularly, this result is tight up to polylogarithmic factors for $y\geq 2/3$ due to a lower bound of \cite{kuhn2020computing}.

We can also approximate unweighted eccentricities by a combination of computing shortest path distances from $n^{2/3}$ random sources and performing local explorations using the local edges of the model. For approximating weighted eccentricities, this is insufficient, and here our approach is to additionally broadcast required information from each random source node regarding its $\tildeBigO{n^{1/3}}$-hop neighborhood in the graph. We obtain the following result.

\begin{theorem}[Approx. Eccentricities]\label{thr:eccentricitiesApproximation}
Given a graph $G=(V, E)$, there is an algorithm in the \hybrid model that computes a $(1+\epsilon)$-approximation of unweighted and $3$-approximation of weighted eccentricities in $\tilde{O}(n^{1/3}/\epsilon)$ rounds, \whp 
\end{theorem}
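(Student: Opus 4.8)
The plan is to reduce both eccentricity estimates to the shortest-path distances from a set $M$ of $\tildeBigO{n^{2/3}}$ random sources, obtained via \Cref{thr:exactRMSSP} with $x=2/3$ in $\tildeBigO{n^{1/3}}$ rounds, combined with bounded-hop local explorations over the local edges. I fix a hop threshold $h=\bigTheta{n^{1/3}\log n}$; since $M$ is sampled with probability $n^{-1/3}$, the standard skeleton property holds \whp. Concretely, every shortest path of at least $h$ hops contains a node of $M$ within each window of $h$ consecutive hops, and, away from a degenerate case treated separately, every node has a source of $M$ within $h$ hops. Each node $v$ also runs a weighted bounded-hop exploration over the local edges to learn the exact distance to every node in its $h$-hop (resp. $h/\epsilon$-hop) ball; I write $\mathrm{ecc}_r(v)=\max_{u:\,\mathrm{hop}(v,u)\le r} d(v,u)$ for the resulting bounded eccentricity.

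For the unweighted $(1+\epsilon)$ estimate I set the exploration radius to $R=\bigTheta{n^{1/3}\log n/\epsilon}$ and output $\widehat{\mathrm{ecc}}(v)=\max\big(\mathrm{ecc}_R(v),\ \max_{m\in M} d(v,m)\big)$, noting $v$ already knows every $d(v,m)$ from the previous step. Both arguments are at most $\mathrm{ecc}(v)$, so there is no overestimate. For the lower bound, let $u^*$ realize $\mathrm{ecc}(v)$: if $u^*$ lies within $R$ hops then $\mathrm{ecc}_R(v)=\mathrm{ecc}(v)$ exactly; otherwise $\mathrm{ecc}(v)>R=h/\epsilon$, the last $h$ hops of a shortest $v$--$u^*$ path contain a source $m\in M$ on that path, and (distances equal hop counts here) $d(v,m)=\mathrm{ecc}(v)-d(m,u^*)\ge\mathrm{ecc}(v)-h\ge(1-\epsilon)\mathrm{ecc}(v)$. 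Rescaling $\epsilon$ yields the factor, and the runtime is dominated by the $R$-hop exploration, i.e. $\tildeBigO{n^{1/3}/\epsilon}$.

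The weighted case needs the extra broadcast, since a source near $u^*$ in hops may be far in weight. Each source $m\in M$ additionally computes $b_m=\mathrm{ecc}_h(m)$ from its exploration, and I disseminate the full vector $(b_m)_{m\in M}$ to every node: the $\tildeBigO{n^{2/3}}$ skeleton nodes first exchange their $b_m$ among themselves via the skeleton all-to-all exchange underlying the APSP computation, and then flood the assembled vector along the local edges, which have unbounded bandwidth, to every node within $h$ hops of a source, so every node learns all $b_m$ within $\tildeBigO{n^{1/3}}$ rounds. Each $v$ then outputs $\widehat{\mathrm{ecc}}(v)=\max\big(\mathrm{ecc}_h(v),\ \max_{m\in M}(d(v,m)+b_m)\big)$. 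For the lower bound, with $u^*$ as above, either $u^*$ is within $h$ hops and $\mathrm{ecc}_h(v)=\mathrm{ecc}(v)$, or a source $m$ on the shortest path lies within $h$ hops of $u^*$, giving $b_m\ge d(m,u^*)$ and $d(v,m)+b_m\ge d(v,m)+d(m,u^*)=\mathrm{ecc}(v)$. For the upper bound, $d(v,m)\le\mathrm{ecc}(v)$ and, writing $b_m=d(m,u'_m)$, the triangle inequality gives $d(m,u'_m)\le d(m,v)+d(v,u'_m)\le 2\,\mathrm{ecc}(v)$, so $d(v,m)+b_m\le 3\,\mathrm{ecc}(v)$; hence $\widehat{\mathrm{ecc}}(v)\in[\mathrm{ecc}(v),\,3\,\mathrm{ecc}(v)]$.

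The main obstacle is the weighted dissemination step and its interaction with the guarantee. The approximation itself is a clean sandwich, but making the broadcast fit in $\tildeBigO{n^{1/3}}$ rounds crucially exploits the \hybrid asymmetry: the $b_m$ cannot simply be pushed over the bandwidth-limited global edges to every (possibly low-degree) node, so I route them through the $\tildeBigO{n^{2/3}}$ skeleton nodes and rely on the unbounded local edges to complete the last $h$ hops. I would also need to discharge the degenerate case in which a node's $h$-hop ball contains no source, which forces its whole reachable set to lie within $h$ hops and so be explored exactly, and to justify the skeleton property by a union bound over the $\bigO{n^2}$ source--target pairs; both are standard in this framework.
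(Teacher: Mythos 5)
Your unweighted argument is correct and is essentially the paper's own proof: $n^{2/3}$-RSSP via \Cref{thr:exactRMSSP}, a $\bigTheta{h/\epsilon}$-hop exploration, and the maximum of the two resulting terms. Your dissemination variant for the weighted case (one all-to-all round over the skeleton followed by local flooding) is also a legitimate replacement for the paper's token-dissemination step, with the same $\tildeBigO{n^{1/3}}$ cost. The genuine gap is in the weighted case, in the claim that a ``weighted bounded-hop exploration over the local edges'' lets a node ``learn the exact distance to every node in its $h$-hop ball.'' This is false in weighted graphs: the true shortest path between two nodes that are at most $h$ hops apart may use far more than $h$ hops and leave the ball, so an $h$-hop exploration only yields distances in the ball-induced subgraph, which can overestimate $d$ by an unbounded factor (e.g., $u$ adjacent to $v$ by an edge of weight $W$, but also joined to $v$ by a $2h$-hop path of unit edges). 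This claim is load-bearing twice: once for $b_m=\mathrm{ecc}_h(m)$ and once for the term $\mathrm{ecc}_h(v)$ in your estimator. If these quantities are computed from the ball subgraph, they can exceed $3\,\mathrm{ecc}(v)$ arbitrarily, and your upper bound $\widehat{\mathrm{ecc}}(v)\le 3\,\mathrm{ecc}(v)$ --- whose proof applies the triangle inequality to what are implicitly assumed to be \emph{true} distances --- collapses.

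Both defects are repairable, and the repairs are exactly what the paper does. For the $b_m$'s: after the RSSP step every node $u$ already knows the true distance $d(u,m)$ to every source $m\in M$, so $m$ computes $b_m=\max_{u\in N_G^h(m)}d(u,m)$ by \emph{collecting these already-known values} during its exploration, not by computing distances inside the ball. For the $\mathrm{ecc}_h(v)$ term: drop it altogether (for a non-source $v$, no other node knows $d(v,\cdot)$, so the previous fix has no analogue); the case you introduced it for, $hop(v,u^*)\le h$, is instead covered by observing that \whp (Chernoff plus a union bound, for a connected graph) \emph{every} node, in particular the farthest node $u^*$, has some source $m$ within $h$ hops --- not necessarily on a shortest path --- whence $d(v,m)+b_m\ge d(v,m)+d(m,u^*)\ge d(v,u^*)=\mathrm{ecc}(v)$ by the triangle inequality. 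This is precisely the paper's argument, whose weighted estimator is $\frac{1}{3}\max_{m\in M}\set{d(v,m)+\mathrm{ecc}_h(m)}$ with no local term. Finally, a normalization point: the paper's definition requires $\mathrm{ecc}(v)/\alpha\le\widetilde{\mathrm{ecc}}(v)\le\mathrm{ecc}(v)$, i.e.\ an underestimate, so your estimate, sandwiched in $\left[\mathrm{ecc}(v),3\,\mathrm{ecc}(v)\right]$, must be divided by $3$ to meet the stated theorem.
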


Finally, the unweighted eccentricities approximation directly implies a $(1+\epsilon)$ approximation for unweighted diameter. This should be compared with the lower bound of $\tilde \Omega(n^{1/3})$ rounds for exact unweighted diameter due to \cite{kuhn2020computing}.

\begin{restatable}
[$(1+\epsilon)$-Approx. Unweighted Diameter]{corollary}{unweigtedDiameterApproximation}\label{thr:unweigtedDiameterApproximation}
Let $G=(V, E)$ be an unweighted graph, and let $\epsilon>0$. There exists an algorithm in the \hybrid model which computes a $(1+\epsilon)$-approximation of the diameter in $\tildeBigO{n^{1/3}/\epsilon}$ rounds, \whp
\end{restatable}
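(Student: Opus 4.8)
The plan is to obtain this corollary directly from the unweighted eccentricities guarantee of \Cref{thr:eccentricitiesApproximation}, exploiting the elementary identity that the diameter equals the maximum eccentricity, $D=\max_{v\in V}\mathrm{ecc}(v)$. First I would invoke \Cref{thr:eccentricitiesApproximation} so that every node $v$ learns a value $\tilde e(v)$ satisfying $\mathrm{ecc}(v)\le\tilde e(v)\le(1+\epsilon)\,\mathrm{ecc}(v)$; this step runs in $\tildeBigO{n^{1/3}/\epsilon}$ rounds \whp. It then remains only to aggregate these per-node values into the single global estimate $\tilde D=\max_{v\in V}\tilde e(v)$ and, if desired, to broadcast it to all nodes.

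For correctness I would appeal to the monotonicity of the maximum under a pointwise multiplicative bound. Since $\tilde e(v)\ge\mathrm{ecc}(v)$ for every $v$, we have $\tilde D=\max_v\tilde e(v)\ge\max_v\mathrm{ecc}(v)=D$. Conversely, since $\tilde e(v)\le(1+\epsilon)\,\mathrm{ecc}(v)$ for every $v$, we have $\tilde D\le(1+\epsilon)\max_v\mathrm{ecc}(v)=(1+\epsilon)D$. Hence $D\le\tilde D\le(1+\epsilon)D$, so $\tilde D$ is the required $(1+\epsilon)$-approximation of the diameter. The only property used is that taking a maximum preserves a uniform multiplicative approximation of each argument, so no additional slack is introduced in passing from eccentricities to the diameter.

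For the complexity I would observe that the eccentricities computation dominates and the subsequent aggregation is cheap. Computing the maximum of a single value held at each of the $n$ nodes, and broadcasting the result, is a standard aggregation task in the \hybrid model that can be carried out in $\tildeBigO{1}$ rounds \whp, for instance via a logarithmic-depth bounded-degree aggregation over the global edges, or by routing through the skeleton graph, whose hop-diameter is $O(\log n)$ \whp. Adding this to the $\tildeBigO{n^{1/3}/\epsilon}$ cost of \Cref{thr:eccentricitiesApproximation} leaves the total at $\tildeBigO{n^{1/3}/\epsilon}$ rounds \whp.

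I do not expect a genuine obstacle here, since the statement is a corollary: the only point warranting explicit justification is that the global max-aggregation fits within the round budget, which follows from standard \hybrid-model primitives and in any case is subsumed by the $\tildeBigO{n^{1/3}/\epsilon}$ term. The one modeling subtlety worth stating is the direction of the eccentricity approximation; the argument above is written for a one-sided over-approximation, and an entirely symmetric computation handles an under-approximation, in either case yielding a factor of $(1\pm\epsilon)$.
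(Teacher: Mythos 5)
Your proposal matches the paper's proof essentially step for step: invoke the $(1+\epsilon)$-approximate unweighted eccentricities algorithm, aggregate the maximum globally in $\tildeBigO{1}$ rounds (the paper uses its Aggregate-And-Broadcast primitive, \Cref{thr:aggregateAndBroadcast}), and conclude via monotonicity of the maximum under pointwise approximation bounds. The only cosmetic difference is the direction of the approximation --- the paper's definition requires an under-approximation $D/(1+\epsilon)\leq\widetilde{D}\leq D$, which is the symmetric case you explicitly flagged and which goes through verbatim --- so the proposal is correct and is the same argument.
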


We refer the reader to \Cref{table:results}, for a visual summary of our end results with comparison to related work.

\textbf{Roadmap:} The remainder of the current section is dedicated towards surveying related work. In \cref{sec:prelim}, we provide all formal definitions. Next, in \cref{sec:superCubeRootAlgorithms:tool} we formally define the \oracle and \tiered models, and show how to simulate them over skeleton graphs in the \hybrid model. Finally, \cref{sec:superCubeRootAlgorithms}, gives our algorithms for distance problems in the oracle models and, using our simulations, also in the \hybrid model. 
Some additional results are deferred to the appendix. We show the approximation for shortest path from $n^{x}$ given sources in \cref{app:MSSP}. In \cref{app:ecc,app:diameter} we provide eccentricities and diameter approximations, respectively. We wrap up with our lower bound for shortest paths from sources sampled i.i.d. in \cref{sec:lowerBound}.

\begin{table}[ht!]
    \centering
    \begin{tabular}{ | c | c | c | c | c | c | }
        \hline
        Problem & Variant & Approximation & This work & Previous works \\ 
        \hline
        \multirow{3}{*}{SSSP}
            & weighted 
            & exact 
            & $\tildeBigO{n^{1/3}}$
            & $\tildeBigO{n^{2/5}}$\cite{kuhn2020computing}, $\tildeBigO{\sqrt{SPD}}$\cite{AHKSS20} 
            \\
            & weighted 
            & $1+\epsilon$ 
            & 
            & $\tildeBigO{n^{1/3}\cdot \epsilon^{-6}}$\cite{AHKSS20}
            \\
            & weighted 
            & $(1/\epsilon)^{O(1/\epsilon)}$ 
            & 
            & $n^{\epsilon}$\cite{AHKSS20}
            \\
            
        \hline
        \multirow{3}{*}{$n^{x}$-RSSP}
            & unweighted
            & $\tildeBigO{n^{1-x/2}}$
            & $\tildeBigOmega{n^{x/2}}$
            &
            \\
            & weighted 
            & exact
            & $\tildeBigO{n^{1/3}+n^{2x-1}}$
            & 
            \\
            & weighted 
            & $2+\epsilon$
            & 
            & $\tildeBigO{n^{1/3}+n^{2x-1}}$ \cite{kuhn2020computing}
            \\
        
        \hline
        \multirow{3}{*}{$n^{1/3}$-SSP}
            & unweighted 
            & $1 + \epsilon$
            & 
            & $\tildeBigO{n^{1/3}/\epsilon}$ \cite{kuhn2020computing}
            \\
            & weighted 
            & exact
            & $\tildeBigO{n^{1/3}}$
            &
            \\
            & weighted 
            & $3 + \epsilon$
            & 
            & $\tildeBigO{n^{1/3}/\epsilon}$ \cite{kuhn2020computing}
            \\
        \hline
        \multirow{6}{*}{$n^{x}$-SSP} 
            & unweighted
            & \tildeBigO{n^{1-x/2}}
            & 
            & $\tildeBigOmega{n^{x/2}}$\cite{kuhn2020computing} 
            \\
            & unweighted
            & $1+\epsilon$ 
            & $\tildeBigO{n^{1/3}/\epsilon + n^{x/2}}$
            & 
            \\
            & unweighted
            & $2+\epsilon$ 
            & 
            & $\tildeBigO{n^{1/3}/\epsilon+n^{x/2}}$ \cite{kuhn2020computing}
            \\  
            
            & weighted
            & $3$ 
            & $\tildeBigO{n^{1/3}+n^{x/2}}$
            & 
            \\  
            & weighted
            & $3+\epsilon$ 
            & 
            & $\tildeBigO{n^{0.397}+n^{x/2}}$ \cite{kuhn2020computing} 
            \\  
            & weighted
            & $7+\epsilon$ 
            & 
            & $\tildeBigO{n^{1/3}/\epsilon+n^{x/2}}$ \cite{kuhn2020computing}
            \\  
            
        \hline
        \multirow{2}{*}{eccentricities}
            & unweighted
            & $1+\epsilon$
            & $\tildeBigO{n^{1/3}/\epsilon}$
            & 
            \\
            & weighted
            & $3$
            & $\tildeBigO{n^{1/3}}$
            & 
            \\
            
        \hline
        \multirow{7}{*}{diameter}
            & unweighted 
            & exact
            & 
            & $\tildeBigOmega{n^{1/3}}$ \cite{kuhn2020computing}
            \\
            & unweighted 
            & $1+\epsilon$
            & $\tildeBigO{n^{1/3}/\epsilon}$
            & $\tildeBigO{n^{0.397}/\epsilon}$ \cite{kuhn2020computing}
            \\
            & unweighted 
            & $3/2+\epsilon$
            & 
            & $\tildeBigO{n^{1/3}/\epsilon}$ \cite{kuhn2020computing}
            \\
            
            & weighted 
            & $2-\epsilon$
            & 
            & $\tildeBigOmega{n^{1/3}}$ \cite{kuhn2020computing}
            \\

            & weighted 
            & $2$
            & $\tildeBigO{n^{1/3}}$
            & $\tildeBigO{n^{2/5}}$ \cite{kuhn2020computing}
            \\
            & weighted 
            & $2+\epsilon$
            & 
            & $\tildeBigO{n^{1/3}\cdot \epsilon^{-6}}$
            \cite{AHKSS20}
            \\
            & weighted 
            & $2 \cdot (1/\epsilon)^{O(1/\epsilon)}$ 
            & 
            & $n^{\epsilon}$\cite{AHKSS20}
            \\
        \hline
    \end{tabular}
    \caption{
    Comparison of our results. 
    $SPD$ is the length of the shortest path diameter.  
    The results for $n^{x}$-RSSP, and weighed diameter approximation upper bounds from previous works are implicit in~\cite{AHKSS20,kuhn2020computing}.
    Our upper bound for $n^{2/3}$-RSSP is tight up to poly-logarithmic factors due to our lower bound. Our approximations for $n^{x}$-SSP are also tight up to poly-logarithmic factors for $x\geq2/3$, due to \cite{kuhn2020computing}.
    }
    \label{table:results}
\end{table}

\subsection{Related Work}
\textbf{Hybrid Models.} 
The \hybrid network model was studied in  \cite{AHKSS20,kuhn2020computing,feldmann2020fast}. In \cite{feldmann2020fast}, distance results are obtained in one of the harsher variants of the model, where the local edges are restricted to have $\log{n}$ bandwidth. However, these apply only to extremely sparse graphs of at most $n+O(n^{1/3})$ edges and cactus graphs. In \cite{YungALS90,Gmyr2017DistributedMO}, slightly different models of hybrid nature are studied.

Augustine et al. \cite{AGGHKL19} proposed the \ncc model, which is similar to the \clique model, but each node has $\log{n}$ bandwidth. This model is also a special case of the generalised \hybrid model \cite{AHKSS20} without local edges. This allows one to use the results from the \ncc model in the \hybrid model without modifications. 

\textbf{Distributed Distance Computations.} 
Distance related problems have been extensively studied in many distributed models. For example, in the \congest model, there is a long line of research on APSP \cite{Peleg2012DistributedAF,Lenzen2013EfficientDS,Lenzen2015FastPD,Agarwal2018ADD,Elkin2017DistributedES,Bernstein2019DistributedEW,Agarwal2019DistributedWA,Ramachandran20FasterDA} which culminated in
tight, up to polylogarithmic factors, $\tildeBigO{n}$ round exact weighted APSP randomized algorithm of Bernstein and Nanongkai \cite{Bernstein2019DistributedEW} and a $\tildeBigO{n^{4/3}}$ round deterministic algorithm of Agarwal and Ramachandran \cite{Ramachandran20FasterDA}. \cite{Peleg2012DistributedAF,Lenzen2013EfficientDS} develop an $\tildeBigO{n}$ round algorithm, optimal up to polylogarithmic factors, for unweighted APSP.
The study of approximate SSSP algorithms was the focus of many recent paper \cite{Sarma2011DistributedVA,Becker2017NearOptimalAS,Henzinger2019ADA} and lately Becker et al. \cite{Becker2017NearOptimalAS} showed the solution which is close to the lower bound of Das Sarma et al. \cite{Sarma2011DistributedVA}. In case of exact SSSP, after recent works  \cite{Elkin2017DistributedES,Ghaffari2018ImprovedDA,Forster2018AFD}, there still is a gap between upper and lower bounds. The diameter and eccentricities problems are studied in the \congest model in \cite{Peleg2012DistributedAF,Frischknecht2012NetworksCC,Abboud2016NearLinearLB,AnconaCDEV20}.

\sloppy{
In the \clique model, $k$-SSP, APSP and diameter are extensively studied in \cite{Nanongkai2014DistributedAA,CensorHillel2016AlgebraicMI,Gall2016FurtherAA,CensorHillel2020SparseMM} and approximate versions of the $k$-SSP and APSP problem are solved in polylogarithmic \cite{SparseMMPODC2019} and even polyloglogaritmic \cite{Dory2020ExponentiallyFS} time. In the more restricted \bcc model, in which each message a node sends in a round is the same for all recipients, distance computations are researched by \cite{Holzer2015ApproximationOD,Becker2017NearOptimalAS}.
}

\section{Preliminaries}\label{sec:prelim}
We provide here some definitions and claims that are critical for reading the main part of the paper. \Cref{app:prelim} contains additional definitions and basic claims.
We use the following variant of the \hybrid model, introduced in \cite{AHKSS20}.

\begin{definition}[\hybrid Model]
In the \hybrid model, a synchronous network of $n$ nodes with identifiers in \interval{n}, is given by a graph $G=(V,E)$. In each round, every node can send and receive $\lambda$ messages of $O(\log n)$ bits to/from each of its neighbors (over \emph{local edges}) and an additional $\gamma$ messages in total to/from any other nodes in the network (over \emph{global edges}). If in some round more than $\gamma$ messages are sent via global edges to/from a node, only $\gamma$ messages selected adversarially are delivered.

\end{definition}

We follow the previous work of~\cite{AHKSS20,kuhn2020computing} and consider $\lambda=\infty,\gamma=O(\log{n})$. Notice that the \hybrid model can also capture the classic \local \footnote{The \local and \congest models are  synchronous distributed models where every two neighbors in the graph can exchange messages of unlimited size or of $O(\log n)$ bits, respectively, in each round.} model, with $\lambda=\infty,\gamma=0$, the classic \congest model, with $\lambda=O(1),\gamma=0$, the \clique model, with $\lambda=O(1),\gamma=0$ and $G$ being a clique, the \clique + Lenzen's Routing with $\lambda=0,\gamma=n$ and the \ncc model~\cite{AGGHKL19}, with $\lambda=0,\gamma=O(\log(n))$.

Many of our results hold for weighted graphs $G=(V, E, w)$. We assume an edge weight is given by a function $w\colon E\mapsto \set{1, 2, \dots, W}$ for a $W$ which is polynomial in $n$. When we \emph{send} an edge as part of a message in any algorithm, we assume it is sent along with its weight.

\subsection{Notations and Problem Definitions}

We use the following definitions related to graphs.
Given a graph $G=(V, E)$ and a pair of nodes $u, v\in V$, we denote by $hop(u,v)$ the hop distance between $u$ and $v$, by $N_G^h(v)$ the $h$-hop neighborhood of $v$, by $d_G^h(u, v)$ the weight of the lightest path between $u$ and $v$ of at most $h$-hops, and if there is no path of at most $h$-hops then $d_G^h(u, v)=\infty$. In the special case of $h=1$, we denote by $N_G(v)$ the neighbors of $v$ and in the special case of $h=\infty$, we denote by $d_G(u, v)$ the weight of the lightest path between $u$ and $v$. We also denote by $\deg_G{(v)}$ the degree of $v$ in $G$. Whenever it is clear from the context we drop the subscript of $G$ and just write $N$, $N^h$, $d$, $d^h$ or $\deg(v)$.

We define the following problems in the \hybrid model. 

\begin{definition}[$k$-Source Shortest Paths (k-SSP)]
    Given a graph $G = (V, E)$, and a set $S\subseteq V$ of $k$ sources. Every $u\in V$ is required to learn the distance $d_G(u, s)$ for each $s\in S$. The case where $k=1$, is called the \emph{single source shortest paths} problem (SSSP).
\end{definition}

\begin{definition}[$n^x$-Random Sources Shortest Path ($n^x$-RSSP)]\label{def:mrssp}
    Given a graph $G = (V, E)$, and a set $M\subseteq V$ of sources, such that each $v \in V$ is sampled independently with probability $n^{x-1}$ to be in $M$. Every $u\in V$ is required to learn the distance $d_G(u, s)$ for each $s\in M$.
\end{definition}

In the approximate versions of these problems, each $u\in V$ is required to learn an $\left(\alpha,\beta\right)$-approximate distance $\widetilde{d}(u, v)$ which satisfies $d(u, v)\leq \widetilde{d}(u, v) \leq \alpha \cdot d(u, v) + \beta$, and in case $\beta=0$, $\widetilde{d}(u, v)$ is called an $\alpha$-approximate distance.

\begin{definition}[Eccentricity and diameter]
    Given a graph $G=(V, E)$ and node $v\in V$, the \emph{eccentricity} of $v$ is the farthest shortest path distance from $v$, i.e., $ecc(v)=\max_{u\in V}{d(v, u)}$ and the diameter $D=\max_{v\in V}\set{ecc(v)}$ is the maximum eccentricity. An $\alpha$-approximation of all eccentricities is a function $\widetilde{ecc}(v)$ which satisfies $ecc(v)/\alpha\leq \widetilde{ecc}(v)\leq ecc(v)$ for all nodes $v$. An $\alpha$-approximation of the diameter is a value $\widetilde{D}$ which satisfies $D/\alpha\leq\widetilde{D}\leq D$.
\end{definition}

\subsection{Skeleton Graphs}

In a nutshell, given a graph $G = (V, E)$, a skeleton graph  $S_x = (M, E_S)$, for some constant $0 < x < 1$, is generated by letting every node in $V$ independently join $M$ with probability $n^{x-1}$. Two nodes in $M$ have an edge in $E_S$ if there exists a path between them in $G$ of at most $h = \tilde O(n^{1-x})$ hops. This graph \whp satisfies many useful properties in terms of distance computation, which for simplicity of presentation we add to its definition, provided below. A crucial property is that for any two nodes, if the shortest path between them in $G$ has more than $h$ hops, then there exists a shortest path between them in $G$ on which every roughly $h$ nodes there is a node from $M$ (all such skeleton properties hold \whp).

\begin{restatable}[Skeleton Graph, Combined Definition of~\cite{AHKSS20,kuhn2020computing}]{definition}{Skl}\label{def:skeleton}
Given a graph $G=(V, E)$ and a value $0 < x < 1$, a graph $S_x=(M, E_S)$ is called a skeleton graph in $G$, if all of the following hold.
\begin{enumerate}\setlength{\itemsep}{1mm}
    \item{Each $v\in V$ is included to $M$ independently with probability $n^{x-1}$.}
    
    \item{$\{v, u\} \in E_S$ if and only if there is a path of at most $h=\Tilde{\Theta}{(n^{1-x})}$ edges between $v, u$ in $G$.\label{itm:skeleton:edge}}
    
    \item Every node $v \in M$ knows all its incident edges in $E_S$.
    
    \item{$S_x$ is connected. \label{itm:skeleton:connected}}
    
    \item{For any two nodes $v, v'\in M$, $d_S(v, v')=d_G(v,v')$. \label{itm:skeleton:distance}}
    
    \item{For any two nodes $u,v\in V$ with $hop(u, v)\geq h$, there is at least one shortest path $P$ from $u$ to $v$ in $G$, such that any sub-path $Q$ of $P$ with at least $h$ nodes contains a node $w\in M$.\label{itm:skeleton:sp}}
    
    \item{$|M| = \tilde O(n^{x})$. 
    \label{itm:skeleton:size}}
\end{enumerate}
\end{restatable}

~\\The following claim summarizes what is proven in \cite{AHKSS20} regarding the construction of a skeleton graph from a set of random marked nodes, \whp

\begin{restatable}[Skeleton from Random Nodes]{claim}{SklFromRand}\label{claim:skeletonOnSampled}
Given a graph $G=(V,E)$, a value $0 < x < 1$, and a set of nodes $M$ marked independently with probability $n^{x-1}$, 
there is an algorithm in the \hybrid model which
constructs a skeleton graph $S_x=(M, E_S)$ in \tildeBigO{n^{1-x}} rounds \whp
If also given a single node $s \in V$, it is possible to construct $S_x=(M\cup\set{s}, E_S)$, without damaging the properties of $S_x$.
\end{restatable}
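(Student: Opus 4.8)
The plan is to exhibit a concrete construction, verify that it produces an object satisfying all seven items of \Cref{def:skeleton}, and bound its round complexity. Properties 1 and 7 come essentially for free: the marking is given, and since $\mathbb{E}[|M|]=n\cdot n^{x-1}=n^{x}$, a Chernoff bound shows $|M|=\Theta(n^{x})=\tilde O(n^{x})$ \whp. The crux is the sampling property (Property 6), which I would set up by fixing $h=\Theta(n^{1-x}\log n)$. First I would fix, for each pair $u,v\in V$, one canonical shortest path (breaking ties by a fixed deterministic rule). For a fixed window of $h$ consecutive nodes on such a path, the probability that none is marked is $(1-n^{x-1})^{h}\leq e^{-n^{x-1}h}=e^{-\Theta(\log n)}$, which I can drive below $n^{-c}$ for any constant $c$ by the hidden constant in $h$. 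A union bound over the $O(n^{3})$ (pair, window) combinations then shows that \whp, on every canonical path, every sub-path of at least $h$ nodes contains a marked node, which is exactly Property 6.

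Next I would describe the algorithm realizing the edge set together with Properties 2 and 3. Since the local edges have unbounded bandwidth ($\lambda=\infty$), I would run an $h$-hop-bounded multi-source weighted Bellman--Ford, with every node of $M$ as a source, for exactly $h$ rounds over the local edges. After round $i$ each node $w$ knows $d_G^{i}(w,s)$ for every $s\in M$ reachable within $i$ hops; after $h$ rounds each $v\in M$ therefore knows precisely the set of $u\in M$ with $hop(v,u)\leq h$ and the value $d_G^{h}(v,u)$, which I would take as the weight of $\{v,u\}\in E_S$. This gives Property 2 (the edge exists iff a $\leq h$-hop path exists) and Property 3 (each $v\in M$ learns its incident edges) in $h=\tilde O(n^{1-x})$ rounds, matching the claimed complexity; there are at most $\tilde O(n^{x})$ sources, so all forwarded messages have polynomial size and each Bellman--Ford round is a single \hybrid round.

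Finally I would derive Properties 4 and 5 from Property 6. The inequality $d_S(v,v')\geq d_G(v,v')$ is immediate, since each skeleton edge $\{v,u\}$ has weight $d_G^{h}(v,u)\geq d_G(v,u)$, so any skeleton path maps to a walk in $G$ of equal weight. For the reverse inequality I would take the canonical shortest path from $v$ to $v'$ and, using Property 6, cut it at successive marked nodes: consecutive marked nodes are fewer than $h$ hops apart, so each segment is a skeleton edge whose weight equals that segment's weight in $G$, and summing gives $d_S(v,v')\leq d_G(v,v')$. Hence $d_S=d_G$ (Property 5), and connectivity (Property 4) follows from connectivity of $G$ since all these skeleton distances are finite. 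For the augmented variant, I would simply place $s$ into the marked set and let it run the same Bellman--Ford; because Property 6 is a statement about \emph{all} pairs $u,v\in V$ and is only strengthened by adding marked nodes, every argument above goes through verbatim with $M\cup\{s\}$, and $|M\cup\{s\}|=\tilde O(n^{x})$ still holds.

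I expect the main obstacle to be the union bound for Property 6: one must justify that restricting to a single canonical shortest path per pair (rather than all shortest paths, of which there may be exponentially many) suffices for the \emph{existence} statement in the definition, and one must treat the windows near the path endpoints correctly so that gaps of fewer than $h$ nodes at the ends do not break the decomposition used for Property 5. Everything else is either given, a direct Chernoff estimate, or a mechanical consequence of the local Bellman--Ford construction.
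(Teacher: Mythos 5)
Your proof is correct, and it reconstructs essentially the approach the paper relies on: the paper in fact gives no proof of this claim at all, presenting it as a summary of what is proven in \cite{AHKSS20}, and the construction there is exactly your $h$-round multi-source Bellman--Ford over the unbounded-bandwidth local edges, the Chernoff-plus-union-bound over fixed canonical shortest paths for the hitting property, and the segment-stitching argument giving $d_S=d_G$. The two caveats you flag are resolved just as you suggest: one canonical path per pair suffices because Property 6 only asserts the \emph{existence} of a suitable shortest path (and the canonical paths are fixed independently of the sampling), and the endpoint windows cause no trouble because both endpoints of the decomposition for Property 5 are themselves marked nodes.
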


~\\We extract the following basic claim, used in the proof of \cite[Theorem 2.7]{AHKSS20} for a ($1+\epsilon$)-approximation for SSSP, and slightly extend it to use for multiple source problem and arbitrary approximation factors. It states that given a skeleton graph and a set of sources, if every skeleton node knows any approximation to its distance from every source, then it is possible to efficiently reach a state where every node in the graph knows the approximation for its own distance from any of the sources. We give the proof for the sake of self-containment in \cref{app:prelim}. The idea is that each node locally explores its $\tildeBigO{n^{1-x}}$ neighborhood and identifies for each source the best skeleton node in its neighborhood to go through. 

\begin{restatable}[Extend Distances]{claim}{ExtDist}\label{thr:computeMSSPFromMSSPOnSkeleton} \cite[Theorem 2.7]{AHKSS20}
    Let $G=(V, E)$, let $S_x=(M, E_S)$ be a skeleton graph, and let $V'\subseteq V$ be the set of source nodes. If for each source node $s\in V'$, each skeleton node $v\in M$ knows the $\left(\alpha, \beta\right)$-approximate distance $\tilde{d}\left(v, s\right)$ such that $d(v, s)\leq \tilde{d}(v, s)\leq \alpha d(v, s) + \beta$, then each node $u\in V$ can compute for all source nodes $s\in V'$, a value $\tilde{d}(u, s)$ such that $d(u, s)\leq \tilde{d}(u, s)\leq \alpha d(u, s)+ \beta$ in $\tildeBigO{n^{1-x}}$ rounds.
\end{restatable}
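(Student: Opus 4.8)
The plan is to have every node $u \in V$ run a weighted, bounded-hop Bellman--Ford computation confined to its $h = \tildeBigO{n^{1-x}}$-hop neighborhood, seeded by the approximate distances that the skeleton nodes already hold. Concretely, fix a source $s \in V'$ and let each node $v$ initialize a value $a_s(v)$, equal to $0$ if $v = s$, to $\tilde d(v, s)$ if $v \in M$, and to $+\infty$ otherwise (a node that is both in $M$ and equal to $s$ takes the minimum of the two). I would then relax these values for $h$ rounds over the local edges, so that after the computation every node $u$ holds
\[
  \tilde d(u, s) \;=\; \min_{z}\;\bigl(a_s(z) + d_G^{h}(u, z)\bigr),
\]
the minimum ranging over all $z$ reachable from $u$ within $h$ hops. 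This matches the stated intuition: $u$ explores its neighborhood and, for each source, picks the best skeleton node (or $s$ itself) to route through. Because the \hybrid model grants $\lambda = \infty$ bandwidth on local edges, $u$ can carry the entire vector $(a_s(\cdot))_{s \in V'}$ through the same $h$ rounds of relaxation, so all sources are processed simultaneously and no global edges are needed.

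For correctness I would verify the two inequalities defining an $(\alpha,\beta)$-approximation. Validity ($\tilde d(u,s) \ge d(u,s)$) is immediate: every seed obeys $a_s(z) \ge d(z,s)$ (using $\tilde d(z,s)\ge d(z,s)$ for skeleton seeds and $a_s(s)=0=d(s,s)$), hence each candidate satisfies $a_s(z) + d_G^{h}(u,z) \ge d(z,s) + d(u,z) \ge d(u,s)$ by the triangle inequality, and so does their minimum. For the approximation bound I would fix a shortest weighted $u$--$s$ path $P$. If $P$ uses at most $h$ hops, then $s$ is itself a seed with $a_s(s)=0$ and $d_G^{h}(u,s)=d(u,s)$, giving $\tilde d(u,s)\le d(u,s)$. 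Otherwise $P$ has more than $h$ hops, and I would locate a skeleton node $w \in M$ among the first $h$ nodes of $P$; since the prefix of $P$ up to $w$ is itself a shortest path of at most $h$ hops, $d_G^{h}(u,w)=d(u,w)$, and $w$ contributes the candidate
\[
  d(u,w) + \tilde d(w,s) \;\le\; d(u,w) + \alpha\, d(w,s) + \beta \;=\; d(u,s) + (\alpha-1)\, d(w,s) + \beta \;\le\; \alpha\, d(u,s) + \beta,
\]
using $d(u,w)+d(w,s)=d(u,s)$ (as $w$ lies on $P$) and then $d(w,s)\le d(u,s)$ with $\alpha\ge 1$.

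\textbf{Main obstacle.} The crux is guaranteeing the skeleton node $w$ within the first $h$ nodes of $P$ in the long-path case, which is precisely the shortest-path covering property of \Cref{def:skeleton}. The mild subtlety I expect to confront is that this property is phrased via the hop distance $hop(u,s)$, whereas $P$ may have more than $h$ hops even when $hop(u,s) < h$ (in weighted graphs a light path can be long in hops); I would close this gap by appealing to the sampling underlying that property, namely that each node joins $M$ independently with probability $n^{x-1}$ and $h = \tildeTheta{n^{1-x}}$, so any fixed block of $h$ consecutive nodes on a shortest path contains a member of $M$ \whp, with a union bound over the $O(n^2)$ relevant pairs keeping this simultaneously valid. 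The remaining cost is purely that of $h=\tildeBigO{n^{1-x}}$ rounds of local relaxation; since $\lambda=\infty$ there is no congestion on local edges and no global communication is used, which yields the claimed $\tildeBigO{n^{1-x}}$ round complexity.
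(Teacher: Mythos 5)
Your proof is correct and takes essentially the same route as the paper's: the paper likewise has each node $u$ learn its $h$-hop neighborhood and output $\min\{d^h(u,s),\ \min_{v\in M\cap N_G^h(u)}\{d^h(u,v)+\tilde{d}(v,s)\}\}$ (your seeded bounded-hop Bellman--Ford computes exactly this quantity), with the same validity argument via the triangle inequality and the same two-case upper bound using a skeleton node among the first $h$ nodes of a long shortest path. The one divergence is to your credit: the paper invokes Property~\ref{itm:skeleton:sp} directly without addressing that it is conditioned on $hop(u,s)\geq h$ rather than on the weighted shortest path having many hops, whereas you notice this mismatch and patch it correctly by returning to the underlying sampling argument with a union bound over pairs.
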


\section{Oracles in the \hybrid model}\label{sec:superCubeRootAlgorithms:tool}
This section is split into three parts. Initially, as preliminaries, we show simulations of the \local and \clique models in the \hybrid model, citing \cite{kuhn2020computing} for the \clique simulation. Then, we devote a section to each of the two new oracle models in order to introduce them and present their simulations in the \hybrid model.

\subsection{Model Simulation Preliminaries}

We will use simulations of the \local and \clique models as follows.

\begin{restatable}[\local Simulation]{lemma}{LocalSim}\label{claim:LOCALSimulation}
Given a graph $G=(V, E)$, and a skeleton graph $S_x=(M, E_S)$, it is possible to simulate one round of the \local model over $S_x$ within $\tilde O(n^{1-x})$ rounds in $G$ in the \hybrid model. That is, within $\tilde O(n^{1-x})$ rounds in $G$ in the \hybrid model, any two adjacent nodes in $S_x$ can communicate any amount of data between each other.
\end{restatable}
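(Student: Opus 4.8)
The plan is to exploit the unbounded bandwidth of local edges ($\lambda=\infty$) to flood messages across the $h$-hop neighborhood of each skeleton node, where $h=\Tilde\Theta(n^{1-x})$ is the hop bound from \Cref{def:skeleton}. The key observation is that, by \Cref{itm:skeleton:edge}, two skeleton nodes $u,v\in M$ are adjacent in $S_x$ exactly when there is a path of at most $h$ hops between them in $G$. Hence, to let adjacent skeleton nodes exchange data, it suffices to propagate each node's outgoing messages to everything within $h$ hops in $G$, and since local edges impose no congestion, this flooding costs only $h=\tilde O(n^{1-x})$ rounds.

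Concretely, each skeleton node $v\in M$, which by \Cref{def:skeleton} knows its incident skeleton edges, prepares for each skeleton neighbor $u$ the message $m_{v\to u}$ it wishes to send in the simulated \local round, tagged with the destination identifier $u$. I would then run a standard flooding procedure for exactly $h$ rounds over local edges only: every node $w\in V$ maintains the set of tagged messages it has received so far (initialized with its own outgoing messages if $w\in M$), and in each round forwards all of them across all of its local edges. Because $\lambda=\infty$, a single local edge can carry arbitrarily many $O(\log n)$-bit messages in one round, so no congestion arises regardless of how many skeleton nodes flood simultaneously or how much total data the simulated \local round involves.

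To argue correctness, a straightforward induction on the round number shows that after $i$ rounds of flooding every node within hop distance $i$ of a source $v$ has received all of $v$'s tagged messages. Taking $i=h$, every skeleton neighbor $u$ of $v$ — which lies within $h$ hops by \Cref{itm:skeleton:edge} — receives $m_{v\to u}$, and simply filters out those messages tagged with its own identifier. This holds simultaneously in both directions and for all source--destination skeleton pairs, thereby faithfully simulating one \local round over $S_x$, with the round count $\tilde O(n^{1-x})$ following immediately from $h=\Tilde\Theta(n^{1-x})$.

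The subtlety worth emphasizing, rather than a genuine obstacle, is that the congestion which normally makes multi-source flooding expensive simply vanishes under $\lambda=\infty$; the only things to state carefully are that the flooding uses local edges exclusively and that the $h$-hop reachability characterization of $E_S$ is exactly what matches the flooding radius. Everything else is routine.
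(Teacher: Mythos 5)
Your proposal is correct and matches the paper's approach: the paper's proof is a one-liner observing that, by Property~\ref{itm:skeleton:edge}, adjacent skeleton nodes are within $h=\tilde\Theta(n^{1-x})$ hops in $G$, so they can exchange arbitrary data over the unbounded-bandwidth local edges in $h$ rounds. Your flooding procedure with the induction on hop distance is simply a more explicit rendering of that same argument.
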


~\\The proof follows trivially due to the definition of the \hybrid model and Property~\ref{itm:skeleton:edge} in the definition of a skeleton graph $S_x$, since in $S_x$ two skeleton nodes are connected if they are within $\tilde \Theta(n^{1-x})$ hops in the original graph $G$. Thus, one round of the \local model  over $S_x$ is obtained in the \hybrid network in $\tilde O(n^{1-x})$ rounds, by having neighboring skeleton nodes communicate through the local edges.

\begin{restatable}[\clique Simulation]{lemma}{CCSim}\label{thr:CliqueInHybridSimulation} {\cite[Corollary 4.1.]{kuhn2020computing}}
    Given a graph $G=(V, E)$, and a skeleton graph $S_x=(M, E_S)$, for some constant $0 < x < 1$, it is possible to simulate one round of the \clique model over $S_x$ in \tildeBigO{n^{2x-1}+n^{\frac{x}{2}}} rounds of the \hybrid model on $G$, \whp That is, within \tildeBigO{n^{2x-1}+n^{\frac{x}{2}}} rounds of the \hybrid model on $G$, \whp, every node $v \in M$ can, for each node $u \in M$, each send a unique $O(\log n)$ bit message to $u$.
\end{restatable}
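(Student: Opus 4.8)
The plan is to reduce one round of the \clique on $S_x$ to a single load-balanced message-routing task over the global edges of the \hybrid model, crucially using \emph{all} $n$ nodes of $G$ as routing helpers rather than only the $\tildeBigO{n^x}$ skeleton nodes. In one \clique round each of the $\size{M}=\tildeBigO{n^x}$ skeleton nodes sends a distinct $O(\log n)$-bit message to every other skeleton node, so the total volume to deliver is $\size{M}^2=\tildeBigO{n^{2x}}$ messages. Since every node has global bandwidth $\gamma=O(\log n)$, the aggregate global capacity is $\tildeBigO{n}$ messages per round, which already heuristically explains the capacity-limited term $\tildeBigO{n^{2x}/n}=\tildeBigO{n^{2x-1}}$ — \emph{provided} the load can be spread perfectly evenly over all $n$ nodes. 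The entire difficulty is in realizing this even spreading while respecting each node's meager $O(\log n)$ global budget.

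First I would set up a helper structure: assign to each skeleton node a dedicated set of $\tildeBigO{n^{1-x}}$ helper nodes so that the helper sets partition $V$ (there are $\tildeBigO{n^x}$ skeleton nodes and $n$ nodes, i.e.\ $n/\size{M}=\tildeBigO{n^{1-x}}$ helpers each). A balls-into-bins/Chernoff argument shows a suitable random assignment is balanced \whp The helpers play a dual role: they absorb the messages a skeleton node must inject into the network, and they gather the messages it must receive, so that no individual skeleton node becomes a global-bandwidth bottleneck. With this structure in place, each skeleton node uses its local edges (of unbounded bandwidth $\lambda=\infty$) to partition its $\tildeBigO{n^x}$ outgoing messages among its helpers, leaving $\tildeBigO{n^{2x-1}}$ messages per helper; symmetrically, incoming messages are gathered from helpers over local edges within the skeleton radius $\tildeBigO{n^{1-x}}$, cf.\ \Cref{claim:LOCALSimulation}.

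Second I would route the messages between the sender- and receiver-helpers over the global edges, which behave exactly like the \ncc. To equalize the load on receiving helpers, I would send each message through a uniformly random intermediate node before forwarding it on — a standard two-phase randomized (Valiant-style) scheme — so that a Chernoff bound gives that every node relays $\tildeBigO{n^{2x-1}}$ messages \whp, costing $\tildeBigO{n^{2x-1}}$ rounds of global communication.

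The hard part is the interface between a skeleton node and its helpers, i.e.\ the injection/extraction step, and this is exactly what produces the additive $\tildeBigO{n^{x/2}}$ term. The subtlety is that a skeleton node cannot simply ship its $\tildeBigO{n^x}$ messages to distant helpers over global edges, since its own global budget is only $O(\log n)$; it must hand them off over local edges, and a \emph{sparse}, low-degree skeleton node cannot off-load to many neighbors in one round. Guaranteeing that every skeleton node can coordinate with enough helpers inside its $\tildeBigO{n^{1-x}}$-hop ball to inject (and extract) at the required rate is the crux, and a careful accounting of how many helpers are effectively reachable is what yields the $\tildeBigO{n^{x/2}}$ bottleneck — which dominates the capacity term precisely when $x<2/3$ and matches it at $x=2/3$. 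Since the statement is quoted verbatim from \cite{kuhn2020computing}, I would ultimately discharge this delicate balancing step by appealing to their routing machinery rather than re-deriving it.
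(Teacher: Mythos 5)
First, a structural point: the paper never proves this lemma at all --- it is imported verbatim, citation and all, as \cite[Corollary 4.1]{kuhn2020computing}, and the appendix section that was apparently reserved for a proof (\cref{app:SimsInTools}) is empty. So your closing move, ``discharge the delicate balancing step by appealing to their routing machinery,'' is in effect exactly what the paper does, and your sketch can only be judged as an attempted reconstruction of the cited proof. You do correctly identify the two sources of the bound (a global-capacity term and an injection/extraction term) and correctly locate the crossover at $x=2/3$.

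As a standalone argument, however, the sketch has two genuine gaps. (1) The helper construction is self-contradictory: a balls-into-bins random partition of all of $V$ places a skeleton node's helpers arbitrarily far away in $G$, so the hand-off ``over local edges'' is impossible; local hand-off forces helpers to lie inside the skeleton node's ball, where \emph{dedicated} sets of size $\tilde{\Theta}(n^{1-x})$ need not exist (a radius-$r$ ball is only guaranteed to contain about $r$ nodes), and where one node may be asked to serve many skeleton nodes. Fixing this requires the density property of the random sample --- any region containing $m$ skeleton nodes contains $\tilde{\Omega}(m\cdot n^{1-x})$ nodes \whp --- i.e.\ a Hall-type, locality-respecting assignment in the spirit of \cref{thr:reassignSkeletons}, not plain balls-into-bins. (2) The radius/load accounting does not yield the claimed bound. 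If helpers sit anywhere within the skeleton radius $\tilde{O}(n^{1-x})$, the local hand-off alone costs $\tilde{\Theta}(n^{1-x})$ rounds, which strictly exceeds $\tilde{O}(n^{2x-1}+n^{x/2})$ for \emph{every} $x<2/3$ --- precisely the regime where the $n^{x/2}$ term matters. The correct balance takes helpers within radius $T\approx n^{x/2}$: a skeleton node in a sparse region then has only about $T$ helpers, each carrying about $n^{x}/T$ messages (not $n^{2x-1}$), and the constraint $T\cdot T\gtrsim n^{x}$ is what forces $T=n^{x/2}$; the per-helper load $n^{2x-1}$ arises only in dense regions, via the density property above. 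You also leave unaddressed how senders (or your random Valiant-style intermediates) learn \emph{which} physical nodes act as a destination's helpers; this addressing step is itself part of the machinery you defer to. So: as a citation your proposal coincides with the paper, but as a proof it fails exactly in the regime $x<2/3$ where the statement is stronger than the trivial $\tilde{O}(n^{1-x}+n^{2x-1})$ bound.
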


\subsection{Simulating the \oracle Model }\label{sec:superCubeRootAlgorithms:tools:singleOracle}

Here, we define the \oracle model and then show how to efficiently simulate it over a skeleton graph in the \hybrid model.

\begin{definition}[\oracle Model]
\label{def:oracle}
    In the \oracle model over a network $G$, there exists one \emph{oracle} node $\ell$, which in every round can send to and receive from every node $v$ a number of $\bigO{\log n}$-bit messages that is equal to the degree of $v$ in $G$. 
\end{definition}

\begin{theorem}[\oracle Simulation]

\label{thr:oracleInHybridSimulation}
    Given a graph $G=(V, E)$, for every constant 
    $0 < x < 1$, there is an algorithm which simulates one round of the \oracle model, on a skeleton graph $S_x=(M, E_S)$, in $\tildeBigO{n^{1-x} + n^{2x-1}}$ rounds of the \hybrid model on $G$, \whp 

\end{theorem}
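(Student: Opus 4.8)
The plan is to realize one \oracle round as two symmetric communication tasks over the skeleton $S_x$: an \emph{uplink}, in which every skeleton node $v\in M$ must deliver its $\deg_{S_x}(v)$ messages to the oracle $\ell$, and a \emph{downlink}, in which $\ell$ must deliver $\deg_{S_x}(v)$ messages to each $v\in M$. Since the downlink is the uplink run backwards (the same routes carry traffic in the opposite direction), I would only analyze the uplink; the total traffic in either direction is $\sum_{v\in M}\deg_{S_x}(v)=2\size{E_S}=\tildeBigO{n^{2x}}$ in the worst case. Consistent with viewing the highest-degree node as the oracle, I would first let $\ell$ be a skeleton node of maximum degree in $S_x$: since each $v\in M$ knows its incident edges in $E_S$ (\Cref{def:skeleton}) it knows $\deg_{S_x}(v)$, so a single max-aggregation followed by broadcasting $\ell$'s identifier suffices, costing only $\tildeBigO{n^{1-x}}$ via \Cref{claim:LOCALSimulation} plus a few global rounds.

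The heart of the argument is a density observation that lets $\ell$ absorb the whole traffic quickly. By Property~\ref{itm:skeleton:edge} of \Cref{def:skeleton}, $\deg_{S_x}(v)=\size{N_G^h(v)\cap M}$, and since each node joins $M$ independently with probability $n^{x-1}$, a Chernoff bound gives $\size{N_G^h(v)}=\tildeTheta{\deg_{S_x}(v)\cdot n^{1-x}}$ for every $v$ with $\deg_{S_x}(v)=\tildeBigOmega{1}$, \whp. Thus the local neighborhood $N_G^h(\ell)$ contains $\tildeTheta{\Delta\cdot n^{1-x}}$ nodes, where $\Delta=\max_v\deg_{S_x}(v)$. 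Each such node can receive $\bigO{\log n}$ messages per round over global edges and relay them to $\ell$ over local edges (whose bandwidth is unbounded) within $h=\tildeBigO{n^{1-x}}$ hops; hence $\ell$ can absorb traffic at a rate of $\tildeBigOmega{\Delta\cdot n^{1-x}}$ messages per round, after an initial $\tildeBigO{n^{1-x}}$ pipeline-filling latency. Because $\ell$ has maximum degree, the total traffic is $\sum_{v}\deg_{S_x}(v)\le\size{M}\cdot\Delta=\tildeBigO{n^{x}\Delta}$ by Property~\ref{itm:skeleton:size}, so the number of rounds needed to drain all of it into $\ell$ is $\tildeBigO{n^{x}\Delta/(\Delta\, n^{1-x})}=\tildeBigO{n^{2x-1}}$.

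To actually move the messages, I would inject and route them in a density-aware, load-balanced manner. On the sending side, a high-degree node $v$ cannot push its $\deg_{S_x}(v)$ messages through its own $\bigO{\log n}$ global edges quickly, so it first disperses them over the $\tildeTheta{\deg_{S_x}(v)\, n^{1-x}}\ge\deg_{S_x}(v)$ nodes of $N_G^h(v)$ (one message per helper) in $\tildeBigO{n^{1-x}}$ local rounds, and each helper then injects a single message over a global edge. The injected messages are routed toward the absorber set $N_G^h(\ell)$ using a balanced routing (token dissemination) primitive, spreading the $\tildeBigO{n^{2x}}$ messages so that no relay and no absorber handles more than $\tildeBigO{n^{2x-1}}$ of them; finally the absorbers forward their messages to $\ell$ over local edges. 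Summing the dispersion, routing, and local-forwarding costs yields $\tildeBigO{n^{1-x}+n^{2x-1}}$, and running the identical schedule in reverse realizes the downlink within the same bound, \whp.

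The step I expect to be the main obstacle is the load-balanced global routing, i.e., guaranteeing that the $\tildeBigO{n^{2x}}$ messages can be delivered into the bounded-size absorber set $N_G^h(\ell)$ without any relay or absorber exceeding its $\bigO{\log n}$ per-round global capacity by more than polylogarithmic factors; this is where the density estimate $\size{N_G^h(\ell)}=\tildeTheta{\Delta\, n^{1-x}}$ and the maximality of $\deg_{S_x}(\ell)$ must be combined with a careful balls-into-bins argument. A secondary subtlety is making the Chernoff concentration hold uniformly over all skeleton nodes and all their $h$-hop neighborhoods simultaneously, \whp, and handling the low-degree regime $\deg_{S_x}(v)=\tildeBigO{1}$, where the neighborhood-size estimate is replaced by the cruder bound $\size{N_G^h(v)}\ge\deg_{S_x}(v)$ that still suffices for the dispersion step.
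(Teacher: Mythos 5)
Your density intuition and the accounting (total traffic $\tildeBigO{n^{x}\Delta}$ absorbed at rate $\tildeBigOmega{\Delta n^{1-x}}$) match the spirit of the paper, and you even select the same oracle (the maximum-degree skeleton node). But the step you yourself flag as ``the main obstacle'' is a genuine gap, not a deferred technicality: you give no mechanism by which the $\tildeBigO{n^{2x}}$ injected messages actually reach the absorber set $N_G^h(\ell)$ under the $\bigO{\log n}$ per-node global capacity. To aim a global message at an absorber, a helper must know which identifiers lie in $N_G^h(\ell)$; that set has up to $\tildeTheta{\Delta n^{1-x}}$ (possibly $\Theta(n)$) members and is known only near $\ell$, so announcing it is itself a dissemination problem of comparable difficulty. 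The one concrete primitive you invoke, token dissemination (\Cref{thr:tokenDissemination}), is a \emph{broadcast} primitive costing $\tildeBigO{\sqrt{k}+\ell}$ rounds; with $k=\tildeBigO{n^{2x}}$ tokens this is $\tildeBigO{n^{x}}$ rounds, which exceeds the target $\tildeBigO{n^{1-x}+n^{2x-1}}$ for every $x>1/2$ --- in particular for $x=2/3$, the setting used throughout the paper, where the target is $\tildeBigO{n^{1/3}}$ but dissemination would cost $\tildeBigO{n^{2/3}}$. A ``careful balls-into-bins argument'' cannot fill this hole: balls-into-bins balances load once a delivery mechanism exists; it does not create one.

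The paper closes exactly this hole by a reduction rather than a from-scratch construction (\cref{alg:oracleInHybridSimulation}): it takes as absorbers only the \emph{skeleton} neighbors of $\ell$ (not all of $N_G^h(\ell)$), whose identities are announced in a single \clique round, and whose number suffices by maximality, since $\deg_{S_x}(\ell)\geq\deg_{S_x}(v)$ allows the deterministic assignment ``$v$ sends its $i$-th message to the $i$-th neighbor of $\ell$'' (\cref{line:oracleInHybridSimulation:send}). That assignment is one round of the \clique model, delivered by the black-box simulation of \Cref{thr:CliqueInHybridSimulation} in $\tildeBigO{n^{2x-1}+n^{x/2}}$ rounds, after which one \local round over $S_x$ (\Cref{claim:LOCALSimulation}, $\tildeBigO{n^{1-x}}$ rounds) lets $\ell$ collect everything. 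The load-balanced global routing you are missing is precisely what is encapsulated inside \Cref{thr:CliqueInHybridSimulation}; without invoking it (or rebuilding its helper machinery), your argument does not close. Two smaller slips: a global max-aggregation over $S_x$ is not a single \local round (use \Cref{thr:aggregateAndBroadcast} or a \clique broadcast of degrees, as the paper does); and ``each helper injects a single message'' ignores that one node of $G$ can serve as helper for many skeleton nodes simultaneously, so the per-helper injection load must itself be bounded (it is $\tildeBigO{n^{2x-1}}$, which is fine, but requires an argument).
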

\begin{proof}

    We prove the claim by showing how to simulate a round of the \oracle model in $O(1)$ rounds of the \clique model and 1 round of the \local model. Then, invoking the simulations of~\Cref{claim:LOCALSimulation,thr:CliqueInHybridSimulation}, gives the desired round complexity in the \hybrid model.

We show how to send messages to the oracle, and the receiving part is symmetric. The pseudocode is given by \cref{alg:oracleInHybridSimulation}.
        First, each $v\in M$ broadcasts its degree in $S_x$ using one round of the \clique model (\cref{line:oracleInHybridSimulation:broadcastDegree}) and selects as an oracle $\ell$ the node with largest degree in $S_x$, breaking ties by identifier (\cref{line:oracleInHybridSimulation:selectOracle}). Then, the identifiers of the neighbors of $\ell$ are broadcast using one round of the \clique model (\cref{line:oracleInHybridSimulation:broadcastNeighbors}). The actual messages are sent to these neighbors instead of to $\ell$ itself (\cref{line:oracleInHybridSimulation:send}) and  $\ell$ learns all these messages in $1$ round of the \local model in \cref{line:oracleInHybridSimulation:receive}.
    
    Clearly, all the nodes select the same oracle $\ell$ (\cref{line:oracleInHybridSimulation:selectOracle}). Due to the definition of the \oracle model, each node $v\in M$ has $\deg_{S_x}(v)$ messages to send, and since $\deg_{S_x}(\ell)\geq \deg_{S_x}(v)$, there are enough neighbors of $\ell$ to receive one message from $v$ per neighbor, which is why \cref{line:oracleInHybridSimulation:send} can work. 
\end{proof}
   \begin{algorithm}
        \caption{Simulating the \oracle model in the \clique with \local.}
        \label{alg:oracleInHybridSimulation}
        
        \clique model: $v\in M$ broadcasts $\deg_{S_x}(v)$ \label{line:oracleInHybridSimulation:broadcastDegree} 
        
        Select an oracle $\ell\gets \arg\max_{v\in M}\set{\left(\deg_{S_x}(v), v\right)}$ \label{line:oracleInHybridSimulation:selectOracle} 
        
        \clique model: $v\in M$ broadcasts if it is a neighbor of $\ell$  \label{line:oracleInHybridSimulation:broadcastNeighbors} 
        
        \clique model: $v\in M$ sends $i$-th message to $i$-th neighbor of $\ell$ for each $i$ \label{line:oracleInHybridSimulation:send}
        
        \local model: $\ell$ collects the messages from its neighbors \label{line:oracleInHybridSimulation:receive} 
    \end{algorithm}

\subsection{Simulating the \tiered Model}\label{app:tiered}
We further enhance our \oracle model and define the \tiered model, where, roughly speaking, all nodes in parallel can learn all the edges adjacent to nodes with degrees in lower degree buckets. To simulate the stronger \tiered model over a skeleton graph in the \hybrid model, we need additional insights. Here, we use the fact that when we scatter messages independently at random, denser neighborhoods are more likely to receive a given message than sparse neighborhoods. In other words, while for simulating the \oracle model we used the \local round only to concentrate information in a single node $\ell$, here we exploit the information that \emph{each} node can gather from its neighborhood.
 
\begin{definition}[\tiered Model]
\label{def:tiered}
    In the \tiered model over a network $G$, in every round, suppose each node $v$ has a set of $\bigO{\log n}$-bit messages $M_v$ of size $|M_v| = \deg(v)$, then each node $u$ can receive all messages in $M_v$ for every $v$ such that $\deg(u)\geq \deg(v)/2$.
\end{definition}

To simulate the \tiered model, we first prove the following model-independent tool.

\begin{restatable}[Sampled neighbors]{lemma}{DegSim}\label{lemma:autoSimulator}
    Given is a graph $G=(V, E)$. For a value $c\leq n$, there is a value $x=\tildeBigO{n / c}$ such that the following holds \whp:
    Let $V'\subseteq V$ be a subset of $\size{V'}=x$ nodes sampled uniformly at random from $M$.
    Then each node $u\in V$ with $\deg{(u)}\geq c$ has a neighbor in $V'$.

\end{restatable}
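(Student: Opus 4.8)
The plan is to use a standard hitting-set argument: I will show that a uniformly random subset $V'$ of $V$ of size $x = \tildeBigO{n/c}$ contains a neighbor of every node of degree at least $c$, with high probability, and then close the argument with a union bound. (I read the phrase ``sampled from $M$'' in the statement as sampling from the ground set $V$, consistent with $V'\subseteq V$.)

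First I would fix an arbitrary node $u\in V$ with $\deg(u)\geq c$. Its neighborhood $N(u)$ then contains at least $c$ of the $n$ nodes of $V$, so it suffices to bound the probability that the random sample $V'$ avoids $N(u)$ entirely. Sampling $x$ nodes uniformly without replacement, this probability equals $\binom{n-\size{N(u)}}{x}/\binom{n}{x}$, which I would bound by $\prod_{i=0}^{x-1}\frac{n-c-i}{n-i}\leq (1-c/n)^x\leq e^{-cx/n}$, using that each factor is at most $1-c/n$ (the factors decrease in $i$) together with the inequality $1-t\leq e^{-t}$. In particular, sampling without replacement only decreases the miss probability relative to the independent with-replacement case, so this product bound handles both views at once.

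Next I would fix the value of $x$. Taking $x=\lceil (c'+1)(n/c)\ln n\rceil=\tildeBigO{n/c}$ for a suitable constant $c'>1$ forces $e^{-cx/n}\leq n^{-(c'+1)}$, so the probability that a fixed high-degree node $u$ has no neighbor in $V'$ is at most $n^{-(c'+1)}$. A union bound over the at most $n$ nodes $u$ with $\deg(u)\geq c$ then yields a total failure probability of at most $n^{-c'}$, which is exactly the \whp guarantee required by the statement.

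The argument is essentially routine, so there is no real obstacle; the only two points that warrant care are the ones already noted: (i) justifying that sampling without replacement is at least as good as independent sampling, which the displayed product bound settles directly, and (ii) the mild well-definedness condition $x\leq n$, which holds whenever $c=\bigOmega{\log n}$ (the regime relevant to the \tiered-model simulation), and is otherwise satisfied trivially by taking $V'=V$.
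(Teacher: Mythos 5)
Your proposal is correct and follows essentially the same route as the paper's proof: bound the probability that a fixed node $u$ with $\deg(u)\geq c$ has no neighbor in $V'$ by $(1-c/n)^x\leq e^{-cx/n}$ and choose $x=\tildeBigO{n/c}$ to make this polynomially small. In fact your write-up is more careful than the paper's one-line argument, which tacitly treats the sample as independent draws and leaves the union bound over all high-degree nodes implicit, whereas you handle the without-replacement sampling via the product bound and carry out the union bound explicitly.
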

\begin{proof}
	For some node $u\in V$, the probability of not having a neighbor sampled to the set  $V'$ is $(1 - \deg{(u)}/n)^{x}\leq e^{-x\cdot\det{(u)}/n}\leq e^{x\cdot c / n}$. Thus, there exists $x=\tildeBigO{n/c}$ such that node $u$ has a neighbor in the set $V'$, \whp.
\end{proof}

Finally, we show how to simulate the \tiered model over the skeleton graph in the \hybrid model.

\begin{theorem}[\tiered Simulation]\label{thr:tieredInHybridSimulation}
    Given a graph $G=(V, E)$, for every constant 
    $0 < x < 1$, there is an algorithm which simulates one round of the \tiered model, on a skeleton graph $S_x=(M, E_S)$, in $\tildeBigO{n^{1-x} + n^{2x-1}}$ rounds of the \hybrid model on $G$, \whp
\end{theorem}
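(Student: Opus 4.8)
The plan is to mimic the structure of the \oracle simulation (\Cref{thr:oracleInHybridSimulation}): I will implement one round of the \tiered model using $\tildeBigO{1}$ rounds of the \clique model together with a single round of the \local model over $S_x$, and then invoke \Cref{claim:LOCALSimulation,thr:CliqueInHybridSimulation}. Since one \clique round costs $\tildeBigO{n^{2x-1}+n^{x/2}}$ and one \local round costs $\tildeBigO{n^{1-x}}$, and $n^{x/2}\le\max(n^{1-x},n^{2x-1})$ for all $0<x<1$, polylogarithmically many rounds of each already yield the claimed $\tildeBigO{n^{1-x}+n^{2x-1}}$ bound. The whole content of the proof is therefore to show that a tiered round can be realized with this few \clique and \local rounds.

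The key idea is \emph{density-aware random scattering}. Fix a sender $v\in M$ with $d_v=\deg_{S_x}(v)$ and message set $M_v$ of size $|M_v|=d_v$; by \Cref{def:tiered} its messages must reach exactly the receivers $u$ with $\deg_{S_x}(u)\ge d_v/2$. I will have $v$ scatter \emph{each} of its $d_v$ messages, independently, to a uniformly random set of $\tildeBigO{n^x/d_v}$ nodes of $M$, using the \clique model (which lets any node in $M$ address any other node in $M$). Applying \Cref{lemma:autoSimulator} to the skeleton graph $S_x$, which has $|M|=\tilde O(n^x)$ nodes, with threshold $c=d_v/2$, a random sample of this size guarantees \whp that every node $u$ with $\deg_{S_x}(u)\ge d_v/2$ has an $S_x$-neighbor holding that particular message. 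A single \local round over $S_x$ (\Cref{claim:LOCALSimulation}) then lets every node $u$ collect all messages held by its neighbors and keep exactly those it is entitled to; correctness follows by a union bound over all scattered messages and all receivers. Note that, unlike in \Cref{thr:oracleInHybridSimulation}, no symmetric argument is needed, since the \tiered model constrains only what each node may \emph{receive}.

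The crux is the load balancing that keeps the \clique phase to $\tildeBigO{1}$ rounds. The number of copies is chosen to scale \emph{inversely} with $d_v$, so that each sender $v$ emits $d_v\cdot\tildeBigO{n^x/d_v}=\tildeBigO{n^x}$ copies in total, independently of its degree; since a node can send $\tilde O(n^x)$ distinct messages per \clique round and its $\tildeBigO{n^x}$ random destinations among the $\tilde O(n^x)$ nodes of $M$ collide only polylogarithmically often (balls into bins), the sending side finishes in $\tildeBigO{1}$ \clique rounds. On the receiving side, the total number of scattered copies is $\sum_{v\in M} d_v\cdot\tildeBigO{n^x/d_v}=\tildeBigO{n^{2x}}$, spread uniformly over the $\tilde O(n^x)$ nodes of $M$, so by a Chernoff bound each node receives $\tildeBigO{n^x}$ copies \whp and the receiving side also fits in $\tildeBigO{1}$ \clique rounds. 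The subsequent \local round is unaffected by the data volume, as its cost $\tildeBigO{n^{1-x}}$ is independent of how much each neighbor forwards.

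I expect the main obstacle to be exactly this balancing argument, and in particular verifying the concentration when the copies of distinct messages are not fully independent and are sampled without replacement. This should be routine via a union bound over the $O(\log n)$ degree buckets together with standard balls-into-bins and Chernoff estimates, but it is precisely where the density-awareness — a dense receiver exploiting its many neighbors, and a per-degree choice of copy count — is essential to avoid overloading either the \clique or the \local phase.
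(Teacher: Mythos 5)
Your proposal is correct and follows essentially the same route as the paper's proof: scatter each of $v$'s messages to $\tildeBigO{|M|/\deg_{S_x}(v)}$ uniformly random skeleton nodes, invoke the sampled-neighbors lemma (\Cref{lemma:autoSimulator}) so that every node of at least half the sender's degree has a neighbor holding each message \whp, collect via one \local round, and charge the $\tildeBigO{|M|}$ per-node send/receive load to $\tildeBigO{1}$ \clique rounds (the paper cites Lenzen's routing for this last step, where you argue it directly via balls-into-bins/Chernoff — an immaterial difference). Your bookkeeping is in fact slightly more careful than the paper's, which writes the copy count as $\tildeBigO{n/\deg_{S_x}(v)}$ where the relevant universe size is $|M|=\tilde O(n^x)$, as you have it.
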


\begin{proof}
     We prove the claim by reducing one round of the \tiered model to $\tildeBigO{1}$ rounds of the \clique model followed by a round of the \local model on the skeleton graph $S_x$. By \cref{claim:LOCALSimulation,thr:CliqueInHybridSimulation}, we obtain that the resulting round complexity is $\tildeBigO{n^{1-x} + n^{2x-1}}$.
     
	For each $v \in M$, let $M_v$ be the set of messages, of size $|M_v| = \deg_{S_x}{(v)}$, which $v$ desires to broadcast. For each message in $M_v$ node $v\in M$ samples uniformly at random $x=\tildeBigO{2\cdot n / \deg_{S_x}{(v)}}$ nodes of $M$ and sends the message to those nodes. As each node sends and receives $\tildeBigO{\size{M}}$ messages this can be done using with the well known routing theorem of Lenzen~\cite[Theorem 3.7]{Lenzen} by simulating $\tildeBigO{1}$ rounds of the \clique model. Alternatively, this can be done in the same round complexity  by applying the algorithm for \emph{token routing} \cite[Theorem 2.2]{kuhn2020computing}. Afterwards, we simulate one round of the \local model over $S_x$ for each node to learn tokens received by its neighbors in $S$. Due to \fullref{lemma:autoSimulator}, each node $u\in V$ learns messages from each $v$ such that $\deg_{S_x}{(u)}\geq \deg_{S_x}{(v)} / 2$ \whp
\end{proof}
\section{Shortest Paths Algorithms}\label{sec:superCubeRootAlgorithms}

\subsection{Warm-Up: Exact SSSP}
As a warm-up, we show how to compute exact SSSP in the \oracle model, and then we simulate this on a skeleton graph in the \hybrid model in order to get exact SSSP in the \hybrid model within $\tilde{O}(n^{1/3})$ rounds. We note that later, in \Cref{sec:Nthird}, we obtain this complexity for exact distances from a much larger set, of $O(n^{1/3})$ sources.

\begin{lemma}[Exact SSSP in the \oracle Model]\label{lemma:oracle:exactSSSP}
	There is a \emph{deterministic} algorithm in the \oracle model that given a weighted graph $G=(V, E)$ and source $s\in V$ solves exact SSSP in \bigO{1} rounds.
\end{lemma}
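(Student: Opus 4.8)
The plan is to exploit the defining power of the \oracle model: the oracle $\ell$ can receive $\deg(v)$ messages from every node $v$ in a single round, which is exactly the budget needed for it to learn the entire edge set of $G$. First I would have each node $v$ send each of its $\deg(v)$ incident edges---together with the edge weight and the identifiers of both endpoints---to the oracle. Every such edge description fits in $O(\log n)$ bits, since identifiers are $O(\log n)$ bits and weights are bounded by $W = \poly(n)$, and node $v$ has exactly $\deg(v)$ messages of budget toward $\ell$, so this gather step completes in one round. After it, $\ell$ knows all of $G$; it also learns which node is the source $s$, for instance by having $s$ tag its transmitted edges with a source flag.

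Next, the oracle computes $d_G(s, v)$ for every $v \in V$ by running any exact weighted shortest-path procedure, say Dijkstra's algorithm, as a purely local computation that consumes no communication rounds. Finally, in a second round, $\ell$ sends to each node $v$ the single value $d_G(s, v)$; this uses just one of the $\deg(v) \ge 1$ messages the oracle is permitted to send to $v$. The only nodes this does not reach are isolated ones with $\deg(v) = 0$, but such a node can settle its own distance without any help: if it is $s$ its distance is $0$, and otherwise it has no path to $s$ and outputs $\infty$. Since edge collection, Dijkstra, and distance distribution are all deterministic, the whole algorithm is deterministic and runs in $O(1)$ rounds, as required.

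The one point that genuinely needs care---and which is precisely why the model renders the task trivial---is confirming that concentrating all of $G$ at $\ell$ costs only a single round. The oracle receives $\sum_{v} \deg(v) = 2|E|$ messages, which may be as large as $\Theta(n^2)$; under a model with a global per-node bandwidth cap this would be hopeless. In the \oracle model, however, the bound in \Cref{def:oracle} is stated \emph{per sender}: $\ell$ may simultaneously absorb $\deg(v)$ messages from each $v$, so there is no aggregate bottleneck at the oracle and one round suffices. Thus the entire difficulty is folded into the model's definition, and the lemma reduces to the two-round gather-and-scatter sketched above.
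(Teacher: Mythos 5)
Your proposal is correct and follows essentially the same two-round gather-and-scatter argument as the paper: all nodes send their incident edges to the oracle $\ell$ in one round, $\ell$ computes the distances locally, and $\ell$ returns $d(s,v)$ to each $v$ in a second round. Your additional attention to message encoding, the source flag, and isolated nodes with $\deg(v)=0$ only makes explicit details the paper leaves implicit.
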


\begin{proof}
    
    Let $s\in V$ be the source node. 
    We solve the problem in two communication rounds. In the first round, 
    oracle $\ell$ learns all of $E$ by receiving from each node $v$ its adjacent edges. Afterwards, oracle $\ell$, given all the edges in the graph $G$, locally computes the distance from $s$ to every other node. 
    In the second round, 
    oracle $\ell$ sends for each $v \in V$ the value $d(s, v)$. It is clear that the algorithm computes SSSP from $s \in V$, and that it takes two rounds in the \oracle model.
\end{proof}

\exactSSSP*

\begin{proof}

    Let $s$ be the source node, and let $x=2/3$. 
    We start by constructing a skeleton graph $S_x=(M, E_S)$, 
    by sampling nodes with probability $n^{-1/3}$ and using \fullref{claim:skeletonOnSampled}. Then, we simulate the algorithm given in \Cref{lemma:oracle:exactSSSP} 
    in the \oracle model, which
    computes the distance $d_S(s, v)$ from $s$ to each node $v\in M$. 
    By Property~\ref{itm:skeleton:distance} of the skeleton graph, for every $v\in M$, it holds that $d_S(s, v)=d_G(s, v)$. To extend this and compute the distance from $s$ for each node $v\in V$, we apply
    \fullref{thr:computeMSSPFromMSSPOnSkeleton}.

    Constructing the skeleton graph takes $\bigO{h}=\tildeBigO{n^{1/3}}$ rounds \whp, by \fullref{claim:skeletonOnSampled}.
    Simulating the algorithm from \Cref{lemma:oracle:exactSSSP} completes in $\tildeBigO{n^{1/3}}$ rounds \whp by \fullref{thr:oracleInHybridSimulation}. Applying \fullref{thr:computeMSSPFromMSSPOnSkeleton} takes $\tildeBigO{n^{1/3}}$ rounds. Therefore, overall, the execution of the algorithm completes in $\tildeBigO{n^{1/3}}$ rounds \whp
\end{proof}

\subsection{Exact \texorpdfstring{$n^{x}$-}{M}RSSP}\label{subsec:exactRMSSP}

Recall that in \fullref{def:mrssp}, we are given set of roughly $n^{x}$ sources sampled independently with probability $n^{x-1}$, and we need for each node to compute its distance to each source. We do so by constructing a skeleton graph $S_x$ from the random sources. We show that using one round of the \tiered model, and $O(\log n)$ rounds of the \clique model, one can solve APSP over $S_x$. To do so, we split the nodes of the graph into $\ceil*{\log n}$ tiers by degree and compute APSP by proceeding tier after tier and computing distances from current tier to all the tiers below.

\begin{restatable}[APSP in \clique with \tiered]{lemma}{cliqueWithTieredExactRAPSP}\label{thr:cliqueWithTiered:exactRAPSP}
There is a \emph{deterministic} algorithm which, given a weighted graph $G=(V, E)$, solves exact APSP on $G$ using \bigO{\log{\size{V}}} rounds of the \clique model and one round of the \tiered model.
\end{restatable}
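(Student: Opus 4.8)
The plan is to bucket the vertices by degree into $O(\log\size{V})$ \emph{tiers} and to run a Floyd--Warshall-style elimination that processes one tier per phase, using the single round of the \tiered model to hand every node the local structure it needs and the \clique rounds only to move distance labels. Concretely, let $T_j=\{v: 2^j\le\deg(v)<2^{j+1}\}$ and $V_{\le j}=\bigcup_{i\le j}T_i$, so there are $O(\log\size{V})$ tiers. In the one \tiered round every node $v$ broadcasts its $\deg(v)$ incident (weighted) edges; by the receiving rule of the model, a node $u$ obtains every message of each $v$ with $\deg(v)\le 2\deg(u)$, so every $u\in T_j$ (and every node of higher tier) learns all edges incident to $V_{\le j}$. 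Thus each tier-$j$ node holds the induced subgraph $G[V_{\le j}]$ together with its boundary edges, and can therefore compute \emph{locally, at no communication cost}, all restricted distances $d_{G[V_{\le j}]}(s,b)$ --- this local computation is exactly the leverage the \tiered model buys us.

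I would then process tiers from sparsest to densest, $j=0,1,\dots,k$, maintaining the invariant that after phase $j$ every node $v$ knows $d^{(j)}(v,b)$ for all $b$, where $d^{(j)}(v,b)$ is the length of a lightest $v$--$b$ path all of whose \emph{internal} vertices lie in $V_{\le j}$. The base case $d^{(-1)}$ is just incident edge weights, and since $V_{\le k}=V$ the final value $d^{(k)}=d_G$ is the exact APSP answer. The step from $d^{(j-1)}$ to $d^{(j)}$ rests on a \emph{single-entry-point} identity: for $b\in V_{\le j}$,
\[
 d^{(j)}(a,b)=\min\Bigl(d^{(j-1)}(a,b),\ \min_{s\in T_j}\bigl[d^{(j-1)}(a,s)+d_{G[V_{\le j}]}(s,b)\bigr]\Bigr),
\]
which holds because, walking along an optimal $d^{(j)}$-path, the portion up to its first vertex $s\in T_j$ uses only $V_{\le j-1}$ internally (cost $d^{(j-1)}(a,s)$) while the remainder from $s$ to $b$ stays entirely inside $V_{\le j}$ (cost $d_{G[V_{\le j}]}(s,b)$); targets outside $V_{\le j}$ are handled by the symmetric, two-sided form of this relation. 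Correctness of the whole scheme then follows by induction on $j$.

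The identity turns each phase into one min-plus relaxation through $T_j$ whose middle matrix $\bigl(d_{G[V_{\le j}]}(s,b)\bigr)_{s\in T_j,\,b\in V_{\le j}}$ is already held, row by row, at the tier-$j$ nodes. I would implement it by having every source $a$ ship its short vector $\bigl(d^{(j-1)}(a,s)\bigr)_{s\in T_j}$ to the tier-$j$ nodes, letting those nodes (which know $G[V_{\le j}]$) perform the relaxation, and scattering the updated row entries back; when $T_j$ is small the symmetric option of broadcasting the tier's data and relaxing locally is cheaper. \textbf{The main obstacle} is showing that all $O(\log\size{V})$ phases together cost only $O(\log\size{V})$ \clique rounds, i.e.\ that each phase's gather/scatter is realizable in $O(1)$ rounds with no node sending or receiving more than $\bigO{\size{V}}$ words. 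This is a load-balancing argument: the degree bucketing is precisely what makes the tier-$j$ nodes numerous enough to serve as balanced aggregation hubs for the length-$\size{T_j}$ access vectors of the lower tiers, so that Lenzen's routing theorem delivers both the gather and the scatter in a constant number of \clique rounds. Handling the two regimes uniformly --- a populous low tier versus a sparse high tier --- and charging the per-pair relaxations so their total never exceeds the $O(\log\size{V})$ budget is the delicate part I expect to spend the most care on.
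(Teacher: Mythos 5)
Your ingredients match the paper's: the same degree tiers $T_j$, the same use of the single \tiered round to let every node of tier $\geq j$ learn all edges incident to the lower tiers, and a first-entry-point min-plus recurrence whose correctness argument is essentially the paper's (the paper uses the \emph{last} node of the top part on the path, but this is symmetric). The genuine gap is exactly where you flagged it, and it is not merely delicate --- as proposed, it fails. In your bottom-up order, a low-degree source $a$ cannot evaluate $\min_{s\in T_j}\bigl[d^{(j-1)}(a,s)+d_{G[V_{\le j}]}(s,b)\bigr]$ itself, since it does not know $G[V_{\le j}]$; the relaxation must be outsourced to hubs that do, i.e., to nodes of tier $\geq j$. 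The \emph{gather} direction is indeed cheap (total volume $\size{V}\cdot\size{T_j}$ into $\size{T_{\geq j}}$ receivers), but the \emph{scatter} is not: the hubs must return, per served source, a row of up to $\size{V_{\le j}}$ updated entries, so the phase's output volume is $\Theta\left(\size{V}\cdot\size{V_{\le j}}\right)$ words, all of which must be \emph{sent} by the at most $\size{T_{\geq j}}$ nodes holding the middle matrix. Since Lenzen routing caps each node at $\bigO{\size{V}}$ words per \clique round, phase $j$ needs $\bigOmega{\size{V_{\le j}}/\size{T_{\geq j}}}$ rounds. Your claim that degree bucketing makes the tier-$j$ nodes "numerous enough" is unfounded: tier sizes can be arbitrarily lopsided. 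Take a clique on $\sqrt{n}$ nodes attached by sparse edges to a cycle on $n-\sqrt{n}$ nodes; in the clique tier's phase, roughly $n$ cycle sources must be served by $\sqrt{n}$ hubs, and $\Omega(n^2)$ entries genuinely change (most cycle-to-cycle shortest paths shortcut through the clique), forcing $\bigOmega{\sqrt{n}}$ \clique rounds for that single phase. No hub assignment avoids this: only high-degree nodes hold the needed information, and their aggregate outgoing bandwidth is the bottleneck.

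The paper escapes this by reversing the processing order: tiers go top-down, from densest to sparsest. In iteration $i$, each node $v\in T_i$ computes \emph{its own row only}, entirely locally, via $\tilde d(v,u)=\min\bigl\{d_{\le i}(v,u),\,\min_{w\in T_{>i}}\bigl[\tilde d(v,w)+d_{\le i}(w,u)\bigr]\bigr\}$: it can compute $d_{\le i}(\cdot,\cdot)$ from the \tiered round, and it knows the exact $\tilde d(v,w)=d(v,w)$ for every higher-tier $w$ because $w$, in its earlier iteration, sent precisely this one word to $v$. Each iteration's communication is thus one message per ordered pair --- a single \clique round --- with no hubs and no load-balancing argument at all. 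This reversal is not cosmetic; it is the missing idea that makes the $\bigO{\log\size{V}}$ budget work, and once you adopt it your scheme becomes the paper's proof.
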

\begin{proof}

    The pseudocode for the algorithm appears in \cref{alg:cliqueTieredexactRAPSP}. We partition the nodes $V$ by their degrees into $\ceil*{\log{\size{V}}}$ tiers, $T_j=\Set{v\in V\colon 2^j\leq \deg(v)<2^{j+1}}$ for $0 \leq j < \ceil*{\log{\size{V}}}$. Denote by $T_{>i}=\bigcup_{k> i}{T_k}$ the nodes in all tiers $k> i$ and by $T_{\leq i}=\bigcup_{k\leq i}{T_k}$ the nodes in all tiers $k\leq i$. Similarly, define $T_{\geq i}$ and $T_{< i}$. Denote by $d_{\leq i}(u, v)$ the weight of the shortest path between $u$ and $v$ that uses only edges adjacent to at least one node in $T_{\leq i}$.

    \begin{algorithm}
        \tiered model: each $v\in T_i$ broadcasts to $u\in T_{\geq i}$ its incident edges  \label{line:sendToUpperTiers}  
        
        \For{$i=\ceil*{\log \size{V}} - 1$ downto $0$ \label{loop:tiers}}{
            For each node $u\in T_{\leq i}$, each node $v\in T_{i}$ computes $\tilde{d}(v, u)\gets \min{\set{d_{\leq i}(v, u), \min_{w\in T_{>i}}{\set{\tilde{d}({v, w}) + d_{\leq i}(w, u)}}}}$
            \label{line:computeMSSP}

            \clique model: $v\in T_{i}$ sends to $u\in T_{\leq i}$, the value $\tilde{d}(v, u)$   \label{line:informDistanceToItself} 
        }
        
        \caption{\textbf{Exact-APSP}: Computes exact APSP using the \clique and \tiered models}
        \label{alg:cliqueTieredexactRAPSP}
    \end{algorithm}

    The outline of our algorithm is as follows. We start by having each node $v\in T_i$ broadcast its incident edges to all the nodes in tiers greater than or equal to its own, that is, to all $u\in T_{\geq i}$, using one round of the \tiered model (\cref{line:sendToUpperTiers}). Afterwards, in the loop in~\cref{loop:tiers}, we compute the solution tier by tier, starting from the topmost tier, which contains nodes knowing all the edges in the graph. While processing the $i$-th tier, every node $v \in T_i$ already knows its distance to every node in $T_{>i}$, and so computes its distances to every node $u \in T_{\leq i}$. A shortest path between such $v$ and $u$ can either pass through edges which are all known to $v$, or be broken into a subpath from $v$ to some node $w \in T_{>i}$ and then a path from $w$ to $u$ which is known to $v$. Thus, we compute the distance from $v\in T_i$ to the nodes $T_{\leq i}$ (\cref{line:computeMSSP}). On \cref{line:informDistanceToItself}, node $v\in T_i$, which knows for each node $u\in T_{\leq i}$ the distance to $u$, sends it to $u$.
    
    For each $u, v \in V$, \cref{alg:cliqueTieredexactRAPSP} outputs a value $\tilde{d}(u, v)$. We show that it is the correct distance in $G$, that is $\tilde{d}(u, v) = d(u, v)$. 
    
    One round of the \tiered model suffices for ensuring that for each tier, $T_i$, every node $v \in T_i$ knows all the edges incident to all the nodes $u \in T_{\leq i}$. Let $v\in T_i$, and $u\in T_j$ such that $i\geq j$, observe that it holds that $\deg(v)\geq 2^{i}\geq \frac{1}{2}2^j=\frac{1}{2}\deg(u)$, and therefore after \cref{line:sendToUpperTiers} node $v$ knows the edges incident to $u$. Thus, each node $v \in T_i$ knows enough information to compute the function $d_{\leq i}$, which is the distance function in $G$ limited to edges incident to nodes in $T_{\leq i}$.
    
    By induction on tier index $i$, we show that after iteration $i$ of the loop in~\cref{loop:tiers} all the nodes in $V$ know the exact distances to all nodes in tiers $T_{\geq i}$. 

    Base case: In iteration $i=\ceil*{\log{\size{V}}} - 1$, node $v\in T_{\ceil*{\log{\size{V}}} - 1}$ (if exists) in the topmost tier knows about all the edges in $E$ since it knows about all edges incident to nodes $T_{\leq \ceil*{\log{\size{V}}} - 1} = V$. Thus, $v$ can compute the solution to the entire APSP on $G$, since $d_{\leq \ceil*{\log{\size{V}}} - 1} = d$. Since the set  $$\set{d({v, w}) + d_{\leq \ceil*{\log{\size{V}}} - 1}(w, u)}_{w\in T_{>\ceil*{\log{\size{V}}} - 1}}$$ is empty, we get that $\tilde{d}(v, u)=d_{\leq \ceil*{\log{\size{V}}} - 1}(v, u)=d(v, u)$. That is, node $v\in T_{\ceil*{\log{\size{V}}} - 1}$ computes for each other node $u\in V$ its weighted distance $d(v, u)$ and sends it to $u$ on \cref{line:informDistanceToItself}. 
    
    Induction Step: In iteration $i<\ceil*{\log{\size{V}}} - 1$, consider $v\in T_i$ and $u\in T_{\leq i}$, and let $P$ be a shortest path between them. Recall that node $v$ can locally compute $d_{\leq i}$, and thus knows the value $d_{\leq i}(v, u)$ and for each $w \in T_{>i}$, it knows the value $d_{\leq i}(w, u)$. Further, for each $w\in T_{>i}$, the value $\tilde{d}(v, w)=d(v, w)$ is known to $v$ from one of the previous iterations of the
    loop in~\cref{loop:tiers}, by the induction assumption. All values in the set $\set{\tilde{d}({v, w}) + d_{\leq i}(w, u)}_{w\in T_{>i}}\cup\set{d_{\leq i}(v, u)}$ are either infinite or correspond to some (not necessary simple) path from $v$ to $u$, thus $\tilde{d}(v, u)\geq d(v, u)$. To show that $\tilde{d}(v, u)\leq d(v, u)$, we consider two cases. If $P$ does not contain nodes from $T_{>i}$, then $d_{\leq i}(v, u) = d(v, u)$ is the length of $P$. Otherwise, let $w'\in T_{>i}$ be the last node on $P$ (closest to $u$) which belongs to $T_{>i}$. By the induction hypothesis, $v$ knows $\tilde{d}(v, w')=d(v, w')$. Moreover, the subpath from $w'$ to $u$ only contains edges with at least one endpoint incident to node in $T_{\leq i}$, thus $d_{\leq i}(w', u) = d(w', u)$. For this node $w'$ the value $\set{\tilde{d}({v, w'}) + d_{\leq i}(w', u)}$ belongs to ${\set{\tilde{d}({v, w}) + d_{\leq i}(w, u)}}_{w\in T_{>i}}$. Thus, in both cases the computed $\tilde{d}(v, u)$ is at most the weighted length of $P$. Hence, $\tilde{d}(v, u)=d(v, u)$. On \cref{line:informDistanceToItself}, node $v$ informs $u$ about the correct $d(v, u)$, which completes the induction proof.

    \cref{line:sendToUpperTiers,line:informDistanceToItself} each take a single round of the \tiered model and the \clique model, respectively, and thus the execution of the entire algorithm takes \bigO{\log{\size{V}}} rounds of the \clique model and one round of the \tiered model. 
\end{proof}

By simulating the algorithm given in \fullref{thr:cliqueWithTiered:exactRAPSP} using \fullref{thr:tieredInHybridSimulation,thr:CliqueInHybridSimulation}, we get exact APSP over the skeleton graph, as follows.

\begin{restatable}[Exact APSP on Skeleton Graph]{corollary}{exactRAPSP}\label{thr:exactRAPSP}
For any constant $0 < x < 1$, there is an algorithm in the \hybrid model that computes an exact weighted APSP on a skeleton graph $S_{x}=(M, E_S)$, in \tildeBigO{n^{1-x}+n^{2x-1}} rounds \whp
\end{restatable}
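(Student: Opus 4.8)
The plan is to obtain this corollary by running the deterministic algorithm of \fullref{thr:cliqueWithTiered:exactRAPSP} directly on the skeleton graph $S_x$ (taking $S_x$ itself as the ``input graph'' of that lemma), and then replacing each of its abstract rounds by the corresponding \hybrid simulation. Concretely, that algorithm solves exact weighted APSP using $\bigO{\log \size{M}}$ rounds of the \clique model over $S_x$ together with a single round of the \tiered model over $S_x$. Since every notion of degree, tier, and message set in \fullref{thr:cliqueWithTiered:exactRAPSP} is defined relative to the graph it is run on, executing it on $S_x$ makes all degrees equal to $\deg_{S_x}$, which is exactly the quantity for which the two simulation statements are phrased.

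First I would invoke \fullref{thr:tieredInHybridSimulation} to simulate the single \tiered round over $S_x$ in \tildeBigO{n^{1-x}+n^{2x-1}} rounds of the \hybrid model \whp Next I would invoke \fullref{thr:CliqueInHybridSimulation} to simulate each \clique round at a cost of \tildeBigO{n^{2x-1}+n^{x/2}} rounds. Because Property~\ref{itm:skeleton:size} of \fullref{def:skeleton} gives $\size{M}=\tildeBigO{n^{x}}$, the number of \clique rounds is $\bigO{\log\size{M}}=\bigO{\log n}$, so the entire \clique phase also costs \tildeBigO{n^{2x-1}+n^{x/2}}. Summing the two phases yields a bound of \tildeBigO{n^{1-x}+n^{2x-1}+n^{x/2}}, and a union bound over the $\bigO{\log n}$ simulated rounds keeps the overall failure probability polynomially small while exactness and correctness are inherited verbatim from the deterministic algorithm of \fullref{thr:cliqueWithTiered:exactRAPSP}.

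The only genuine calculation, and the one subtle point worth stating explicitly, is that the $n^{x/2}$ term is always absorbed by the other two, so that the bound collapses to the claimed \tildeBigO{n^{1-x}+n^{2x-1}}. I would verify $x/2\leq\max\set{1-x,\,2x-1}$ for every $0<x<1$ by splitting at the crossover $x=2/3$: for $x\leq2/3$ the maximum is $1-x$ and $x/2\leq1-x\iff x\leq2/3$, while for $x\geq2/3$ the maximum is $2x-1$ and $x/2\leq2x-1\iff x\geq2/3$. Hence $n^{x/2}=\bigO{n^{1-x}+n^{2x-1}}$ throughout the range. I do not expect any real obstacle: this statement is essentially a packaging of the preceding APSP lemma with the two model simulations, and the logarithmic number of \clique rounds contributes only polylogarithmic overhead that is swallowed by the $\tilde O$ notation.
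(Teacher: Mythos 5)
Your proposal is correct and follows exactly the paper's route: the paper obtains this corollary by simulating the algorithm of \fullref{thr:cliqueWithTiered:exactRAPSP} via \fullref{thr:tieredInHybridSimulation} and \fullref{thr:CliqueInHybridSimulation}, just as you do. The only difference is that you explicitly verify the absorption $n^{x/2}=\bigO{n^{1-x}+n^{2x-1}}$ for all $0<x<1$ (splitting at $x=2/3$), a calculation the paper leaves implicit; your verification is correct.
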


Finally, we extend the result to $n^{x}$-RSSP on $G$, by having each node in the graph learn the information stored in the skeletons in its $\tildeBigO{n^{1-x}}$ neighborhood. 

\exactRMSSP*
\begin{proof}
    Primarily, assume that $x\geq \frac{2}{3}$. Otherwise, we add each node outside of $M$ with probability $(n^{-1/3} - n^{x - 1})/(1 - n^{x - 1})$ into the set $M$. Thus, each node has probability exactly $(n^{x-1}\cdot 1)+(1 - n^{x - 1})\cdot(n^{-1/3} - n^{x - 1})/(1 - n^{x - 1})=n^{-1/3}$ to be sampled into $M$, ensuring $x = 2/3$. We use \fullref{claim:skeletonOnSampled} to build a skeleton graph $S_x=(M, E_S)$ in \tildeBigO{n^{1/3}} rounds \whp Then, we compute exact APSP on the skeleton graph using \fullref{thr:exactRAPSP} in \tildeBigO{n^{1/3}+n^{2x-1}} rounds \whp By Property~\ref{itm:skeleton:distance} of the skeleton graph, for each $v, u\in M$ it holds that $d_S(v, u)=d(v, u)$, where $d_S(v, u)$ is the distance in the skeleton graph. So, we apply \fullref{thr:computeMSSPFromMSSPOnSkeleton} with $\alpha=1,\beta=0$ and set of sources $V'=M$, to compute an exact weighted shortest paths distances, from $M$ to all of $V$, in additional \tildeBigO{n^{1/3}} rounds \whp 
\end{proof}

Instantiating \Cref{thr:exactRMSSP} with $x=2/3$ gives $n^{2/3}$-RSSP in $\tildeTheta{n^{1/3}}$ rounds \whp, which is tight due to our lower bound given in \fullref{thr:lowerBound}. We extensively use our $n^{2/3}$-RSSP algorithm for our following results.

\subsection{Exact \texorpdfstring{$n^{1/3}$-}{M}SSP}\label{sec:Nthird}
We now present an improvement over the warm-up exact SSSP algorithm which we showed previously, by providing an algorithm for exact shortest paths from a \emph{given} set of $n^{1/3}$ nodes ($n^{1/3}$-SSP) in \tildeBigO{n^{1/3}} rounds. To do so, we create a skeleton graph and use our algorithm for $n^{2/3}$-RSSP algorithm to compute exact distances from the skeleton nodes to the entire graph. Then, we adapt the behavior of the source nodes depending on the number of skeleton nodes in their neighborhood (which is proportional to the density of the neighborhoods). That is, nodes in sparse neighborhoods can broadcast the distances from themselves to all the skeleton nodes which they see surrounding them, while a node in dense neighborhoods can take over a skeleton node surrounding it and use it as a proxy to communicate efficiently with the other skeleton nodes in the graph. We formalize this in this section, as well as refer to \fullref{thr:reassignSkeletons} which is a generic tool which performs this action of \emph{taking over} skeleton nodes as proxies.

We show the following fundamental algorithm, which allows \emph{assigning} skeletons to help other skeletons. That is, given a set of nodes $A$ where each node in $A$ sees many skeleton nodes in its neighborhood, it is possible to assign skeleton nodes to service the nodes of $A$. We use this to increase sending and receiving \emph{capacity} of the nodes of $A$. This is a key tool which we use in the proof of \fullref{thr:exactMSSP} and we believe it may be useful for additional tasks.

\begin{lemma}[Reassign Skeletons]\label{thr:reassignSkeletons}
    Given graph $G=(V, E)$, a skeleton graph $S_x=(M, E_S)$, a value $k$ which is known to all the nodes, and nodes $A\subseteq V$ such that each $u \in A$ has at least $\tildeTheta{k\cdot |A|}$ nodes $M_u \subseteq M$ in its $\tilde \Theta(n^{1-x})$ neighborhood, there is an algorithm that assigns $K_u \subseteq M_u$ nodes to $u$, where $|K_u|=\tildeBigOmega{k}$, such that each node in $M$ is assigned to at most $\tilde O(1)$ nodes in $A$. With respect to the set $A$, it is only required that every node in $G$ must know whether or not it itself is in $A$ -- that is, the entire contents of $A$ do not have to be globally known. The algorithm runs in \tildeBigO{n^{1-x}} rounds in the \hybrid model, \whp
\end{lemma}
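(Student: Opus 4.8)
The plan is to have each node of $A$ independently sample skeleton nodes from its own neighborhood, using a sampling probability computed from purely local quantities, and then to argue by concentration that this simultaneously supplies every $u\in A$ with enough skeletons while keeping every skeleton node lightly loaded. The enabling observation is that since local edges have unbounded bandwidth ($\lambda=\infty$), every communication step needed is just flooding to depth $h=\tildeTheta{n^{1-x}}$, which costs $\tildeBigO{n^{1-x}}$ rounds in the \hybrid model, matching the claimed complexity. In particular no component of the algorithm ever needs to know $A$ globally.

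Concretely, I would proceed in three phases. First, a \emph{local exploration}: by flooding over local edges to depth $h$ (exactly as in the proof of \fullref{thr:computeMSSPFromMSSPOnSkeleton}), within $\tildeBigO{n^{1-x}}$ rounds every $u\in A$ learns the identifiers of all skeleton nodes in its $\tildeTheta{n^{1-x}}$-hop neighborhood, i.e.\ it learns $M_u$ together with a path to each member. Second, \emph{sampling}: each $u\in A$ sets $p_u=\min\{1,\ \beta k\log n/\size{M_u}\}$ for a suitable constant $\beta$, and includes each $m\in M_u$ into $K_u$ independently with probability $p_u$. Crucially $p_u$ depends only on the globally known $k$ and the locally computed $\size{M_u}$, and \emph{not} on $\size{A}$, consistent with $A$ not being globally known. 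Third, \emph{notification}: each $u$ informs every $m\in K_u$ of its assignment by flooding back along the discovered paths, again in $\tildeBigO{n^{1-x}}$ rounds. No rejection/retry phase is needed, since the load bound will hold \whp directly.

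The correctness reduces to two concentration claims, each a Chernoff bound followed by a union bound. For the lower bound on $\size{K_u}$: if $p_u=1$ then $\size{K_u}=\size{M_u}\geq\tildeTheta{k\cdot\size{A}}=\tildeBigOmega{k}$ deterministically; otherwise the expectation is $\beta k\log n$, so \whp $\size{K_u}=\bigOmega{k\log n}=\tildeBigOmega{k}$, and a union bound over the at most $n$ nodes of $A$ makes this hold for all of them. For the upper bound on the load of a fixed $m\in M$: the number of $A$-nodes selecting $m$ has expectation $\sum_{u:\,m\in M_u}p_u$; using the hypothesis $\size{M_u}\geq\tildeTheta{k\cdot\size{A}}$ each term is $p_u\leq\tildeBigO{1}/\size{A}$, and there are at most $\size{A}$ terms, so the expected load is $\tildeBigO{1}$, and a Chernoff bound plus a union bound over the $\tildeBigO{n^{x}}$ skeleton nodes gives load $\tildeBigO{1}$ for all of $M$ \whp. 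The clipping case $p_u=1$ only occurs when $\size{M_u}=\bigO{k\log n}$, which by the same hypothesis forces $\size{A}=\tildeBigO{1}$; then even if every node of $A$ selects $m$ the load is $\tildeBigO{1}$, so the bound is preserved.

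The main subtlety, and the heart of the argument, is the \emph{joint} analysis: exhibiting a single sampling probability that is computable from local data yet guarantees both the per-node lower bound and the global load bound at once. The key point that makes this possible is precisely that $p_u$ omits any reference to $\size{A}$ — the dependence on $\size{A}$ enters only through the assumed abundance $\size{M_u}\geq\tildeTheta{k\cdot\size{A}}$, which automatically caps $p_u$ at $\tildeBigO{1}/\size{A}$ and therefore caps the expected per-skeleton load at $\tildeBigO{1}$. Everything else — the exploration and the two floods — is routine in the \hybrid model given the $\lambda=\infty$ local edges, so I expect the write-up effort to lie entirely in the two tail bounds and the careful treatment of the clipping regime.
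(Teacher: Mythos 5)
Your proof is correct, and it is essentially the mirror image of the paper's argument. The paper uses the same three phases (learn the $\tildeTheta{n^{1-x}}$-hop neighborhood, sample once, notify), but samples in the opposite direction: after making $\size{A}$ globally known via Aggregate-And-Broadcast (\cref{thr:aggregateAndBroadcast}), each \emph{skeleton} node $v\in M$ samples each $u\in A\cap N_G^{\tildeTheta{n^{1-x}}}(v)$ independently with probability $1/\size{A}$ and then notifies the sampled nodes. With that choice the hypothesis $\size{M_u}=\tildeTheta{k\cdot\size{A}}$ is spent on the lower-bound side ($u$ is targeted by $\tildeTheta{k\cdot\size{A}}$ independent samplers, so $\size{K_u}=\tildeBigOmega{k}$ by Chernoff), while the load bound needs no hypothesis at all: each $v$ makes at most $\size{A_v}/\size{A}\leq 1$ assignments in expectation, and no clipping case exists since $1/\size{A}\leq 1$ always. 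Your dual choice, sampling by the $A$-nodes with the locally computable probability $p_u=\min\{1,\ \beta k\log n/\size{M_u}\}$, flips where the hypothesis is used: the lower bound on $\size{K_u}$ becomes automatic, and the hypothesis is what caps $p_u$ at $\tildeBigO{1}/\size{A}$ for the load bound. What your variant buys is that $\size{A}$ is never needed, so the Aggregate-And-Broadcast step disappears and the property that $A$ need not be globally known becomes entirely transparent; what it costs is the clipping regime $p_u=1$, which the paper never faces, and which you correctly dispose of by observing that any clipped node forces $\size{A}=\tildeBigO{1}$, making the load bound trivial. Both analyses reduce to two Chernoff-plus-union bounds and both run in $\tildeBigO{n^{1-x}}$ rounds, so the approaches are equivalent in strength; yours is marginally more self-contained, while the paper's has the cleaner sampling probability. (A minor aside: the paper's own proof contains a typo, asserting $\size{K_u}=\tildeBigOmega{1}$ where $\tildeBigOmega{k}$ is meant; your expectation computation $\size{M_u}\cdot p_u=\beta k\log n$ states the intended bound correctly.)
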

\begin{proof}

    The pseudocode is provided by \cref{alg:reassignSkeletons}.
    
    \begin{algorithm}
        \caption{\textbf{Reassign-Skeletons}$(A, k)$}
        \label{alg:reassignSkeletons}
        
        Compute $\size{A}$ by running Aggregate-And-Broadcast \label{line:reassignSkeletons:computeSizeA} 
        
        Skeleton node $v\in M$ learns its $\tildeBigO{n^{1-x}}$-hop neighborhood \label{line:reassignSkeletons:learnNeighborhood} 
        
        Skeleton node $v\in M$ samples each $u\in A\cap N_G^{\tildeTheta{n^{1-x}}}(v)$ with probability $\frac{1}{\size{A}}$ \label{line:reassignSkeletons:sample} 
        
        Skeleton node $v\in M$ informs each sampled node $u$ about $v\in K_u$ \label{line:reassigneSkeletons:inform} 
    \end{algorithm}
    
    First, each node $w$ learns the size of the set $A$ by invoking \fullref{thr:aggregateAndBroadcast} with value $1$ if $w\in A$ and $0$ otherwise, and the summation function  (\cref{line:reassignSkeletons:computeSizeA}). Then, each skeleton node $v\in M$, learns its $\tildeTheta{n^{1-x}}$-hop neighborhood (\cref{line:reassignSkeletons:learnNeighborhood}), and in particular it learns the nodes $A_v=A\cap N_G^{\tildeTheta{n^{1-x}}}(v)$. Then, $v$ samples each $u\in A_v$ independently with probability $\frac{1}{\size{A}}$ (\cref{line:reassignSkeletons:sample}). Afterwards, $v$ informs each node $u$ it sampled on the previous stage that $v\in K_u$ (\cref{line:reassigneSkeletons:inform}).
    
    For every $v\in M$, since $\size{A_v}\leq \size{A}$, and since $v$ samples nodes from there $A_v$ independently with probability $\frac{1}{\size{A}}$, by Chernoff Bounds each $v$ assigns itself to at most \tildeBigO{1} nodes $a\in A_v$ \whp Hence, by a union bound over all skeleton nodes, each skeleton node is assigned to $\tildeBigO{1}$ nodes \whp
    
    For every $u\in A$, since it is sampled by at least $\tildeBigOmega{k\cdot \size{A}}$ skeleton nodes independently with probability $\frac{1}{\size{A}}$, by Chernoff Bounds it is sampled by $\size{K_u}=\tildeBigOmega{1}$ skeleton nodes \whp Thus, by union bound over all skeleton nodes, each $u\in A$ has $\size{K_u}=\tildeBigOmega{1}$ assigned nodes \whp 
    
    By \fullref{thr:aggregateAndBroadcast}, \cref{line:reassignSkeletons:computeSizeA} takes \tildeBigO{1} rounds \whp, and \cref{line:reassignSkeletons:learnNeighborhood,line:reassigneSkeletons:inform} take \tildeBigO{n^{1-x}} rounds, and thus the entire execution completes in \tildeBigO{n^{1-x}} rounds \whp
\end{proof}

Now we apply \fullref{thr:tokenDissemination}~or~\fullref{thr:reassignSkeletons} depending on density of each source's neighborhood and show how to compute exact $n^{1/3}$-SSP in $\tildeBigO{n^{1/3}}$ rounds. For  sources in ``sparse'' neighborhoods, in which there is a small number of skeleton nodes, we use \fullref{thr:tokenDissemination} to inform all nodes about their distances to those skeletons. For source $v$ with ``dense'' neighborhood, in which there are many skeleton nodes, we use \fullref{thr:reassignSkeletons} to get at least one skeleton node $u$ which participates in the round of the \clique model on behalf of that source and sends each other skeleton node $v'$ the distance $d(v, v')$.

\exactMSSP*
\begin{proof}
    The pseudocode for the algorithm appears in \cref{alg:exactMSSP}.
    
        \begin{algorithm}
        \caption{\textbf{Exact-$n^{1/3}$-SSP}: Computes an exact weighted $n^{1/3}$-SSP. Routine for node $u\in V$}
        \label{alg:exactMSSP}
        
        Join $M$ independently with probability $n^{-1/3}$ \label{line:exactMSSP:sample} 
        
        Compute $n^{2/3}$-RSSP from $M$ \label{line:exactMSSP:exactRMSSP} 
        
        Construct skeleton graph $S_{2/3}=(M, E_S)$ \label{line:exactMSSP:constructSkeleton} 
        
        Learn $h=\tildeTheta{n^{1/3}}$-hop neighborhood \label{line:exactMSSP:learnSkeleton} 
        
        \If{$u\in U$} {
            \eIf{$\size{N_G^{h}(u)\cap M}=\tildeBigO{n^{1/3}}$}{
                Participate in \TokenDissemination with a token $\Braket{u, v', d(v', u)}$ for each $v'\in M\cap N_G^{h}(s)$ \label{line:exactMSSP:td}  
            }{
                $K_u\leftarrow$\ReassignSkeletons$\left(\Set{u\colon \size{N_G^{h}(u)\cap M}=\tildeBigOmega{n^{1/3}}}, \tildeBigO{1}\right)$ \label{line:exactMSSP:reassign} 
                
                Send each $v\in K_u$ the values $d(u, v')$ for each $v'\in M$ \label{line:exactMSSP:sendToHelpers} 
            }

        }
        
        \If{$u\in M$} {
            In the \clique model: for each $v'\in M$ and each $v\in U$ such that $u\in K_v$ send $d(v, v')$ to $v'$ \label{line:exactMSSP:clique} 
            
            For each $s\in U$, compute $\tilde{d}(u, s)$ by \cref{eq:exactMSSP} and output it \label{line:exactMSSP:distanceToSparse}
        }
        
        Apply 
        \Cref{thr:computeMSSPFromMSSPOnSkeleton},
        which given distances from skeleton to sources $\tilde{d}\colon M \times U\mapsto \N$ extends it to distances from each nodes to sources $\tilde{d}\colon V\times U\mapsto \N$  \label{line:exactMSSP:computeMSSPFromMSSPOnSkeleton}

    \end{algorithm}

    Without loss of generality the set of nodes $U$ is globally known (it can be disseminated in $\tildeBigO{n^{1/6}}$ rounds \whp using \fullref{thr:tokenDissemination}). We build $M\subseteq V$ by marking nodes independently with probability $n^{-1/3}$ (\cref{line:exactMSSP:sample}). Then we run the algorithm from \fullref{thr:exactRMSSP} with $x=2/3$ to obtain \whp $n^{2/3}$-RSSP from the set of nodes $M$ (\cref{line:exactMSSP:exactRMSSP}), such that \whp every $u\in V$ knows its distance to every node in $M$.
    Afterwards, we apply  \fullref{claim:skeletonOnSampled} to construct a skeleton graph $S_{{2}/{3}}=(M, E_S)$ \whp 
    Then, each source learns the information in its $h$-hop neighborhood (\cref{line:exactMSSP:learnSkeleton}), for $h\in\tildeTheta{n^{1/3}}$. In particular, it counts the skeleton nodes in its $h$-hop neighborhood. 
    
    If a source finds that the number of skeleton nodes in its $h$-hop neighborhood is $\tildeBigO{n^{1/3}}$, then it participates in a token dissemination protocol (\fullref{thr:tokenDissemination}) and \whp informs all the graph about its distance to these skeleton nodes (\cref{line:exactMSSP:td}).

    Otherwise, each source $u\in U$ which finds that there are at least $\tildeBigOmega{n^{1/3}}$ skeleton nodes in its $h$-hop neighborhood, applies \fullref{thr:reassignSkeletons} with $k=\tildeBigO{1}$ and $A=\set{u\colon \size{N_G^{h}(u)\cap M}=\tildeBigOmega{n^{1/3}}}$ and receives $K_u\subseteq N_G^h(u)\cap M$, a set of $\tildeBigOmega{1}$ skeletons.
    Such a source $u\in U$ sends by local edges to $v\in K_u$ the distance $d(u, v')$ to each $v'\in M$ (\cref{line:exactMSSP:sendToHelpers}). Each skeleton node $u\in M$ sends the distances $d(s, v')$ to each $v'\in M$, for each source node $s\in U$ that it is assigned to, by simulating the \clique model. If skeleton node $u\in M$ for source $s\in U$  did not receive the distance from $s$, it computes it using \cref{eq:exactMSSP} (\cref{line:exactMSSP:distanceToSparse}) based on the information it received in \cref{line:exactMSSP:td}.
    
    \begin{align}\label{eq:exactMSSP}
        \tilde{d}(u, s)=\min\set{d^h(u, s), \min_{v'\in M}\set{d(u, v')+d^h(v', s)}}\footnotemark
    \end{align}
    
    \footnotetext{Note that difference from \cref{eq:shortestPathApproximation}. There node $u$ does not know precise distance to skeleton, but only $h$-limited distance, while here it knows the precise distance after \cref{line:exactMSSP:exactRMSSP}. Similarly with the $d^h(v', s)$ term.}

    After each skeleton knows the distance to each source, we apply \Cref{thr:computeMSSPFromMSSPOnSkeleton}
    to compute distances from sources to all the nodes.

    \begin{restatable}{lemma}{knowsAPSP}\label{lemma:exactMSSP:knowsAPSP}
        After \cref{line:exactMSSP:distanceToSparse},  each $u\in M$ knows $d(v, s)$ for each $s\in U$ \whp
    \end{restatable}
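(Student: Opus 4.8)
The plan is to prove, separately for the two branches of \cref{alg:exactMSSP}, that every skeleton node $u\in M$ ends up holding the exact value $d(u,s)$ for each source $s\in U$. Before the case split I would record the invariants supplied by the earlier lines, all holding \whp: by \fullref{thr:exactRMSSP} applied on \cref{line:exactMSSP:exactRMSSP} (with $x=2/3$), every node of $G$---in particular every skeleton node $u$ and every source $s$ viewed as a node---knows its exact distance to every $v'\in M$; and by \cref{line:exactMSSP:learnSkeleton} every node knows the weighted subgraph induced on its $h$-hop neighborhood, $h=\tildeTheta{n^{1/3}}$, and can therefore evaluate $d^h(u,s)$ (which is $\infty$ whenever $hop(u,s)>h$). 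Which branch each source follows is determined by the size of $N_G^h(s)\cap M$, so I would treat the dense and sparse sources in turn.

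For a source $s$ in a \emph{dense} neighborhood ($\size{N_G^h(s)\cap M}=\tildeBigOmega{n^{1/3}}$, the else-branch) I would first verify the hypothesis of \fullref{thr:reassignSkeletons}: with $A=\set{u\colon\size{N_G^h(u)\cap M}=\tildeBigOmega{n^{1/3}}}$ we have $\size{A}\le\size{U}=O(n^{1/3})$ and $k=\tildeBigO{1}$, so the required $\tildeTheta{k\cdot\size{A}}=\tildeTheta{n^{1/3}}$ skeletons per neighborhood is exactly what the dense condition guarantees. Hence \cref{line:exactMSSP:reassign} returns a nonempty $K_s$ containing at least one helper $w\in M\cap N_G^h(s)$. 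Since $s$ knows $d(s,v')$ for all $v'\in M$ by the invariant, the transmission on \cref{line:exactMSSP:sendToHelpers} over $s$'s $h$-hop neighborhood gives $w$ every value $d(s,v')$, and the \clique round on \cref{line:exactMSSP:clique} then delivers $d(s,u)=d(u,s)$ to $u$ directly.

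For a source $s$ in a \emph{sparse} neighborhood (the if-branch) I would establish the correctness of \cref{eq:exactMSSP}. Once the token dissemination of \cref{line:exactMSSP:td} succeeds (\fullref{thr:tokenDissemination}), $u$ knows $d(s,v')$ for every $v'\in M\cap N_G^h(s)$, and the heart of the argument is the identity
\begin{equation*}
d(u,s)=\min\left\{\, d^h(u,s),\ \min_{v'\in M\cap N_G^h(s)}\bigl(d(u,v')+d(v',s)\bigr)\right\}.
\end{equation*}
Each term on the right is the length of an actual $u$--$s$ walk, so the right-hand side is $\ge d(u,s)$. For the reverse inequality I would distinguish $hop(u,s)<h$, where the term $d^h(u,s)$ already equals $d(u,s)$, from $hop(u,s)\ge h$, where Property~\ref{itm:skeleton:sp} furnishes a shortest $u$--$s$ path whose final $h$-node suffix contains a skeleton $v'$; this $v'$ lies in $M\cap N_G^h(s)$, its $u$-side prefix has weight $d(u,v')$, and its $s$-side suffix is a shortest $\le h$-hop path realizing $d(v',s)=d^h(v',s)$, so the corresponding term equals $d(u,s)$. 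Matching this identity against \cref{eq:exactMSSP}---where $u$ contributes $d^h(u,s)$ from its neighborhood, $d(u,v')$ from the RSSP step, and the disseminated $d(v',s)$ in place of the $d^h(v',s)$ written in the equation (the two agreeing along the minimizing path, while for every $v'$ the triangle inequality keeps $d(u,v')+d(v',s)\ge d(u,s)$)---yields $\tilde d(u,s)=d(u,s)$.

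The step I expect to demand the most care is the sparse case: I must argue that the skeleton $v'$ guaranteed by Property~\ref{itm:skeleton:sp} genuinely lies within $h$ hops of $s$ (so its token was disseminated and $d(v',s)=d^h(v',s)$ along that suffix), and that replacing $d^h(v',s)$ by the true $d(v',s)$ for every $v'$ can never push $\tilde d(u,s)$ below $d(u,s)$. I would also note the minor bookkeeping point that a skeleton $u$ can tell the branches apart---it adopts a value received over the \clique round of \cref{line:exactMSSP:clique} when one arrives (which, \whp, happens exactly for the dense sources) and otherwise evaluates \cref{eq:exactMSSP}---and that the probabilistic guarantees of \fullref{thr:exactRMSSP}, \fullref{thr:reassignSkeletons}, and the token dissemination hold simultaneously by a union bound over the $O(n^{1/3})$ sources and $\tildeBigO{n^{2/3}}$ skeleton nodes, giving the claim \whp.
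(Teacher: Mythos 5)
Your proposal follows the paper's proof almost step for step: the same dense/sparse case split per source, the same use of \fullref{thr:exactRMSSP}, \fullref{thr:reassignSkeletons} and \fullref{thr:CliqueInHybridSimulation} to deliver $d(s,u)$ directly to $u$ in the dense branch, and the same two-sided analysis of \cref{eq:exactMSSP} in the sparse branch (no underestimation because every term is the length of a walk; no overestimation via Property~\ref{itm:skeleton:sp}). Your extra bookkeeping --- checking the hypothesis of \fullref{thr:reassignSkeletons}, explaining how $u$ distinguishes the two kinds of sources, and reconciling the disseminated exact values $d(v',s)$ with the terms $d^h(v',s)$ appearing in \cref{eq:exactMSSP} --- is consistent with, and somewhat more careful than, the paper.

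However, one step fails as written. In the sparse branch you split on $hop(u,s)<h$ versus $hop(u,s)\geq h$, claiming in the first case that $d^h(u,s)$ ``already equals'' $d(u,s)$. In a weighted graph that implication is false: $hop(u,s)$ is the minimum number of edges over \emph{all} $u$--$s$ paths, so one can have $hop(u,s)=1$ (a direct edge of weight $100$) while the unique shortest weighted path has $2h$ hops and total weight $50$; then $d^h(u,s)=100>50=d(u,s)$. For such a pair your analysis has no valid branch: the equality claimed in case 1 fails, and the hypothesis of case 2 fails, so Property~\ref{itm:skeleton:sp} cannot be invoked to produce the skeleton node $v'$ near $s$. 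The paper avoids this by splitting instead on whether some shortest \emph{weighted} path between $u$ and $s$ has fewer than $h$ hops: in that case $d^h(u,s)=d(u,s)$ holds by the definition of $d^h$, and otherwise Property~\ref{itm:skeleton:sp} is applied. (The paper's second case is itself a slightly loose citation, since the property is stated with the precondition $hop(u,v)\geq h$; but your variant converts that looseness into an outright false claim in case 1.) The fix is simply to adopt the paper's dichotomy; the rest of your argument then goes through unchanged.
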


    By \cref{lemma:exactMSSP:knowsAPSP}, whose proof appears in \cref{appsec:exactMSSP}, each node in $M$ knows the distance to each node in $U$, thus by \fullref{thr:computeMSSPFromMSSPOnSkeleton} with $\alpha=1,\beta=0$ there is an algorithm to compute shortest paths distance from $U$.
    
    By \fullref{thr:exactRMSSP} with $x=\frac{2}{3}$, \cref{line:exactMSSP:exactRMSSP}  completes in \tildeBigO{n^{1/3}} rounds \whp For \cref{line:exactMSSP:constructSkeleton}, by \fullref{claim:skeletonOnSampled}, the round complexity, \whp, is \tildeBigO{n^{1/3}} as well. \cref{line:exactMSSP:learnSkeleton} completes in \tildeBigO{n^{1/3}} rounds. Since there are at most $\ell=\tildeBigO{n^{1/3}}$ tokens per source and $k=\bigOmega{n^{2/3}}$ tokens overall, \cref{line:exactMSSP:td} takes $\tildeBigO{n^{1/3}}$ rounds \whp by \fullref{thr:tokenDissemination}. By \fullref{thr:reassignSkeletons}, \cref{line:exactMSSP:reassign} takes \tildeBigO{n^{1/3}} rounds \whp All skeleton nodes are assigned to some helpers in their $\tildeBigO{n^{1/3}}$-hop neighborhood by \cref{line:exactMSSP:reassign}, so \cref{line:reassigneSkeletons:inform} takes  $\tildeBigO{n^{1/3}}$ rounds. Since each skeleton selects \tildeBigO{1} sources \whp in \cref{line:exactMSSP:reassign} by \fullref{thr:reassignSkeletons}, \cref{line:exactMSSP:clique} simulates \tildeBigO{1} rounds of the \clique model and takes \tildeBigO{n^{1/3}} rounds by \fullref{thr:CliqueInHybridSimulation} \whp  Finally by \fullref{thr:computeMSSPFromMSSPOnSkeleton}, 
    \cref{line:exactMSSP:computeMSSPFromMSSPOnSkeleton} for $x=\frac{2}{3}$ takes \tildeBigO{n^{1/3}} rounds as well. Thus, the overall execution of the algorithm takes \tildeBigO{n^{1/3}} rounds.
\end{proof}

\textbf{Remark:} We show the approximation for shortest path from $n^{x}$ given sources in \cref{app:MSSP}. In \cref{app:ecc,app:diameter} we provide eccentricities and diameter approximations, respectively. We wrap up with our lower bound for shortest paths from sources sampled i.i.d. in \cref{sec:lowerBound}.

\section*{Acknowledgements}
This project has received funding from the European Union’s Horizon 2020 research and innovation programme under grant agreement no. 755839-ERC-BANDWIDTH

The authors would like to thank Michal Dory and Yuval Efron for various helpful conversations. We also thank Fabian Kuhn for sharing a preprint of \cite{kuhn2020computing} with us.

\bibliography{References-standard}

\begin{thebibliography}{10}

\bibitem{Abboud2016NearLinearLB}
Amir Abboud, Keren Censor{-}Hillel, and Seri Khoury.
\newblock Near-linear lower bounds for distributed distance computations, even
  in sparse networks.
\newblock In Cyril Gavoille and David Ilcinkas, editors, {\em Distributed
  Computing - 30th International Symposium, {DISC} 2016, Paris, France,
  September 27-29, 2016. Proceedings}, volume 9888 of {\em Lecture Notes in
  Computer Science}, pages 29--42. Springer, 2016.
\newblock \href {https://doi.org/10.1007/978-3-662-53426-7\\_3}
  {\path{doi:10.1007/978-3-662-53426-7\\_3}}.

\bibitem{YungALS90}
Yehuda Afek, Gad~M. Landau, Baruch Schieber, and Moti Yung.
\newblock The power of multimedia: Combining point-to-point and multiaccess
  networks.
\newblock {\em Inf. Comput.}, 84(1):97--118, 1990.
\newblock \href {https://doi.org/10.1016/0890-5401(90)90035-G}
  {\path{doi:10.1016/0890-5401(90)90035-G}}.

\bibitem{Agarwal2019DistributedWA}
Udit Agarwal and Vijaya Ramachandran.
\newblock Distributed weighted all pairs shortest paths through pipelining.
\newblock In {\em 2019 {IEEE} International Parallel and Distributed Processing
  Symposium, {IPDPS} 2019, Rio de Janeiro, Brazil, May 20-24, 2019}, pages
  23--32. {IEEE}, 2019.
\newblock \href {https://doi.org/10.1109/IPDPS.2019.00014}
  {\path{doi:10.1109/IPDPS.2019.00014}}.

\bibitem{Ramachandran20FasterDA}
Udit Agarwal and Vijaya Ramachandran.
\newblock Faster deterministic all pairs shortest paths in congest model.
\newblock In Christian Scheideler and Michael Spear, editors, {\em {SPAA} '20:
  32nd {ACM} Symposium on Parallelism in Algorithms and Architectures, Virtual
  Event, USA, July 15-17, 2020}, pages 11--21. {ACM}, 2020.
\newblock \href {https://doi.org/10.1145/3350755.3400256}
  {\path{doi:10.1145/3350755.3400256}}.

\bibitem{Agarwal2018ADD}
Udit Agarwal, Vijaya Ramachandran, Valerie King, and Matteo Pontecorvi.
\newblock A deterministic distributed algorithm for exact weighted all-pairs
  shortest paths in {\\textasciitilde {o}}(n 3/2 {)} rounds.
\newblock In Calvin Newport and Idit Keidar, editors, {\em Proceedings of the
  2018 {ACM} Symposium on Principles of Distributed Computing, {PODC} 2018,
  Egham, United Kingdom, July 23-27, 2018}, pages 199--205. {ACM}, 2018.
\newblock \href {https://doi.org/10.1145/3212734.3212773}
  {\path{doi:10.1145/3212734.3212773}}.

\bibitem{AnconaCDEV20}
Bertie Ancona, Keren Censor{-}Hillel, Mina Dalirrooyfard, Yuval Efron, and
  Virginia~Vassilevska Williams.
\newblock Distributed distance approximation.
\newblock {\em CoRR}, abs/2011.05066, 2020.
\newblock URL: \url{https://arxiv.org/abs/2011.05066}, \href
  {http://arxiv.org/abs/2011.05066} {\path{arXiv:2011.05066}}.

\bibitem{Asadi2016AnSDR}
Arash Asadi, Vincenzo Mancuso, and Rohit Gupta.
\newblock An sdr-based experimental study of outband {D2D} communications.
\newblock In {\em 35th Annual {IEEE} International Conference on Computer
  Communications, {INFOCOM} 2016, San Francisco, CA, USA, April 10-14, 2016},
  pages 1--9. {IEEE}, 2016.
\newblock \href {https://doi.org/10.1109/INFOCOM.2016.7524372}
  {\path{doi:10.1109/INFOCOM.2016.7524372}}.

\bibitem{AGGHKL19}
John Augustine, Mohsen Ghaffari, Robert Gmyr, Kristian Hinnenthal, Christian
  Scheideler, Fabian Kuhn, and Jason Li.
\newblock Distributed computation in node-capacitated networks.
\newblock In Christian Scheideler and Petra Berenbrink, editors, {\em The 31st
  {ACM} on Symposium on Parallelism in Algorithms and Architectures, {SPAA}
  2019, Phoenix, AZ, USA, June 22-24, 2019}, pages 69--79. {ACM}, 2019.
\newblock \href {https://doi.org/10.1145/3323165.3323195}
  {\path{doi:10.1145/3323165.3323195}}.

\bibitem{AHKSS20}
John Augustine, Kristian Hinnenthal, Fabian Kuhn, Christian Scheideler, and
  Philipp Schneider.
\newblock Shortest paths in a hybrid network model.
\newblock In Shuchi Chawla, editor, {\em Proceedings of the 2020 {ACM-SIAM}
  Symposium on Discrete Algorithms, {SODA} 2020, Salt Lake City, UT, USA,
  January 5-8, 2020}, pages 1280--1299. {SIAM}, 2020.
\newblock \href {https://doi.org/10.1137/1.9781611975994.78}
  {\path{doi:10.1137/1.9781611975994.78}}.

\bibitem{Becker2017NearOptimalAS}
Ruben Becker, Andreas Karrenbauer, Sebastian Krinninger, and Christoph Lenzen.
\newblock Near-optimal approximate shortest paths and transshipment in
  distributed and streaming models.
\newblock In Andr{\'{e}}a~W. Richa, editor, {\em 31st International Symposium
  on Distributed Computing, {DISC} 2017, October 16-20, 2017, Vienna, Austria},
  volume~91 of {\em LIPIcs}, pages 7:1--7:16. Schloss Dagstuhl -
  Leibniz-Zentrum f{\"{u}}r Informatik, 2017.
\newblock \href {https://doi.org/10.4230/LIPIcs.DISC.2017.7}
  {\path{doi:10.4230/LIPIcs.DISC.2017.7}}.

\bibitem{Bernstein2019DistributedEW}
Aaron Bernstein and Danupon Nanongkai.
\newblock Distributed exact weighted all-pairs shortest paths in near-linear
  time.
\newblock In Moses Charikar and Edith Cohen, editors, {\em Proceedings of the
  51st Annual {ACM} {SIGACT} Symposium on Theory of Computing, {STOC} 2019,
  Phoenix, AZ, USA, June 23-26, 2019}, pages 334--342. {ACM}, 2019.
\newblock \href {https://doi.org/10.1145/3313276.3316326}
  {\path{doi:10.1145/3313276.3316326}}.

\bibitem{CensorHillel2020FastAS}
Keren Censor{-}Hillel, Michal Dory, Janne~H. Korhonen, and Dean Leitersdorf.
\newblock Fast approximate shortest paths in the congested clique.
\newblock In Peter Robinson and Faith Ellen, editors, {\em Proceedings of the
  2019 {ACM} Symposium on Principles of Distributed Computing, {PODC} 2019,
  Toronto, ON, Canada, July 29 - August 2, 2019}, pages 74--83. {ACM}, 2019.
\newblock \href {https://doi.org/10.1145/3293611.3331633}
  {\path{doi:10.1145/3293611.3331633}}.

\bibitem{SparseMMPODC2019}
Keren Censor{-}Hillel, Michal Dory, Janne~H. Korhonen, and Dean Leitersdorf.
\newblock Fast approximate shortest paths in the congested clique.
\newblock In Peter Robinson and Faith Ellen, editors, {\em Proceedings of the
  2019 {ACM} Symposium on Principles of Distributed Computing, {PODC} 2019,
  Toronto, ON, Canada, July 29 - August 2, 2019}, pages 74--83. {ACM}, 2019.
\newblock \href {https://doi.org/10.1145/3293611.3331633}
  {\path{doi:10.1145/3293611.3331633}}.

\bibitem{CensorHillel2016AlgebraicMI}
Keren Censor{-}Hillel, Petteri Kaski, Janne~H. Korhonen, Christoph Lenzen, Ami
  Paz, and Jukka Suomela.
\newblock Algebraic methods in the congested clique.
\newblock {\em Distributed Comput.}, 32(6):461--478, 2019.
\newblock \href {https://doi.org/10.1007/s00446-016-0270-2}
  {\path{doi:10.1007/s00446-016-0270-2}}.

\bibitem{CensorHillel2020SparseMM}
Keren Censor{-}Hillel, Dean Leitersdorf, and Elia Turner.
\newblock Sparse matrix multiplication and triangle listing in the congested
  clique model.
\newblock {\em Theor. Comput. Sci.}, 809:45--60, 2020.
\newblock \href {https://doi.org/10.1016/j.tcs.2019.11.006}
  {\path{doi:10.1016/j.tcs.2019.11.006}}.

\bibitem{Cui2011ChannelAI}
Yong Cui, Hongyi Wang, and Xiuzhen Cheng.
\newblock Channel allocation in wireless data center networks.
\newblock In {\em {INFOCOM} 2011. 30th {IEEE} International Conference on
  Computer Communications, Joint Conference of the {IEEE} Computer and
  Communications Societies, 10-15 April 2011, Shanghai, China}, pages
  1395--1403. {IEEE}, 2011.
\newblock \href {https://doi.org/10.1109/INFCOM.2011.5934925}
  {\path{doi:10.1109/INFCOM.2011.5934925}}.

\bibitem{Dory2020ExponentiallyFS}
Michal Dory and Merav Parter.
\newblock Exponentially faster shortest paths in the congested clique.
\newblock In Yuval Emek and Christian Cachin, editors, {\em {PODC} '20: {ACM}
  Symposium on Principles of Distributed Computing, Virtual Event, Italy,
  August 3-7, 2020}, pages 59--68. {ACM}, 2020.
\newblock \href {https://doi.org/10.1145/3382734.3405711}
  {\path{doi:10.1145/3382734.3405711}}.

\bibitem{Elkin2017DistributedES}
Michael Elkin.
\newblock Distributed exact shortest paths in sublinear time.
\newblock {\em J. {ACM}}, 67(3):15:1--15:36, 2020.
\newblock \href {https://doi.org/10.1145/3387161} {\path{doi:10.1145/3387161}}.

\bibitem{feldmann2020fast}
Michael Feldmann, Kristian Hinnenthal, and Christian Scheideler.
\newblock Fast hybrid network algorithms for shortest paths in sparse graphs.
\newblock {\em CoRR}, abs/2007.01191, 2020.
\newblock URL: \url{https://arxiv.org/abs/2007.01191}, \href
  {http://arxiv.org/abs/2007.01191} {\path{arXiv:2007.01191}}.

\bibitem{Forster2018AFD}
Sebastian Forster and Danupon Nanongkai.
\newblock A faster distributed single-source shortest paths algorithm.
\newblock In Mikkel Thorup, editor, {\em 59th {IEEE} Annual Symposium on
  Foundations of Computer Science, {FOCS} 2018, Paris, France, October 7-9,
  2018}, pages 686--697. {IEEE} Computer Society, 2018.
\newblock \href {https://doi.org/10.1109/FOCS.2018.00071}
  {\path{doi:10.1109/FOCS.2018.00071}}.

\bibitem{Frischknecht2012NetworksCC}
Silvio Frischknecht, Stephan Holzer, and Roger Wattenhofer.
\newblock Networks cannot compute their diameter in sublinear time.
\newblock In Yuval Rabani, editor, {\em Proceedings of the Twenty-Third Annual
  {ACM-SIAM} Symposium on Discrete Algorithms, {SODA} 2012, Kyoto, Japan,
  January 17-19, 2012}, pages 1150--1162. {SIAM}, 2012.
\newblock \href {https://doi.org/10.1137/1.9781611973099.91}
  {\path{doi:10.1137/1.9781611973099.91}}.

\bibitem{Gall2016FurtherAA}
Fran{\c{c}}ois~Le Gall.
\newblock Further algebraic algorithms in the congested clique model and
  applications to graph-theoretic problems.
\newblock In Cyril Gavoille and David Ilcinkas, editors, {\em Distributed
  Computing - 30th International Symposium, {DISC} 2016, Paris, France,
  September 27-29, 2016. Proceedings}, volume 9888 of {\em Lecture Notes in
  Computer Science}, pages 57--70. Springer, 2016.
\newblock \href {https://doi.org/10.1007/978-3-662-53426-7\\_5}
  {\path{doi:10.1007/978-3-662-53426-7\\_5}}.

\bibitem{Ghaffari2018ImprovedDA}
Mohsen Ghaffari and Jason Li.
\newblock Improved distributed algorithms for exact shortest paths.
\newblock In Ilias Diakonikolas, David Kempe, and Monika Henzinger, editors,
  {\em Proceedings of the 50th Annual {ACM} {SIGACT} Symposium on Theory of
  Computing, {STOC} 2018, Los Angeles, CA, USA, June 25-29, 2018}, pages
  431--444. {ACM}, 2018.
\newblock \href {https://doi.org/10.1145/3188745.3188948}
  {\path{doi:10.1145/3188745.3188948}}.

\bibitem{Gmyr2017DistributedMO}
Robert Gmyr, Kristian Hinnenthal, Christian Scheideler, and Christian Sohler.
\newblock Distributed monitoring of network properties: The power of hybrid
  networks.
\newblock In Ioannis Chatzigiannakis, Piotr Indyk, Fabian Kuhn, and Anca
  Muscholl, editors, {\em 44th International Colloquium on Automata, Languages,
  and Programming, {ICALP} 2017, July 10-14, 2017, Warsaw, Poland}, volume~80
  of {\em LIPIcs}, pages 137:1--137:15. Schloss Dagstuhl - Leibniz-Zentrum
  f{\"{u}}r Informatik, 2017.
\newblock \href {https://doi.org/10.4230/LIPIcs.ICALP.2017.137}
  {\path{doi:10.4230/LIPIcs.ICALP.2017.137}}.

\bibitem{Han2015RUSHRA}
Kai Han, Zhiming Hu, Jun Luo, and Liu Xiang.
\newblock {RUSH:} routing and scheduling for hybrid data center networks.
\newblock In {\em 2015 {IEEE} Conference on Computer Communications, {INFOCOM}
  2015, Kowloon, Hong Kong, April 26 - May 1, 2015}, pages 415--423. {IEEE},
  2015.
\newblock \href {https://doi.org/10.1109/INFOCOM.2015.7218407}
  {\path{doi:10.1109/INFOCOM.2015.7218407}}.

\bibitem{Henzinger2019ADA}
Monika Henzinger, Sebastian Krinninger, and Danupon Nanongkai.
\newblock A deterministic almost-tight distributed algorithm for approximating
  single-source shortest paths.
\newblock In Daniel Wichs and Yishay Mansour, editors, {\em Proceedings of the
  48th Annual {ACM} {SIGACT} Symposium on Theory of Computing, {STOC} 2016,
  Cambridge, MA, USA, June 18-21, 2016}, pages 489--498. {ACM}, 2016.
\newblock \href {https://doi.org/10.1145/2897518.2897638}
  {\path{doi:10.1145/2897518.2897638}}.

\bibitem{Holzer2015ApproximationOD}
Stephan Holzer and Nathan Pinsker.
\newblock Approximation of distances and shortest paths in the broadcast
  congest clique.
\newblock In Emmanuelle Anceaume, Christian Cachin, and Maria~Gradinariu
  Potop{-}Butucaru, editors, {\em 19th International Conference on Principles
  of Distributed Systems, {OPODIS} 2015, December 14-17, 2015, Rennes, France},
  volume~46 of {\em LIPIcs}, pages 6:1--6:16. Schloss Dagstuhl -
  Leibniz-Zentrum f{\"{u}}r Informatik, 2015.
\newblock \href {https://doi.org/10.4230/LIPIcs.OPODIS.2015.6}
  {\path{doi:10.4230/LIPIcs.OPODIS.2015.6}}.

\bibitem{Huang2013TheAA}
He~Huang, Xiangke Liao, Shanshan Li, Shaoliang Peng, Xiaodong Liu, and Bin Lin.
\newblock The architecture and traffic management of wireless collaborated
  hybrid data center network.
\newblock In Dah~Ming Chiu, Jia Wang, Paul Barford, and Srinivasan Seshan,
  editors, {\em {ACM} {SIGCOMM} 2013 Conference, SIGCOMM'13, Hong Kong, China,
  August 12-16, 2013}, pages 511--512. {ACM}, 2013.
\newblock \href {https://doi.org/10.1145/2486001.2491724}
  {\path{doi:10.1145/2486001.2491724}}.

\bibitem{kuhn2020computing}
Fabian Kuhn and Philipp Schneider.
\newblock Computing shortest paths and diameter in the hybrid network model.
\newblock In Yuval Emek and Christian Cachin, editors, {\em {PODC} '20: {ACM}
  Symposium on Principles of Distributed Computing, Virtual Event, Italy,
  August 3-7, 2020}, pages 109--118. {ACM}, 2020.
\newblock \href {https://doi.org/10.1145/3382734.3405719}
  {\path{doi:10.1145/3382734.3405719}}.

\bibitem{Lenzen}
Christoph Lenzen.
\newblock Optimal deterministic routing and sorting on the congested clique.
\newblock In Panagiota Fatourou and Gadi Taubenfeld, editors, {\em {ACM}
  Symposium on Principles of Distributed Computing, {PODC} '13, Montreal, QC,
  Canada, July 22-24, 2013}, pages 42--50. {ACM}, 2013.
\newblock \href {https://doi.org/10.1145/2484239.2501983}
  {\path{doi:10.1145/2484239.2501983}}.

\bibitem{Lenzen2015FastPD}
Christoph Lenzen and Boaz Patt{-}Shamir.
\newblock Fast partial distance estimation and applications.
\newblock In Chryssis Georgiou and Paul~G. Spirakis, editors, {\em Proceedings
  of the 2015 {ACM} Symposium on Principles of Distributed Computing, {PODC}
  2015, Donostia-San Sebasti{\'{a}}n, Spain, July 21 - 23, 2015}, pages
  153--162. {ACM}, 2015.
\newblock \href {https://doi.org/10.1145/2767386.2767398}
  {\path{doi:10.1145/2767386.2767398}}.

\bibitem{lenzen2019distributedDC}
Christoph Lenzen, Boaz Patt{-}Shamir, and David Peleg.
\newblock Distributed distance computation and routing with small messages.
\newblock {\em Distributed Comput.}, 32(2):133--157, 2019.
\newblock \href {https://doi.org/10.1007/s00446-018-0326-6}
  {\path{doi:10.1007/s00446-018-0326-6}}.

\bibitem{Lenzen2013EfficientDS}
Christoph Lenzen and David Peleg.
\newblock Efficient distributed source detection with limited bandwidth.
\newblock In Panagiota Fatourou and Gadi Taubenfeld, editors, {\em {ACM}
  Symposium on Principles of Distributed Computing, {PODC} '13, Montreal, QC,
  Canada, July 22-24, 2013}, pages 375--382. {ACM}, 2013.
\newblock \href {https://doi.org/10.1145/2484239.2484262}
  {\path{doi:10.1145/2484239.2484262}}.

\bibitem{Nanongkai2014DistributedAA}
Danupon Nanongkai.
\newblock Distributed approximation algorithms for weighted shortest paths.
\newblock In David~B. Shmoys, editor, {\em Symposium on Theory of Computing,
  {STOC} 2014, New York, NY, USA, May 31 - June 03, 2014}, pages 565--573.
  {ACM}, 2014.
\newblock \href {https://doi.org/10.1145/2591796.2591850}
  {\path{doi:10.1145/2591796.2591850}}.

\bibitem{Peleg2012DistributedAF}
David Peleg, Liam Roditty, and Elad Tal.
\newblock Distributed algorithms for network diameter and girth.
\newblock In Artur Czumaj, Kurt Mehlhorn, Andrew~M. Pitts, and Roger
  Wattenhofer, editors, {\em Automata, Languages, and Programming - 39th
  International Colloquium, {ICALP} 2012, Warwick, UK, July 9-13, 2012,
  Proceedings, Part {II}}, volume 7392 of {\em Lecture Notes in Computer
  Science}, pages 660--672. Springer, 2012.
\newblock \href {https://doi.org/10.1007/978-3-642-31585-5\\_58}
  {\path{doi:10.1007/978-3-642-31585-5\\_58}}.

\bibitem{Roditty2011OnDynamicSPP}
Liam Roditty and Uri Zwick.
\newblock On dynamic shortest paths problems.
\newblock {\em Algorithmica}, 61(2):389--401, 2011.
\newblock \href {https://doi.org/10.1007/s00453-010-9401-5}
  {\path{doi:10.1007/s00453-010-9401-5}}.

\bibitem{Sarma2011DistributedVA}
Atish~Das Sarma, Stephan Holzer, Liah Kor, Amos Korman, Danupon Nanongkai,
  Gopal Pandurangan, David Peleg, and Roger Wattenhofer.
\newblock Distributed verification and hardness of distributed approximation.
\newblock {\em {SIAM} J. Comput.}, 41(5):1235--1265, 2012.
\newblock \href {https://doi.org/10.1137/11085178X}
  {\path{doi:10.1137/11085178X}}.

\bibitem{Ullman1990HighprobabilityPT}
Jeffrey~D. Ullman and Mihalis Yannakakis.
\newblock High-probability parallel transitive-closure algorithms.
\newblock {\em {SIAM} J. Comput.}, 20(1):100--125, 1991.
\newblock \href {https://doi.org/10.1137/0220006} {\path{doi:10.1137/0220006}}.

\bibitem{Vissicchio2017Opportunities}
Stefano Vissicchio, Laurent Vanbever, and Olivier Bonaventure.
\newblock Opportunities and research challenges of hybrid software defined
  networks.
\newblock {\em Comput. Commun. Rev.}, 44(2):70--75, 2014.
\newblock \href {https://doi.org/10.1145/2602204.2602216}
  {\path{doi:10.1145/2602204.2602216}}.

\bibitem{Wang2010CThrough}
Guohui Wang, David~G. Andersen, Michael Kaminsky, Konstantina Papagiannaki,
  T.~S.~Eugene Ng, Michael Kozuch, and Michael~P. Ryan.
\newblock c-through: part-time optics in data centers.
\newblock In Shivkumar Kalyanaraman, Venkata~N. Padmanabhan, K.~K.
  Ramakrishnan, Rajeev Shorey, and Geoffrey~M. Voelker, editors, {\em
  Proceedings of the {ACM} {SIGCOMM} 2010 Conference on Applications,
  Technologies, Architectures, and Protocols for Computer Communications, New
  Delhi, India, August 30 -September 3, 2010}, pages 327--338. {ACM}, 2010.
\newblock \href {https://doi.org/10.1145/1851182.1851222}
  {\path{doi:10.1145/1851182.1851222}}.

\end{thebibliography}

\appendix

\section{Preliminaries -- Missing proofs}
\label{app:prelim}

We slightly extend the claim used in the proof of \cite[Theorem 2.7]{AHKSS20} -- a basic claim regarding usage of skeleton graphs for purposes of distance computations in the \hybrid model. We include a proof for completeness.

\ExtDist*
\begin{claimproof}

    First, each node $u\in V$ learns $\tilde{d}(v, s)$ for each source node $s\in V'$ and skeleton node $v\in M$ in its $h=\tildeTheta{n^{1-x}}$-hop neighborhood. 
    Then, node $u\in V$ approximates its distance to each source $s\in V'$ using \cref{eq:shortestPathApproximation}, as follows. 
    
    \begin{align}\label{eq:shortestPathApproximation}
        \tilde{d}(u, s)=\min\Set{d^h(u, s),\min_{v\in M\cap N_G^h(u)}\Set{d^h(u, v) + \tilde{d}(v, s)}}
    \end{align}

    For each node $u\in V$, skeleton node $v\in M\cap N_G^h$, and source $s\in V'$, $d^h(u, v)\geq d(u, v)$, it holds that $\tilde{d}(v, s)\geq d(v, s)$ and $d^h(u, s)\geq d(u, s)$, by triangle inequality $d(u, v)+d(v, s)\geq d(u, s)$, thus $\tilde{d}(u, s)\geq d(u, s)$. To show that $\tilde{d}(v,s)\leq \alpha \cdot d(v, s) + 2\beta$, we consider two cases. If there is a shortest path of at most $h$ hops between $s$ and $v$, then its length appears as $d^h(u, s)$ in the \cref{eq:shortestPathApproximation}. Otherwise, by the Property~\ref{itm:skeleton:sp} of the definition of the skeleton graph, there is a shortest path from $u$ to $s$ on which there is a node $v'\in M$ in one of its $h$ first notes. This means that $d(u, v')+d(v', s)=d(u, s)$ and also $d^h(u, v')=d(u, v')$ (as each sub-path of shortest path is a shortest path) and so for this $v=v'$ the corresponding element $d^h(u, v')+\tilde{d}(v', s)$ appears as an option in \cref{eq:shortestPathApproximation}. This implies that $\tilde{d}(u, s)\leq d^h(u, v') + \tilde{d}(v', s) \leq d(u, v')+ \alpha\cdot d(v', s) + \beta = \alpha\cdot (d(u, v') + d(v', s)) + \beta\leq \alpha \cdot d(u, s) + \beta$. Thus, in both cases $\tilde{d}(u, s)\leq \alpha \cdot d(u, s) + \beta$, and therefore $\tilde{d}$ is an $\left(\alpha, \beta\right)$-approximation for the shortest path.

    Learning the information in the $h$-hop neighborhood requires $h=\tildeBigO{n^{1-x}}$ rounds and the rest
    is done locally, so overall the algorithm runs in \tildeBigO{n^{1-x}} rounds of the \hybrid model.
\end{claimproof}

\section{Oracles in the \hybrid Model -- Missing proof}
\label{app:SimsInTools}

\section{Shortest Path Algorithms -- Extended}
\label{app:bigO}
A useful communication routine which we use in this section is \fullref{thr:aggregateAndBroadcast}. It was proved for a weaker \ncc model\cite[Theorem 2.2]{AGGHKL19}, and thus directly holds in the \hybrid model.

\begin{claim}[Aggregate And Broadcast]\label{thr:aggregateAndBroadcast}
Let $S$ be some ground set and $f$ be a globally known function, which maps some multiset $S'\subseteq S$ to a value $f(S')\in S$. Assume that there exists a globally known function $g$, such that for any multiset $S'$ and any partition $S_1,\cdots ,S_\ell$ of $S'$ holds $f(S)=g(f(S_1), \dots, f(S_\ell))$. Assume also that each node $u$ has an input value $val(u)\in S$. Then there is an algorithm in the \hybrid model, which runs for \bigO{\log{n}} rounds, after which each node $u$ knows $f(v_1,\dots, v_n)$.
\end{claim}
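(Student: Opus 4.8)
The plan is to give a standard tree-based aggregation followed by a broadcast, using only the global edges of the \hybrid model. Since the \ncc model is precisely the \hybrid model with $\lambda=0$ and $\gamma=\bigO{\log n}$, it suffices to exhibit an algorithm that never uses local edges; such an algorithm then runs verbatim in the more powerful \hybrid model, which is exactly why the cited \ncc result transfers without modification. The key structural observation is that, because identifiers lie in $\interval{n}$, every node can compute \emph{without any communication} its position in the implicit balanced binary tree in which node $i$ has parent $\floor{i/2}$ and children $2i$ and $2i+1$ (a child existing iff its index is at most $n$); this tree has depth $\bigO{\log n}$ and every node knows its incident tree edges.

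First I would run the aggregation phase from the deepest level up to the root, maintaining the invariant that each node holds the value of $f$ applied to the multiset of input values in the subtree rooted at it. A leaf $u$ (one with $2u>n$) satisfies this with $f(\set{val(u)})$. An internal node $u$, once it has received from each child $c$ the aggregate $a_c=f(\text{subtree of }c)$, combines these with its own value by setting its aggregate to $g(f(\set{val(u)}), a_{2u}, a_{2u+1})$; by the hypothesis that $f(S')=g(f(S_1),\dots,f(S_\ell))$ for every partition, this equals $f$ of the whole subtree. Information advances one level per round, so after $\bigO{\log n}$ rounds the root holds $f(val(v_1),\dots,val(v_n))$. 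Each node transmits a single aggregate to its parent, so at most $\bigO{1}$ global messages per node per round are used, well within the $\gamma=\bigO{\log n}$ budget.

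Then I would broadcast: the root sends its aggregate to its two children, and each node, upon receiving the value, forwards it down to its own children. After another $\bigO{\log n}$ rounds (the depth of the tree) every node knows $f(v_1,\dots,v_n)$, again using only $\bigO{1}$ global messages per round. Summing the two phases gives the claimed $\bigO{\log n}$ round complexity.

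The hard part, such as it is, is correctness of the fold, and this is given to us directly by the decomposition hypothesis on $g$: identifying the partition $S_1,\dots,S_\ell$ with the subtree structure of the tree is exactly what makes the level-by-level combination yield the correct global aggregate, so no associativity or commutativity beyond this property is needed. The only implicit assumption worth flagging is that an element of $S$ fits in a single $\bigO{\log n}$-bit message; if each value requires $k$ messages the round complexity scales by a factor of $k$, but in the intended applications (sums, minima, counts) this factor is $\bigO{1}$.
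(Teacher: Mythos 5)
Your proof is correct, but it takes a different route from the paper: the paper gives no construction at all, instead simply citing the result as \cite[Theorem 2.2]{AGGHKL19} for the \ncc model and observing that, since the \ncc model is the \hybrid model with $\lambda=0$ and $\gamma=O(\log n)$, the algorithm runs verbatim in \hybrid. You use that same transfer observation, but then actually re-derive the primitive from scratch: an implicit binary tree on the identifier space $\interval{n}$ (parent $\floor{i/2}$, children $2i$, $2i+1$), a convergecast maintaining the invariant that each node holds $f$ of its subtree's multiset, and a top-down broadcast, each phase taking $\bigO{\log n}$ rounds with $\bigO{1}$ global messages per node per round. Your fold step is exactly justified by the partition hypothesis on $g$ (applied to the partition $\set{val(u)}$, subtree of $2u$, subtree of $2u+1$), and your message-count bookkeeping stays well under the $\gamma$ budget, so no adversarial message dropping is triggered. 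What the paper's approach buys is brevity and reliance on a vetted external result; what yours buys is self-containment, an explicit communication pattern showing why identifiers in $\interval{n}$ suffice to build the tree without any communication, and the explicit (and correct) caveat that elements of $S$ must fit in an $\bigO{\log n}$-bit message, an assumption the paper's statement leaves implicit.
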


For example, we use \Cref{thr:aggregateAndBroadcast} to sum numbers that each node has and make the result globally known.

\subsection{Exact \texorpdfstring{$n^{1/3}$-}{M}SSP ~-- Missing Proof}
\label{appsec:exactMSSP}

The first tool we need for our exact {$n^{1/3}$}-SSP algorithm is a \emph{token dissemination} algorithm, given in~\cite[Theorem 2.1]{AHKSS20}.

\begin{definition}[Token Dissemination Problem]
    The problem of making $k$ distinct tokens globally known, where each token is initially known to one node, and each node initially knows at most $\ell$ tokens is called the \emph{$(k,\ell)$-Token Dissemination (TD) problem}.
\end{definition}
\begin{claim}[Token Dissemination]~\cite[Theorem 2.2]{kuhn2020computing}
There is an algorithm that solves $(k, \ell)$-TD in the \hybrid model in $\tildeBigO{\sqrt{k}+\ell}$ rounds, \whp
\label{thr:tokenDissemination}
\end{claim}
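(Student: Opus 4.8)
The plan is to prove this by exploiting the two very different communication resources of the \hybrid model in a balanced way: the global edges, which let any node reach any other node but move only $\gamma=\bigO{\log n}$ tokens per round, and the local edges, which have unbounded bandwidth but advance information only one hop along $G$ per round. I would reduce the whole task to showing that, for an arbitrary fixed token $t$ and an arbitrary fixed target node $v$, node $v$ learns $t$ within $\tildeBigO{\sqrt{k}+\ell}$ rounds with high enough probability to survive a union bound over all $k$ tokens and all $n$ targets. First I would dispose of the $+\ell$ term: since a node initially holds up to $\ell$ tokens and can inject only $\gamma$ of them per round into the global edges, an initial \emph{offload} phase of $\tildeBigO{\ell}$ rounds is both necessary and sufficient to push every token out of its originating node and to spread the starting load so that afterwards each node is responsible for only $\tildeBigO{1}$ tokens \whp

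For the $\sqrt{k}$ term I would combine a random global \emph{scatter} with a local \emph{flood}. Fix a radius $T=\tildeTheta{\sqrt{k}}$ and note that in a connected graph the $T$-hop ball around any node contains at least $T$ nodes, so its \emph{aggregate} global receiving capacity over $T$ rounds is about $\gamma\cdot T\cdot T=\tildeTheta{\gamma k}\geq k$ — just enough to pull in all $k$ distinct tokens. The algorithm realizes this budget by having every node, in every round, relay each token it currently knows to a uniformly random node over a global edge (respecting its $\gamma$ budget), while simultaneously sharing its entire token set with all its $G$-neighbors over the local edges. Intuitively, the global relays seed each token at enough random locations that, \whp, some copy lands inside the $T$-ball of every target, after which the one-hop-per-round local flood carries it to the center within $T$ rounds; balancing the scatter cost against the flood radius is exactly what produces the $\sqrt{k}$ bound.

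The analysis would be a standard Chernoff-plus-union-bound argument: Chernoff bounds show that the random relaying keeps every node's per-round global load $\tildeBigO{\gamma}$ (so that no message is dropped) and that the random seeding covers the relevant balls, and a union bound over the $k$ tokens and $n$ targets upgrades this to a high-probability guarantee. The main obstacle I anticipate is controlling \emph{congestion and coverage simultaneously} for all $k$ tokens at once: one must guarantee both that every token reaches a neighbor of every node and that no node ever exceeds its $\bigO{\log n}$ global budget, and these two requirements pull against each other — sparse, low-degree neighborhoods make coverage delicate (the balls are barely large enough), whereas dense neighborhoods create the risk of overloading high-degree relay nodes. Making the random relaying schedule meet both constraints with the extra polylogarithmic slack hidden in the $\tildeBigO{\cdot}$, uniformly over an arbitrary input graph $G$, is the technical heart of the proof.
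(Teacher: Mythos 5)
This claim is not proven in the paper at all: it is imported verbatim as \cite[Theorem 2.2]{kuhn2020computing} (the technique originates with Augustine et al.~\cite{AHKSS20}), so the comparison here is against that cited proof rather than anything in this paper. Your sketch does have the right skeleton, and it is the same skeleton as the cited result: an $\tildeBigO{\ell}$ offload phase to handle the $+\ell$ term, then a scatter over global edges combined with a flood over local edges, balanced at radius $T=\tildeTheta{\sqrt{k}}$ using the observation that a $T$-ball in a connected graph either contains at least $T$ nodes or is the whole graph (in which case flooding alone finishes).

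However, there is a genuine gap, and it is exactly the one you flag at the end but do not resolve. The step ``the global relays seed each token at enough random locations that, \whp, some copy lands inside the $T$-ball of every target'' does not follow from the aggregate-capacity count $\gamma T^2=\tildeBigOmega{k}$. That count shows the ball receives $\tildeBigOmega{k}$ token-\emph{transmissions}; it does not show that all $k$ \emph{distinct} tokens occur among them. For a fixed token $t$ to hit a fixed $T$-ball \whp, on the order of $n/T$ independently addressed copies of $t$ must be launched, and the number of copies of $t$ can only grow through the same $\gamma$-bounded global channel shared by all $k$ tokens: a node that currently knows $m>\gamma$ tokens can forward any particular one with probability at most $\gamma/m$ per round, so once nodes have accumulated many tokens the per-token replication rate collapses. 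Your algorithm is not even fully specified at this point (which tokens does a node relay when it knows more than $\gamma$ of them?), and the ``standard Chernoff-plus-union-bound'' does not apply as stated, because the seeding events for a single token across rounds are not independent --- they are driven by the random, congestion-throttled multiplication of its copies. Analyzing this growth-under-congestion process (in the cited proof, via a carefully chosen random forwarding rule and a phase-based argument that the copy count of every token still multiplies fast enough) is the technical core of the theorem; without it, your argument establishes the offload phase and the flooding radius, but not the coverage claim, so the $\tildeBigO{\sqrt{k}}$ bound remains unproven.
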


    \knowsAPSP*
    \begin{proofof}{\cref{lemma:exactMSSP:knowsAPSP}}
        Let $u\in M, s\in U$. To show the claim we split into the two cases by the number of sources in $h$-hop neighborhood of $s$. If the neighborhood is dense, meaning $\size{N_G^{h}(s)\cap M}= \tildeBigOmega{n^{1/3}}$, then $s$ knows $d(s, u)$ \whp after \cref{line:exactMSSP:exactRMSSP} by \fullref{thr:exactRMSSP} and sends it to $u$ in \cref{line:exactMSSP:clique} by \fullref{thr:reassignSkeletons,thr:CliqueInHybridSimulation} In this case \cref{eq:exactMSSP} gives the exact answer for $v'=u$ \whp
        Otherwise, the distance is computed by \cref{eq:exactMSSP}. 
        
        First, we show that node $u\in M$ on \cref{line:exactMSSP:distanceToSparse} can compute \cref{eq:exactMSSP}. From the local exploration in \cref{line:exactMSSP:learnSkeleton} $u$ learns $d^h(u, s)$. After \cref{line:exactMSSP:exactRMSSP} $u$, \whp, knows for each $v'\in M$ $d(u, v')$ and due to \fullref{thr:tokenDissemination}. In \cref{line:exactMSSP:td} it receives $d(v', s)$ \whp
        
        Now we prove that the value $\tilde{d}(u, s)$, computed by \cref{eq:exactMSSP}, \whp equals $d(v, s)$. Each element in $\set{d^h(u, s)}\cup\set{d(u, v')+d^h(v', s)}_{v'\in M}$ is either infinite, or corresponds to some (not necessary simple) path, thus, since the graph has non-negative weights, it holds that $\tilde{d}(u, s)\geq d(u, s)$. We show that $\tilde{d}(u, s)\leq d(u, s)$ by considering two cases. If shortest path between $u$ and $s$ has less than $h$ hops, then $d(u, s)=d^h(u, s)$ appears as an argument to the set over which we take the minimum and thus $\tilde{d}(u, s)\leq d^h(u, s)=d(u, s)$. Otherwise, by Property~\ref{itm:skeleton:sp} \whp there exists a shortest path between $u$ and $s$ such that there is a node $v''\in M$ in one of its $h$ last (closest to $s$) nodes.
        
        Thus, the corresponding element $d(u, v'')+d^h(v'', s)=d(u, s)$ appears in $\set{d(u, v')+d^h(v', s)}_{v'\in M}$ for $v'=v''$. Therefore, in both cases $\tilde{d}(u, s)\leq d(u, s)$. 
    \end{proofof}

\subsection{Approximating  \texorpdfstring{$n^x$-}{M}SSP from a Given Set of Sources}\label{app:MSSP}
We previously showed how to approximate distances to a set of independently, uniformly, randomly chosen sources, and here we leverage this to the case of a \emph{given} set of sources, which is the case we care more about. We show an algorithm for approximating distances from a given set of sources, which is \emph{tight}, matching the lower bound of \cite{kuhn2020computing}, when the number of sources is at least $\Omega(n^{2/3})$: given $O(n^x)$ sources, our algorithm takes $\tilde O(n^{1/3}/\epsilon + n^{x/2})$ rounds for a $(1 + \epsilon)$-approximation in unweighted graphs, and a $3$-approximation in weighted graphs.

The proof of the following, appears as a part of \cite[Theorem 4.2]{kuhn2020computing}, which shows how to obtain distance approximation algorithms in the \hybrid model from distance approximation algorithms in the \clique model. A similar approach is also used in the proof of \cite[Theorem 2.3]{AHKSS20}. We find this tool useful in general in order to obtain, given some algorithm which approximates distances on a skeleton graph, an algorithm for approximating distances from a set of nodes to the entire graph, and hence we extract it here from the proof of \cite[Theorem 4.2]{kuhn2020computing}.

\begin{claim}[$n^{x}$-SSP from APSP on Skeleton Graph]\cite[Theorem 4.2]{kuhn2020computing}\label{thr:MSSPToRAPSP}
    Consider a graph $G=(V, E)$, its skeleton graph $S_x=(M, E_x)$, for some constant $0<x<1$ and a set of sources $U$, where $|U| = \tilde \Theta(n^y)$, for some constant $0 < y < 1$. Let $A_x$ be a $T$-round algorithm in the \hybrid model, such that given a skeleton graph $S_x$, $A_x$ ensures that for every pair $v, v'\in M$, both $v$ and $v'$ know an $(\alpha, \beta)$-approximation for the distance between them in $G$. 

    Then, for any value $\eta\geq 1$, there is an algorithm $B$ for the \hybrid model which takes $\tildeBigO{T+n^{\frac{y}{2}}+\eta\cdot n^{1-x}}$ rounds over $G$, and ensures that for every $v \in V, s \in U$, node $v$ knows an approximation for the distance from $v$ to $s$ in $G$, where the approximation factor is $\left(2\alpha+1, \beta\right)$ if $G$ is weighted, and $\left(\alpha + \frac{2}{\eta}, \beta\right)$ if $G$ is unweighted. 
\end{claim}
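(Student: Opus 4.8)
The plan is to reduce the task to running $A_x$ on the skeleton and then pushing the resulting skeleton distances outward with \Cref{thr:computeMSSPFromMSSPOnSkeleton}. The only real content is getting every skeleton node $v'\in M$ to learn a good approximation of $d_G(v',s)$ for every source $s\in U$; once this holds, \Cref{thr:computeMSSPFromMSSPOnSkeleton} extends it to all of $V$ in $\tildeBigO{n^{1-x}}$ rounds while preserving the approximation factor. First I would run $A_x$ (cost $T$), so that every pair $v,v'\in M$ knows an $(\alpha,\beta)$-approximation $\tilde d(v,v')$. In parallel, each source $s$ and each skeleton node explores its $\tildeBigO{\eta\cdot n^{1-x}}$-hop neighborhood through local edges (cost $\tildeBigO{\eta n^{1-x}}$); in particular each source $s$ identifies a single representative skeleton $v(s)$, taken to be its nearest skeleton among those within $h=\tildeTheta{n^{1-x}}$ hops, together with the exact distance $d_G(v(s),s)=d^h(v(s),s)$.

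The key routing step is to make the triples $\langle s, v(s), d_G(v(s),s)\rangle$ globally known. There are $|U|=\tildeTheta{n^y}$ such triples, each initially held by a single node, so I would disseminate them with \Cref{thr:tokenDissemination} as a $(\tildeTheta{n^y},1)$-token-dissemination instance, costing $\tildeBigO{\sqrt{n^y}}=\tildeBigO{n^{y/2}}$ rounds. Afterwards each skeleton node $v'$ computes, for every source $s$, the estimate $\hat d(v',s)=\min\{\, d^{\eta h}(v',s),\ \tilde d(v',v(s))+d_G(v(s),s)\,\}$, using the disseminated triple and the value $\tilde d(v',v(s))$ it already holds from $A_x$; the first term, known from its own local exploration, handles sources inside its explored ball.

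The approximation analysis is the delicate part and is where the stated factors appear. If the whole shortest $v'$--$s$ path is captured by $v'$'s local exploration (few hops in the weighted case, weight at most $\eta h$ in the unweighted case), the first term is exact. Otherwise Property~\ref{itm:skeleton:sp} of \Cref{def:skeleton} guarantees a skeleton $v_0$ within $h$ hops of $s$ lying on a shortest $v'$--$s$ path, so $d_G(v_0,s)\le d_G(v',s)$, and since $v(s)$ is the nearest skeleton to $s$ we get $d_G(v(s),s)\le d_G(v_0,s)\le d_G(v',s)$. The triangle inequality then gives $\hat d(v',s)\le \alpha\,d_G(v',s)+(\alpha+1)\,d_G(v(s),s)+\beta$; in the weighted case bounding $d_G(v(s),s)\le d_G(v',s)$ yields the $(2\alpha+1,\beta)$ factor, while in the unweighted case $d_G(v(s),s)\le h$ together with $d_G(v',s)>\eta h$ turns the additive slack into a multiplicative $O(1/\eta)$ error, which a careful accounting sharpens to the claimed $\alpha+2/\eta$. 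The lower bound $\hat d(v',s)\ge d_G(v',s)$ is immediate, as every candidate in the minimum corresponds to a genuine path.

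Finally I would invoke \Cref{thr:computeMSSPFromMSSPOnSkeleton} with the $(\gamma,\beta)$-approximate skeleton-to-source distances just computed (where $\gamma=2\alpha+1$ weighted, $\gamma=\alpha+2/\eta$ unweighted), obtaining the same guarantee from every $v\in V$ to every $s\in U$ in an extra $\tildeBigO{n^{1-x}}$ rounds; summing the phases gives $\tildeBigO{T+n^{y/2}+\eta\cdot n^{1-x}}$. The main obstacle is precisely the bookkeeping of the third paragraph: the single-representative-per-source restriction, forced by the $\tildeBigO{n^{y/2}}$ token budget, prevents choosing a skeleton that lies on the $v'$--$s$ shortest path for each target simultaneously, so the bound must go through the triangle inequality, and extracting exactly $2\alpha+1$ and $\alpha+2/\eta$ rather than looser constants hinges on choosing $v(s)$ as the nearest skeleton and on the shortest-path placement guaranteed by Property~\ref{itm:skeleton:sp}.
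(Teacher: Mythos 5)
Your proposal is correct and takes essentially the same approach as the paper: run $A_x$, have each source disseminate a single token naming its nearest skeleton representative together with its $h$-hop distance to it (a $(\tildeTheta{n^y},1)$-token-dissemination instance costing $\tildeBigO{n^{y/2}}$ rounds), use $\eta h$-hop local exploration to handle nearby sources exactly, and bound the detour through the representative via the triangle inequality and Property~\ref{itm:skeleton:sp}. The only difference is organizational---the paper evaluates the combining minimum directly at every node of $V$, whereas you evaluate it only at skeleton nodes and then invoke \Cref{thr:computeMSSPFromMSSPOnSkeleton}, which applies the identical formula---and note that your unweighted accounting, exactly like the paper's own chain, honestly yields $\alpha+(\alpha+1)/\eta$ rather than $\alpha+2/\eta$, the two coinciding in the only instantiation ever used, namely $\alpha=1$.
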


\begin{claimproof}
    We follow the proof of \cite[Theorem 2.2]{kuhn2020computing}.
    We apply $A_x$ on $S$, and denote by $\tilde{d}^S$ the computed $(\alpha,\beta)$-approximation for the distances amongst the set of nodes $M$.
    Afterwards, nodes in $V$ learn their $(\eta\cdot h)$-hop neighborhoods, where $h=\tildeBigO{n^{1-x}}$. As in \cite[Theorem 4.2]{kuhn2020computing}, each source $s\in U$ selects its \emph{representative} -- the closest node $r_s\in M$ found in its $h$-hop neighborhood -- and participates in a token dissemination protocol with the token $\Braket{r_s, d^{h}(r_s, s)}$. Each node $u\in V$, for each source $s\in U$, outputs $\tilde{d}(u, s)=\min\set{d^{\eta h}(u, s), \min_{v'\in M\cap N_G^h(u)}\set{d^{h}(u, v')+\tilde{d}(v', r_s)} + d^{h}(r_s, s)}$.
    
    Each node $u\in V$ knows $\tilde{d}(v', r_s)$ since it is an output of the algorithm $A_x$ for nodes $v'\in M\cap N_G^h(u)$ and $u\in V$ learns this information from each $v'$ in its $h$-hop neighborhood. Node $u$ also knows $d^{h}(r_s, s)$ \whp, due to \fullref{thr:tokenDissemination}.
    
    There are at most \tildeBigO{n^{y}} source nodes \whp due to Property~\ref{itm:skeleton:size}, so the token dissemination protocol with $k=\tildeBigO{n^{y}}$ and $\ell=1$ takes $\tildeBigO{\sqrt{n^y}}=\tildeBigO{n^{\frac{y}{2}}}$ rounds \whp, by \fullref{thr:tokenDissemination}. Overall, the number of rounds is \tildeBigO{T+n^{\frac{y}{2}}+\eta\cdot n^{1-x}}, as needed.
    
    To show the approximation, first notice that the algorithm does not underestimate the distances, since each value in the set over which the minimum is taken is either infinite or corresponds to the approximate weight of a not necessarily simple path. This also implies that in case shortest path between $u$ and $s$ has less than $\eta h$ hops, it holds that $\tilde{d}(u, s)=d^{\eta h}(u, s)=d(u, v)$, so we assume without loss of generality that shortest path between $u$ and $v$ at at least $h$ hops and also $d(u, s)\geq hop(u, s)>\eta\cdot h$.

    Take two nodes $u\in V,s\in U$, by Property~\ref{itm:skeleton:sp}, there is a shortest path between $u$ and $s$ such that there is a node $u'\in M$ in the first $h$ nodes (from the $u$-side) and a node $v''\in M$ in the last ${h}$ nodes (from the $v$ side). Let $v'=\arg\min_{v'\in M}\set{d^{h}(u, v')+\tilde{d}(v', r_s)} + d^{h}(r_s, s)$.
    
    We upper bound $\tilde{d}$ for the weighted case:
    \begin{align*}
        \tilde{d}(u, s)\leq&
        \left(d^{h}(u, v')+\tilde{d}(v', r_s)\right)+d_{ h}(r_s, s)\\
        &\leq\left(d(u, u')+\tilde{d}(u', v'')\right)+d(v'',s)\\
        &\leq\left(d(u, u')+d(v'',s)\right)+ \alpha \cdot d(u', v'') + \beta\\
        &\leq d(u, s)+\alpha \cdot \left(d(u', s)+d(s, r_s)\right) + \beta\\
        &\leq d(u, s)+\alpha \cdot \left(d(u, s)+d(v, v'')\right) + \beta\\
        &\leq\left(2\alpha + 1\right)\cdot d(u, s) + \beta,
    \end{align*}
    where the first transition is due to the computation of $\tilde{d}(u, s)$, the second is implied by the choice of $v'$ and the representative $r_s$, the third follows from the definition of an $(\alpha,\beta)$-approximation of the distances and Property~\ref{itm:skeleton:distance} of the skeleton graph, the forth is due to the non-negative weights and the triangle inequality, the fifth is due to the non-negative weights and the choice of the representative $r_s$, and the last one is due to the non-negative weights. Also, we may use the fact that $d(u, v)\geq \eta \cdot h$ and obtain a purely multiplicative approximation factor of $2\alpha + 1 + \frac{\beta}{\eta \cdot h}$.
    
    However, for the unweighted case we can upper bound $\tilde{d}$ slightly better:
    \begin{align*}
        \tilde{d}(u, s)\leq&
        \left(du, v')+\tilde{d}^S(v', r_s)\right)+d(r_s, s)\\
        &\leq\left(d(u, u')+\tilde{d}^S(u', v')\right)+d(r_s, s)\\
        &=2\cdot h + \alpha \cdot d(u', v') + \beta\\
        &<2\cdot \frac{d(u, v)}{\eta} + \alpha \cdot d(u, v) + \beta,
    \end{align*}
    where the first transition is due to the computation of $\tilde{d}(u, s)$, the second is implied by the choice of $v'$, the third follows from $u'$ and $r_s$ being in $h$-hop neighborhood of $u$ and $s$, respectively, and the definition of a $(\alpha,\beta)$-approximation, and the forth is due to the assumption that $d(u, s)> \eta\cdot h$.
    Again, using the fact that $d(u, v)\geq \eta \cdot h$, we obtain a purely multiplicative approximation factor of $\alpha + \frac{2}{\eta} + \frac{\beta}{\eta\cdot h}$.
\end{claimproof}

We use the claim above to compute our tight approximation of the $n^{y}$-SSP.

\MSSP*
\begin{proof}
    We use $x = 2/3$ and let $A_{2/3}$ be the algorithm from \fullref{thr:exactRAPSP}. Notice that \fullref{thr:exactRAPSP} expects a skeleton graph and not a random set $M$, yet it is possible to convert $M$ to a skeleton graph using \fullref{claim:skeletonOnSampled} in $\tilde O(n^{1/3})$ rounds \whp Further, \fullref{thr:exactRAPSP} computes the distances between the nodes of $M$ over a skeleton graph and not over $G$, as required of $A_{2/3}$ in \fullref{thr:MSSPToRAPSP}, yet, this is equivalent due to Property~\ref{itm:skeleton:distance} in the definition of a skeleton graph. Using $A_{2/3}$, the rest of the proof follows directly from \fullref{thr:MSSPToRAPSP}, with $\alpha=1,\beta=0,\eta=2/\epsilon,x=2/3,T=\tilde O(n^{1/3})$ in $\tildeBigO{n^{1/3}+n^{y/2}+n^{1/3}/\epsilon}=\tildeBigO{n^{1/3}/\epsilon+n^{y/2}}$ rounds leading to the approximation factor of $\left(2\cdot 1 + 1=3, 0\right)=(3, 0)$ for weighted graphs and $\left(1+2/(2/\epsilon), 0\right)=(1+\epsilon, 0)$.
\end{proof}

Notice, that for $y\geq \frac{2}{3}$, the complexity is $\tildeTheta{n^{\frac{y}{2}}}$, which is tight due to the lower bound of \cite[Theorem 1.5]{kuhn2020computing}.

\subsection{Approximations of Eccentricities}\label{app:ecc}

\begin{restatable}[$(1+\epsilon)$-Approx. Unweighted Eccentricities]{lemma}{unweightedEccentricities}\label{thr:unweightedEccentricitiesApproximation}
Let $G=(V, E)$ be an unweighted graph, and let $\epsilon>0$. Then there exists an algorithm in the \hybrid model which, for each $v$, computes a $(1+\epsilon)$-approximation of unweighted $ecc(v)$ in $\tildeBigO{n^{1/3}/\epsilon}$ rounds \whp
\end{restatable}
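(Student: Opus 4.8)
The plan is to reduce eccentricity estimation to two ingredients, both affordable in $\tildeBigO{n^{1/3}/\epsilon}$ rounds: exact distances from a random sample of sources, and an enlarged local exploration. First I would sample a set $M\subseteq V$ by including each node independently with probability $n^{-1/3}$, and invoke \fullref{thr:exactRMSSP} with $x=2/3$ so that, within $\tildeBigO{n^{1/3}}$ rounds \whp, every node $v\in V$ learns its exact distance $d(v,m)$ to every $m\in M$; by \fullref{claim:skeletonOnSampled} this same $M$ underlies a skeleton graph on which Property~\ref{itm:skeleton:sp} holds \whp. In parallel, every node $v$ learns its $h'$-hop neighborhood $N_G^{h'}(v)$ for $h'=\tildeTheta{n^{1/3}/\epsilon}$ using only local edges, which takes $\tildeBigO{n^{1/3}/\epsilon}$ rounds and, since the graph is unweighted, lets $v$ recover the exact hop-distance $d(v,u)=hop(v,u)$ to every $u$ within $h'$ hops. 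Each node $v$ then outputs, with no further communication,
\[
\widetilde{ecc}(v)=\max\set{\max_{u\in N_G^{h'}(v)}d(v,u),\ \max_{m\in M}d(v,m)}.
\]

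For correctness, both quantities inside the maximum are distances from $v$ to genuine nodes of $G$, so immediately $\widetilde{ecc}(v)\leq ecc(v)$. For the lower bound let $u^*$ be a farthest node, so $d(v,u^*)=ecc(v)$, and split on the size of the eccentricity. If $ecc(v)\leq h'$ then $u^*\in N_G^{h'}(v)$ and the first term already equals $ecc(v)$ exactly. If instead $ecc(v)>h'\geq h$, then $hop(v,u^*)=ecc(v)\geq h$, so Property~\ref{itm:skeleton:sp} supplies a shortest $v$–$u^*$ path whose last $h$ nodes contain some $m\in M$; this $m$ satisfies $d(m,u^*)<h$, whence by the triangle inequality $d(v,m)\geq d(v,u^*)-d(m,u^*)>ecc(v)-h$, so the second term yields $\widetilde{ecc}(v)>ecc(v)-h$.

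It remains to calibrate $h'$ so the additive slack $h$ becomes a multiplicative $(1+\epsilon)$ factor. Choosing $h'=\Theta(h/\epsilon)=\tildeTheta{n^{1/3}/\epsilon}$, the regime $ecc(v)>h'$ forces $ecc(v)\cdot\epsilon/(1+\epsilon)\geq h$, which rearranges to $ecc(v)-h\geq ecc(v)/(1+\epsilon)$; together with the exact first case this gives $ecc(v)/(1+\epsilon)\leq\widetilde{ecc}(v)\leq ecc(v)$, as required. The round complexity is dominated by the $\tildeBigO{n^{1/3}}$ cost of \fullref{thr:exactRMSSP} and the $\tildeBigO{n^{1/3}/\epsilon}$ cost of the enlarged exploration, for $\tildeBigO{n^{1/3}/\epsilon}$ total. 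The main obstacle, and the crux of the argument, is exactly this calibration: exploring to radius only $h$ fails for ``medium'' eccentricities in the range $(h,\,h(1+\epsilon)/\epsilon)$, where the source-based estimate loses an $h$-additive term that is \emph{not} yet an $\epsilon$-fraction of $ecc(v)$; one must therefore pay the $1/\epsilon$ blow-up in the exploration radius so that whenever the source estimate is actually used, the eccentricity is large enough to absorb the $h$ error.
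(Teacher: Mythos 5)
Your proposal is correct and follows essentially the same route as the paper's own proof: exact $n^{2/3}$-RSSP via \fullref{thr:exactRMSSP}, a local exploration to radius $\tildeTheta{n^{1/3}/\epsilon}$ (the paper uses exactly $(1+\frac{1}{\epsilon})h$, matching your calibration), and the estimator taking the maximum of the farthest sampled source and the farthest explored node, with the same two-case analysis via Property~\ref{itm:skeleton:sp}. Your closing remark about why radius $h$ alone fails for medium eccentricities is precisely the reason for the paper's choice of exploration radius, so nothing is missing.
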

\begin{proof}
    We run the algorithm from \fullref{thr:exactRMSSP} with $x=2/3$ to obtain $n^{2/3}$-RSSP from a set of nodes $M$ selected independently randomly with probability $n^{-1/3}$ from $V$, such that every $v \in V$ knows its distance to every node in $M$. This takes $\tilde O(n^{1/3})$ rounds. Then, each node $v$ learns its $\left(1+\frac{1}{\epsilon}\right)h$-hop neighborhood, where $h=\tildeTheta{n^{1/3}}$, in $\tilde O(n^{1/3}/\epsilon)$ rounds. The approximate eccentricity $\widetilde{ecc}(v)$ of node $v$ is the maximum between the distance to the farthest node in $M$ from $v$, and distance to the farthest node in the $\left(1+\frac{1}{\epsilon}\right)h$-hop neighborhood of $v$.
    
    We now prove the approximation factors. Notice, that there exists a node $w \in V$ such that $\widetilde{ecc}(v)=d(v, w)\leq ecc(v)$, so we do not overestimate the eccentricity. If $ecc(v)\leq \left(1+\frac{1}{\epsilon}\right) h$, the farthest node is in the explored neighborhood and the returned $\widetilde{ecc}(v)$ is exact in this case. Let $u$ be farthest node from $v$. By \cref{claim:skeletonOnSampled}~and~Property~\ref{itm:skeleton:sp}
    
    there exists a node $w\in M$ on some shortest path between $v$ and $u$ within hop-distance at most $h$ from $u$ \whp By definition, $\widetilde{ecc}(v)
    \geq d(v, w)
    \geq ecc(v)-h
    \geq ecc(v)-\frac{ecc(v)}{1+\frac{1}{\epsilon}}
    =\left(1-\frac{1}{\frac{\epsilon+1}{\epsilon}}\right)ecc(v)
    =\left(1-\frac{\epsilon}{1+\epsilon}\right)ecc(v)
    =\left(\frac{1+\epsilon - \epsilon}{1+\epsilon}\right)ecc(v)
    =\frac{1}{1+\epsilon}ecc(v)$.
\end{proof}

\begin{restatable}[$3$-Approx. Weighted Eccentricities]{lemma}{weightedEccentricities}\label{thr:weightedEccentricitiesApproximation}
Let $G=(V, E)$ be a weighted graph. There is an algorithm in the \hybrid model that computes a $3$-approximation of weighted eccentricities in \tildeBigO{n^{1/3}} rounds, \whp 
\end{restatable}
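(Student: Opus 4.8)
The plan is to reduce weighted eccentricity estimation to the exact $n^{2/3}$-RSSP primitive together with a single scalar broadcast per random source. First I would sample $M$ with probability $n^{-1/3}$, build the skeleton $S_{2/3}$ via \Cref{claim:skeletonOnSampled}, and run the algorithm of \Cref{thr:exactRMSSP} with $x=2/3$ so that, \whp, every $v\in V$ knows the \emph{exact} distance $d(v,m)$ to every $m\in M$; this costs $\tildeBigO{n^{1/3}}$ rounds. The crucial quantity each source will publish is $\widehat{E}(m):=\max_{u\in N_G^{h}(m)}d(m,u)$, the largest \emph{true} distance from $m$ to a node in its $h$-hop ball, where $h=\tildeTheta{n^{1/3}}$. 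To compute it, every $m\in M$ learns its $h$-hop neighborhood and aggregates the values $d(u,m)$, which the nodes $u$ already hold from the RSSP step, over that ball; I would then disseminate the single value $\widehat{E}(m)$ from each of the $\tildeBigO{n^{2/3}}$ sources globally with \Cref{thr:tokenDissemination}, again in $\tildeBigO{n^{1/3}}$ rounds. Finally each node outputs
\begin{align*}
\widetilde{ecc}(v)=\max_{m\in M}\max\bigl\{d(v,m),\ \widehat{E}(m)-d(v,m)\bigr\}.
\end{align*}

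For the no-overestimation direction I would argue purely by the triangle inequality: since $N_G^h(m)\subseteq V$ gives $\widehat{E}(m)\le ecc(m)$, and $ecc(m)=d(m,w)\le d(v,m)+d(v,w)\le d(v,m)+ecc(v)$ for the node $w$ realizing $ecc(m)$, both arguments of the inner maximum are at most $ecc(v)$, so $\widetilde{ecc}(v)\le ecc(v)$. For the $3$-approximation direction, let $u$ be a farthest node from $v$, so $ecc(v)=d(v,u)$. The key structural fact I would establish is that \whp every node $u$ has a skeleton node within $h$ hops: if $N_G^h(u)$ contained no node of $M$ then, by Property~\ref{itm:skeleton:sp}, no node could be at hop-distance at least $h$ from $u$, forcing $N_G^h(u)=V$, an event of probability at most $(1-n^{-1/3})^{n}$, negligible after a union bound over all $u$. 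Fixing such an $m$ with $u\in N_G^h(m)$ gives $\widehat{E}(m)\ge d(m,u)$, and a short case split finishes: if $d(v,m)\ge ecc(v)/3$ the first term already suffices, while otherwise $d(m,u)\ge d(v,u)-d(v,m)>2\,ecc(v)/3$, whence $\widehat{E}(m)-d(v,m)>2\,ecc(v)/3-ecc(v)/3=ecc(v)/3$.

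The step I expect to be the main obstacle is making $\widehat{E}(m)$ an \emph{exact} distance rather than an $h$-hop-limited one. Using the locally available $d_G^h(m,u)$ would be tempting, but in weighted graphs $d_G^h$ can strictly exceed the true distance, which would make $\widehat{E}(m)-d(v,m)$ overshoot $ecc(v)$ and destroy the one-sided guarantee in the definition of an eccentricity approximation; this is precisely why local exploration alone is insufficient and why I route the exact RSSP distances $d(u,m)$ back to $m$ before maximizing. The remaining work is routine: verifying that the local $h$-hop aggregation and the token dissemination each run within $\tildeBigO{n^{1/3}}$ rounds \whp, and collecting the steps to conclude the overall $\tildeBigO{n^{1/3}}$ bound.
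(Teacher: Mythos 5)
Your proposal is correct and follows essentially the same route as the paper's proof: exact $n^{2/3}$-RSSP from the sampled set $M$, each skeleton node aggregating the exact (RSSP-provided) distances over its $h$-hop ball, token dissemination of one value per skeleton node, and a local combination step -- and you even flag the same subtlety the paper relies on, namely that the per-source ball maximum must use exact rather than $h$-hop-limited distances. The only difference is cosmetic: the paper outputs $\widetilde{ecc}(u)=\frac{1}{3}\max_{v\in M}\left\{d(u,v)+ecc_h(v)\right\}$ while you output $\max_{m\in M}\max\left\{d(v,m),\,\widehat{E}(m)-d(v,m)\right\}$, and both estimators satisfy the required sandwich $ecc(v)/3\leq\widetilde{ecc}(v)\leq ecc(v)$ by near-identical triangle-inequality arguments.
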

\begin{proof}
    We run the algorithm from \fullref{thr:exactRMSSP} with $x=2/3$ to obtain $n^{2/3}$-RSSP from a set of nodes $M$ selected independently randomly with probability $n^{-1/3}$ from $V$, such that every $u \in V$ knows its distance to every node in $M$. Then, each $v\in M$ learns its $h$-hop neighborhood, where $h=\tildeTheta{n^{1/3}}$, and we use token dissemination, \fullref{thr:tokenDissemination}, to ensure that all the nodes in the graph know $ecc_h(v)=\max_{w\in N_G^h(v)}\set{d(v, w)}$ for each $v \in M$. The approximate eccentricity $\widetilde{ecc}(u)$ of node $u\in V$, is the value of \cref{eq:eccentricities}.
    \begin{align}\label{eq:eccentricities}
        \widetilde{ecc}(u)=\frac{1}{3}\max_{v\in M}\set{d(u, v)+ecc_h(v)}
    \end{align}
     
    To show the approximation factor for some $u\in V$, let $v'= \arg\max_{v\in M}\set{d(u, v)+ecc_h(v)}$ and $v''\in N_G^h(v)$ be the farthest node from $v'$ in its $h$-hop neighborhood, that is $ecc_h(v')=d(v', v'')$. Further, let $p\in V$ be the farthest node from $u\in V$, in other words, $ecc(u) = d(u, p)$, and $p'\in M$ be some marked node in the $h$-hop neighborhood of $p$, which exists \whp by Chernoff bound. Finally, let $p''\in V$ be the farthest node from $p'$ in its $h$-hop neighborhood, that is, $ecc_h(p')=d(p', p'')$. Notice that 
    $\widetilde{ecc}(u)
    =\frac{d(u, v')+d(v', v'')}{3}
    \geq \frac{d(u, p')+d(p', p'')}{3}
    \geq \frac{d(u, p')+d(p', p)}{3}
    \geq \frac{d(u, p)}{3}
    =\frac{ecc(u)}{3}$, where the first inequality is due to the choice of $v'$, the second inequality follows from the choice of $p''$ and the third is implied by triangle inequality. Also, we do not overestimate the eccentricity, since $\widetilde{ecc}(u)
    =   \frac{d(u, v')+d(v', v'')}{3}
    \leq \frac{d(u, v') + d(v', u) + d(u, v'')}{3}
    \leq ecc(u)$, where the first inequality is due to triangle inequality and the second is due to the graph being undirected and due to the definition of eccentricity.
     
    By \fullref{thr:exactRMSSP} the round complexity of the $n^{2/3}$-RSSP is $\tildeBigO{n^{1/3}}$ \whp and token dissemination with a total of $k=\size{M}=\tildeBigO{n^{2/3}}$ tokens \whp (by Chernoff bound) and $\ell=1$ tokens per node takes $\tildeBigO{n^{1/3}}$ rounds \whp due to \fullref{thr:tokenDissemination}.
\end{proof}

Combining \fullref{thr:unweightedEccentricitiesApproximation,thr:weightedEccentricitiesApproximation} gives \fullref{thr:eccentricitiesApproximation}.

\subsection{Diameter Approximations}\label{app:diameter}
\unweigtedDiameterApproximation*
\begin{proof}
    To achieve this result we first use the algorithm from \fullref{thr:unweightedEccentricitiesApproximation} to compute $(1+\epsilon)$-approximate eccentricities, for each $v \in V$ a value $\widetilde{ecc}(v)$, in $\tildeBigO{n^{1/3}/\epsilon}$ rounds, and then use \fullref{thr:aggregateAndBroadcast} to compute the maximum between the approximate eccentricities $\widetilde{ecc}(v)$ in an additional number of $\tilde O(1)$ rounds.
    Notice that the maximum of $\left(1+\epsilon\right)$-approximate eccentricities is indeed a good approximation of the diameter, as 
    \begin{align*}
        \widetilde{D}=
        \max\Set{\widetilde{ecc}(v)}\geq 
        \max\Set{\frac{ecc(v)}{1+\epsilon}}=
        \frac{\max\Set{ecc(v)}}{1+\epsilon}=
        \frac{D}{1+\epsilon}.
    \end{align*}

    On the other hand, since the approximate eccentricities $\widetilde{ecc}(v)$ do not overestimate the real eccentricities, their maximum does not over overestimate the true diameter. Thus, $\widetilde{D}
        =\max\Set{\widetilde{ecc}(v)}
        \leq\max\Set{{ecc}(v)}=D$, which completes the proof.
\end{proof}

Our oracle-based techniques allow us to solve weighted SSSP fast. After we do it, the following well-known simple reduction allows us to compute $2$ approximate weighted diameter.
\begin{restatable}[Diameter From SSSP]{claim}{DfromSSSP}\label{thr:ssspToDiameter}
Given a graph $G = (V, E)$, a value $\alpha>0$ and an algorithm which computes an $\alpha$ approximation of weighted SSSP in $T$ rounds of the \hybrid model, there is an algorithm which computes a $2\alpha$-approximation of the weighted diameter in $T+\tildeBigO{1}$ rounds of the \hybrid model.
\end{restatable}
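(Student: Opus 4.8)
The plan is to run the given SSSP algorithm from a single, globally agreed source $s$ and to use the \emph{approximate eccentricity} of $s$ as a proxy for the diameter, correcting for the overestimation that is inherent in the approximate distances. First I would fix $s$ to be the node of minimum identifier, which all nodes can learn by invoking \fullref{thr:aggregateAndBroadcast} with the $\min$ function in $\tildeBigO{1}$ rounds. Then I would run the given $\alpha$-approximate weighted SSSP algorithm from $s$, so that after $T$ rounds every node $v$ holds a value $\tilde{d}(s,v)$ with $d(s,v)\le \tilde{d}(s,v)\le \alpha\cdot d(s,v)$. A second application of \fullref{thr:aggregateAndBroadcast}, this time with the $\max$ function, lets every node learn $\tilde{D}:=\max_{v\in V}\tilde{d}(s,v)$ in another $\tildeBigO{1}$ rounds. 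The algorithm then outputs $\hat{D}:=\tilde{D}/\alpha$, which is computable locally since $\alpha$ is known to all nodes.

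The analysis combines two elementary bounds. On one hand, since $\tilde{d}(s,v)$ lies between $d(s,v)$ and $\alpha\,d(s,v)$ for every $v$, taking the maximum gives $ecc(s)\le \tilde{D}\le \alpha\cdot ecc(s)$, where $ecc(s)=\max_{v\in V} d(s,v)$. On the other hand, a single node's eccentricity approximates the diameter up to a factor of two: trivially $ecc(s)\le D$, while for any two nodes $u,v$ the triangle inequality yields $d(u,v)\le d(u,s)+d(s,v)\le 2\,ecc(s)$, hence $D\le 2\,ecc(s)$, i.e. $ecc(s)\ge D/2$. Combining these, $\hat{D}=\tilde{D}/\alpha\le ecc(s)\le D$ and $\hat{D}=\tilde{D}/\alpha\ge ecc(s)/\alpha\ge D/(2\alpha)$, so $D/(2\alpha)\le \hat{D}\le D$, which is exactly a $2\alpha$-approximation of the weighted diameter in the sense of the diameter-approximation definition. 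The round complexity is $T$ for the SSSP call plus $\tildeBigO{1}$ for the two aggregations, giving $T+\tildeBigO{1}$ rounds as claimed.

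I expect the only real subtlety to be the rescaling step. Because the approximate distances may \emph{overestimate}, the raw quantity $\tilde{D}$ could itself exceed $D$ and thus violate the one-sided requirement $\hat{D}\le D$ of the diameter-approximation definition; dividing by the known factor $\alpha$ is precisely what restores the non-overestimation guarantee, at the cost of only an extra factor of $\alpha$ on the lower side. In the special case $\alpha=1$ (exact SSSP) the division is vacuous and the argument reduces to the standard fact that a single source's eccentricity is a $2$-approximation of the diameter, which is exactly what is needed for \Cref{thr:weightedDiameter}.
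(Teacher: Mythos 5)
Your proposal is correct and is essentially the paper's own proof: the quantity you output, $\hat{D}=\frac{1}{\alpha}\max_{v}\tilde{d}(s,v)$, is exactly the paper's maximum over per-node proposals $\frac{1}{\alpha}\tilde{d}(v,s)$ computed via \fullref{thr:aggregateAndBroadcast}, and your eccentricity-based analysis is the same triangle-inequality argument (the paper picks the diameter endpoints directly, while you route it through $ecc(s)\geq D/2$, which is equivalent). The only cosmetic difference is your extra aggregation to agree on $s$, which the paper sidesteps by taking an arbitrary (implicitly agreed, e.g.\ minimum-identifier) source; this does not affect the $T+\tildeBigO{1}$ bound.
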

\begin{claimproof}
    We compute SSSP from an arbitrary node $s$. Then, each node $v$ proposes its candidate for the diameter $\frac{1}{\alpha}\tilde{d}(v, s)$, which it computes after SSSP. The approximation is the maximum of all the proposals $\tilde{D}=\max_{u\in V}{\frac{1}{\alpha}\tilde{d}(v, s)}$, and us computed using \fullref{thr:aggregateAndBroadcast}.
    
    Since for each $v\in V$, it holds that $\tilde{d}(v, s)\leq \alpha d(v, s)$ is a shortest path between $v$ and $s$, $D\geq \tilde{D}=\max_{u\in U}{\frac{1}{\alpha}d(v, s)}$. Let $u$ and $v$ be the endpoints of some path whose length is the diameter (i.e., $D=d(u, v)$), by the triangle inequality, $d(u, s)+d(s, v)\geq d(u, v)$. Thus, at least one of the terms is greater then $\frac{d(u, v)}{2}$, assume without loss of generality that $d(u, s)\geq \frac{d(u, v)}{2}$. Since $\tilde{d}(u, s)\geq d(u, s)$ and the proposal of $u$ is at least $\frac{1}{\alpha}\tilde{d}(u, s)\geq \frac{1}{2\alpha}d(u, s)$, it holds that $\tilde{D}\geq d(u, s)\geq \frac{d(u, v)}{2}=\frac{D}{2\alpha}$.
    
    The round complexity for computing the SSSP approximation is $T$, and the round complexity for the aggregation operation is \tildeBigO{1} by \fullref{thr:aggregateAndBroadcast}.
\end{claimproof}

\weightedDiameter*
\begin{proof}
    The claim follows immediately from \fullref{theorem:exactSSSP,thr:ssspToDiameter} with $\alpha=1$.
\end{proof}

\section{A Lower Bound for  \texorpdfstring{$n^x$}{n to the power of x}-RSSP}\label{sec:lowerBound}

We use the following \fullref{thr:lowerBoundToLearnRandomVariable} to obtain our lower bound. 
\begin{claim}[Lower Bound Random Variable]\cite[Lemma 4.12]{AHKSS20}\label{thr:lowerBoundToLearnRandomVariable}
Let $G = \left(V, E\right)$ be an $n$-node graph that consists of a subgraph $G' = \left(V', E'\right)$ and a path of length $L$ (edges) from some node $a \in V \setminus V'$ to $b \in V'$ and that except for node $b$ is vertex-disjoint from $V'$. Assume further that the nodes in $V$ are collectively given the state of some random variable $X$ and that
node $a$ needs to learn the state of $X$. Every randomized algorithm that solves this problem in the \hybrid network model requires $\bigOmega{\min\set{L,\frac{H(X)}{L\cdot\log^2n}}}$ rounds, where $H(X)$ denotes the Shannon entropy of $X$.
\end{claim}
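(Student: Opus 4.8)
The plan is to prove this as an information-theoretic cut bound: I would reformulate the requirement ``$a$ learns $X$'' as ``$\Omega(H(X))$ bits of information about $X$ must reach $a$,'' and then upper bound how many such bits can arrive within $T$ rounds, deriving a contradiction unless $T$ is large. The two terms inside the minimum correspond to the two qualitatively different ways information can travel to $a$. Over local edges the bandwidth is unbounded but information advances only one hop per round, so nothing originating on the $G'$ side can reach $a$ in fewer than $L$ rounds; this is precisely why no lower bound better than $L$ is possible, since once $T\geq L$ a node holding $X$ could simply flood it along the path and $a$ would learn it regardless of $H(X)$. Consequently it suffices to derive a contradiction in the regime $T<L$, where local flooding across the path is useless and all information about $X$ must instead travel over global edges. (I work with the case in which the entropy of $X$ is held on the $G'$ side, which is both the worst case and the way the lemma is applied, so $a$ and the path nodes start with no information about $X$.)

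In the regime $T<L$ I would bound the total number of bits about $X$ that can reach $a$ through global edges. The key is a causality/timing argument: write the path as $a=p_0,p_1,\dots,p_L=b$; a bit received via a global edge at some $p_j$ can influence $a$ within the budget only if it is relayed, and a relay that stays local must still traverse the intervening path hops, while any further global hop is again charged to the budget of its receiver. Tracing every bit that ends up at $a$ back to its last global transmission, one sees that all such bits are paid for by global receptions at the nodes $p_0,\dots,p_L$, because any non-path, non-$a$ node is locally disconnected from $a$ and so cannot be the final relay. Since each of these at most $L+1$ nodes receives only $\gamma=O(\log n)$ messages of $O(\log n)$ bits per round, over $T$ rounds the information delivered to $a$ is at most $O(L\cdot T\cdot \log^2 n)$ bits. (For $T<L$ the sharper count $O(T^2\log^2 n)$ also holds, but the cruder $O(LT\log^2 n)$ bound is the one yielding the stated clean expression.) Comparing this against the required $\Omega(H(X))$ gives $T=\Omega(H(X)/(L\log^2 n))$, and combining the two regimes yields $\Omega(\min\{L,\, H(X)/(L\log^2 n)\})$.

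To make the counting rigorous I would work with the \emph{view} of node $a$ -- its initial state, its private coins, and the entire transcript of messages it receives -- and argue that this view is a randomized function only of $X$, the coins, and the communication that has entered $a$'s causal past. A correct algorithm forces $a$'s view to determine $X$ with high probability, so Fano's inequality gives $I(\mathrm{view}_a;X)=\Omega(H(X))$, while the data-processing inequality bounds this mutual information by the number of bits that crossed into $a$'s causal region, namely $O(LT\log^2 n)$ when $T<L$, with the initial states of $p_0,\dots,p_T$ contributing nothing. The main obstacle, and the step needing the most care, is exactly this causal accounting: I must pin down which global receptions can still influence $a$ by round $T$ -- respecting simultaneously the one-hop-per-round speed of local edges and the per-node per-round cap on global edges -- and confirm that multi-hop global relays and shared randomness cannot leak more than $O(LT\log^2 n)$ bits about $X$ to $a$. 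The adversarial delivery of global messages (only $\gamma$ survive per node per round) only weakens the adversary's ability to route information and so is absorbed harmlessly into the same budget.
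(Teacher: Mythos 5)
Your proposal is correct and takes essentially the same approach as the source proof (the paper itself does not reprove this claim but imports it as Lemma 4.12 of \cite{AHKSS20}): there, too, the argument splits on $T<L$ and observes that any information about $X$ reaching $a$ must make its final global hop into one of the at most $L+1$ path nodes, each receiving only $\gamma\cdot O(\log n)=O(\log^2 n)$ global bits per round, yielding an $O(L\cdot T\cdot\log^2 n)$ budget against the required $\Omega(H(X))$ bits. Your Fano/data-processing formalization of ``$a$ learns $X$,'' the remark that the sharper $O(T^2\log^2 n)$ count also holds, and the causal accounting of last global receptions are all consistent with, not departures from, that proof.
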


\lowerBound*
\begin{proof}
    We use the same base construction as in~\cite[Theorem 2.5]{AHKSS20} and in~\cite[Theorem 1.5]{kuhn2020computing} with only slight modifications. However, for the sake of self-containment, we recall here the entire construction. 
    
    We construct an unweighted graph $G=\left(V, E\right)$ which contains a path, with endpoints at nodes $a$ and $c$, and two \emph{fooling} sets of nodes $S_b,S_c$, both of size $y=\floor*{\frac{n}{4}}$. Every node in $S_c$ has an edge to $c$. At hop-distance $L=\floor*{\frac{\sqrt{p\cdot n}}{\log n}}\geq 1$ from $a$ there is a node $b$. Each node in $S_b$ has an edge to $b$. The segment of path between $b$ and $c$ contains all other nodes, thus the hop distance between $b$ and $c$ is $z=n - 2y - L - 1$, and the hop distance between $a$ and $c$ is $x=z+L=n - 2y - 1$. See Figure~\ref{fig:lb}.

    \begin{figure}
        \begin{tikzpicture}
            \node[main node] (a) {$a$};
            \node (dotsab) [right=of a] {$\cdots$};
            \node[main node] (b) [right=of dotsab] {$b$};
            \node (dotsbc) [right=3cm of b] {$\cdots$};
            \node[main node] (c) [right=3cm of dotsbc] {$c$};
            \node[main node] (b1) [below left=of b] {$u_{b_1}$};
            \node (dotsb) [below=of b] {$\cdots$};
            \node[main node] (by) [below right=of b] {$u_{b_y}$};
            \node[main node] (c1) [below left=of c] {$u_{c_1}$};
            \node (dotsc) [below=of c] {$\cdots$};
            \node[main node] (cy) [below right=of c] {$u_{c_y}$};
            
            \path[draw]   
                (a) edge node {} (dotsab)
                (dotsab) edge node {} (b)
                (b) edge node {} (dotsbc)
                (dotsbc) edge node {} (c)
                (b1) edge node {} (b)
                (by) edge node {} (b)
                (c1) edge node {} (c)
                (cy) edge node {} (c)
            ;
        \end{tikzpicture}
        \caption{Lower bound construction.}
        \label{fig:lb}
    \end{figure}
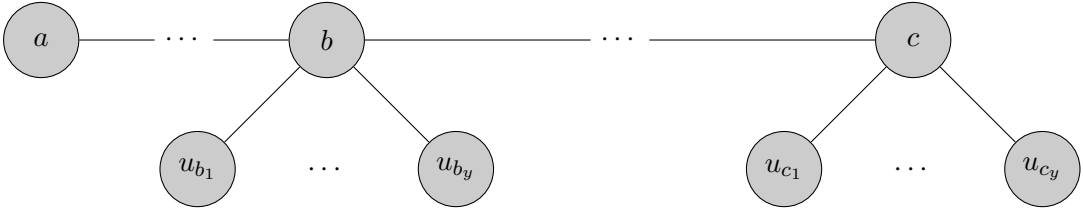
    
    We follow the scheme of \cite[Theorem 2.5]{AHKSS20} and reserve $2y$ identifiers from the set $\interval{2y}$ for nodes in $S=S_b\cup S_c$ and assign additional $n-2y$ identifiers to other nodes $V\setminus S$ in a globally known manner. The nodes from $S=S_b\cup S_c$ are assigned identifiers randomly from the reserved set $\interval{2y}$, as follows. Each identifier from $\interval{y}$ (the first half of identifiers) is assigned with probability $1/2$ to the next currently unlabeled node in $S_b$ and with probability $1/2$ to the next currently unlabeled node in $S_c$. The other $y$ reserved identifiers are arbitrarily assigned to nodes left unlabeled in $S$.
   
    Let $S'$ be the set of nodes which are sampled to be sources. Consider the set $S''=\interval{y}\cap S'$ of sources with identifiers in $\interval{y}$. For $1\leq i\leq y$ we define $X_i$ to be an indicator of whether identifier $i$ is one of the sampled source identifiers. Notice that in expectation there are $E\left[\sum_{i=1}^{y}X_i\right]=p\cdot y=p\floor*{\frac{n}{4}}$ sampled nodes with identifiers in $\interval{y}$. Moreover, for a large enough $n$, since $p=\bigOmega{\frac{\log n}{n}}$, by Chernoff Bounds, there exists a constant $\beta > 0$ such that \whp  $\sum_{i=1}^{y}X_i\geq \beta p\cdot n$, which means that there are at least $\beta\cdot p \cdot n=\bigOmega{p\cdot n}$ source nodes in $S''$. These may be in $S_b$ or $S_c$. 
    
    By definition, an $\alpha$-approximation of distances from sources $S'$ is a function $\widetilde{hop}\colon \left(V, S'\right) \mapsto \Z$ which satisfies for every $v\in V,u\in S'\colon hop(v, u)\leq \widetilde{hop}(v, u)\leq \alpha hop(v, u)$, where $hop(u,v)$ denotes the hop distance between $u$ and $v$. In particular, the node $a\in V$ should compute the values $\widetilde{hop}\left(a, \cdot\right)$, which satisfy for every $u\in S'':hop(a, u)\leq \widetilde{hop}(a, u)\leq \alpha hop(a, u)$. If $u\in S_c$ then $hop(a, u)=x+1$, otherwise, if $u\in S_b$ then $hop(a, u)=L+1$. Thus, if $a$ does not know whether $u\in S_c$ or $u\in S_b$, it has to be on the safe side and return $\widetilde{\delta}\left(a,u\right)=x+1\geq L+1$. The approximation factor for $u\in S_b$ is     
    \[
    \alpha
    =\frac{x + 1}{L+1}
    =\frac{n - 2y}{L+1}
    =\frac{n - 2\floor*{\frac{n}{4}}}{\frac{\sqrt{pn}}{\log n}}
    \geq\frac{1}{2}\sqrt{\frac{n}{p}}\log n.
    \]
    
    Given $S'$, let $Y\in\Set{b, c}^{\size{S''}}$ be a random variable indicating for each sampled node $S''$ whether it belongs to $S_b$ or $S_c$. As in the proof of \cite[Theorem 2.5]{AHKSS20}, to get any better approximation, node $a$ has to learn for each node $s\in S''$ whether it is in $S_b$ or $S_c$.  For this $a$ has to learn the random variable $Y$ whose entropy is \whp \[
        H\left(Y\right)
        =\sum_{q\in \set{b, c}^{\size{S''}}}{{2^{-q}}\log\left(2^q\right)}
        =\bigOmega{\size{S''}}
        =\bigOmega{p\cdot n}.
    \] 
    By \fullref{thr:lowerBoundToLearnRandomVariable}, \bigOmega{\frac{\sqrt{p\cdot n}}{\log n}} rounds are required for $a$ to learn the variable $Y$ \whp 
\end{proof}

\end{document}